\documentclass[a4paper,oneside,fleqn,11pt]{article}
\usepackage{amsmath,amssymb,amsfonts,amsthm,type1cm,bm,xcolor,mathrsfs}
\usepackage{mathtools}
\usepackage{graphicx,psfrag,epsf}
\usepackage{rotating}
\usepackage{authblk}
\usepackage{yfonts}
\usepackage{pdflscape}
\usepackage{setspace}
\usepackage{natbib}
\usepackage{float} 
\usepackage{caption}
\usepackage{subcaption}
\usepackage{booktabs}
\usepackage{siunitx}
\usepackage{threeparttable}
\RequirePackage[colorlinks,citecolor=blue,urlcolor=pink]{hyperref}
\usepackage{bigstrut}
\usepackage{comment}
\mathtoolsset{showonlyrefs=true}

\DeclareTextFontCommand{\texttt}{\ttfamily\upshape}

\DeclareMathOperator\tr{tr}

\DeclareMathOperator\E{\mathbb{E}}
\DeclareMathOperator\Var{Var}
\DeclareMathOperator\Corr{Corr}
\DeclareMathOperator\Pro{\mathbb{P}}

\DeclareMathOperator\sgn{sgn}

\DeclareMathAlphabet\mathbfcal{OMS}{cmsy}{b}{n}
\def\b0{\mathbf{0}}
\def\b1{\mathbf{1}}

\def\bA{\mathbf{A}}
\def\bB{\mathbf{B}}
\def\bC{\mathbf{C}}
\def\bD{\mathbf{D}}

\def\bG{\mathbf{G}}

\def\bI{\mathbf{I}}
\def\bJ{\mathbf{J}}
\def\bK{\mathbf{K}}

\def\bM{\mathbf{M}}

\def\bQ{\mathbf{Q}}
\def\bR{\mathbf{R}}

\def\bV{\mathbf{V}}
\def\bW{\mathbf{W}}

\def\ba{\mathbf{a}}
\def\bb{\mathbf{b}}

\def\bd{\mathbf{d}}
\def\be{\mathbf{e}}

\def\bg{\mathbf{g}}

\def\bm{\mathbf{m}}

\def\br{\mathbf{r}}
\def\bs{\mathbf{s}}

\def\bu{\mathbf{u}}
\def\bv{\mathbf{v}}

\def\bx{\mathbf{x}}
\def\by{\mathbf{y}}
\def\bz{\mathbf{z}}

\def\cF{\mathcal{F}}
\def\cK{\mathcal{K}}

\def\bbR{\mathbb{R}}
\def\bbZ{\mathbb{Z}}
\def\bbN{\mathbb{N}}

\def\bGamma{\boldsymbol{\Gamma}}

\def\bbbeta{\boldsymbol{\eta}}

\def\btheta{\boldsymbol{\theta}}

\def\bOmega{\boldsymbol{\Omega}}

\def\bxi{\boldsymbol{\xi}}
\def\bSigma{\boldsymbol{\Sigma}}
\def\bGamma{\boldsymbol{\Gamma}}

\def\bzero{\mathbf{0}}
\def\bone{\mathbf{1}}
\def\what{\hat}
\def\wtilde{\tilde}

\newtheorem{thm}{Theorem}
\newtheorem{lem}{Lemma}
\newtheorem{cor}{Corollary}
\newtheorem{ass}{Assumption}
\newtheorem{rem}{Remark}
\newtheorem{prop}{Proposition}

\newcommand{\CD}{\stackrel{d}{\longrightarrow}}
\newcommand{\CP}{\stackrel{p}{\longrightarrow}}

\newcommand{\CDs}{\stackrel{d^*}{\longrightarrow}}
\newcommand{\CPs}{\stackrel{p^*}{\longrightarrow}}

\allowdisplaybreaks

\makeatletter
\def\section{\@startsection {section}{1}{\z@}{-3.5ex plus -1ex minus-.2ex}{2.3ex plus .2ex}{\large\bf}}
\makeatother

\makeatletter
\def\subsection{\@startsection {subsection}{1}{\z@}{-3.5ex plus -1ex minus-.2ex}{2.3ex plus .2ex}{\normalsize\bf}}
\makeatother

\title{Two-Point Dependent Wild Bootstrap for Weakly Dependent Estimating Equations}
\author[$\,\!$]{\textsc{Mikihito Nishi}$^*$}
\author[$\,\!$]{\textsc{Takashi Yamagata}$^\dagger$}
\affil[$*$]{\textit{Graduate School of Economics, University of Tokyo}}
\affil[$\dagger$]{\textit{Department of Economics and Related Studies, University of York}}
\affil[$\dagger$]{\textit{Graduate School of Economics and Management, Tohoku University}}

\date{\today}

\begin{document}
\maketitle

\begin{abstract}
\noindent 
This paper develops a general two-point dependent wild bootstrap (DWB) for weakly dependent estimating equations.
Its key feature is that the two-point marginal distribution and the latent serial dependence specification can be chosen separately.
The construction combines a normalized two-point distribution with a stationary latent Gaussian process via a Gaussian copula transformation, includes dependent Rademacher and Mammen multipliers, and nests the classical iid two-point wild bootstrap as the serially independent case.
The induced multiplier autocovariances determine the lag weights in a corresponding heteroskedasticity- and autocorrelation-consistent (HAC) covariance estimator, which coincides exactly with the conditional covariance of the bootstrap estimating-equation sum.
We establish first-order bootstrap validity for asymptotically linear estimators by showing that the matched-HAC estimator consistently estimates the long-run covariance and that the bootstrap estimating-equation sum converges conditionally to the same Gaussian limit as its original-sample counterpart, yielding valid HAC-studentized $z$-tests and the corresponding Wald and Lagrange multiplier tests.
Monte Carlo experiments in nonlinear generalized method of moments (GMM) and linear regression show that Rademacher DWB generally provides more accurate finite-sample size control for $z$-tests than the Mammen and Gaussian DWB.
A GMM application to a nonlinear short-rate mean-reversion model illustrates the practical relevance of the proposed two-point DWB. 
\end{abstract}

\noindent\textbf{Keywords:} dependent multipliers; two-point wild bootstrap; Rademacher multipliers; Mammen multipliers; dependent wild bootstrap; HAC covariance estimation; estimating equations. \\
\textbf{JEL codes:} C12, C14, C22.

\section{Introduction}

Wild bootstrap methods provide a convenient approach to robust inference under heteroskedasticity when the disturbance distribution is otherwise left unspecified \citep{Wu1986,Liu1988,Mammen1993}.
Bounded two-point auxiliary distributions have played an important role in this literature.
Mammen's two-point distribution reproduces the first three moments relevant for classical higher-order arguments in heteroskedastic regression, while \citet{DavidsonMonticiniPeel2007} study a general class of mean-zero, unit-variance two-point distributions and the trade-off between their higher moments.
The symmetric Rademacher distribution provides the particularly simple $\{-1,1\}$ case and has attractive finite-sample properties in conventional wild bootstrap testing \citep{davidsonflachaire2008}.
Thus, even after imposing mean zero and unit variance, the choice of multiplier distribution leaves scope to vary higher-order marginal features such as skewness and kurtosis.

For dependent data, independent wild bootstrap weights do not reproduce the serial covariance relevant for inference.
The dependent wild bootstrap (DWB) addresses this problem by introducing serial dependence into the auxiliary variables.
A particularly convenient way of generating such dependence is through Gaussian multipliers, but this also fixes the marginal distribution of the multiplier: standardized Gaussian multipliers are unbounded and have fixed higher-order moments.
This removes the freedom to retain bounded two-point marginal distributions such as the Rademacher and Mammen laws.
This paper addresses this problem by showing that the two-point marginal law and the latent serial dependence specification can be chosen separately. 

We develop a two-point DWB using a latent stationary Gaussian process and a Gaussian copula threshold transformation.
For a normalized two-point distribution with distribution function $F$, we define
\begin{align}
    \xi_t
    =
    F^{-1}\{\Phi(Z_t)\},
    \label{eq:intro_copula_transform}
\end{align}
where $(Z_t)$ is a stationary Gaussian process.
The probability mechanism underlying Gaussian thresholding is classical \citep{EmrichPiedmonte1991}; we use it here to construct a dependent multiplier process while preserving exactly the chosen two-point marginal law.
The marginal distribution of $\xi_t$ is exactly $F$, while its serial dependence is induced by the correlation structure of the latent Gaussian process through the threshold transformation. 
Thus the chosen two-point marginal law can be preserved while the latent dependence specification is varied. 
The same device accommodates Rademacher, Mammen, and other normalized bounded two-point multipliers, within the same dependence construction. 
When the latent Gaussian process is serially independent, the construction reduces exactly to the classical iid two-point wild bootstrap, so the conventional iid Rademacher and Mammen wild bootstraps are nested as special cases.

The construction is related to, but distinct from, existing DWB and dependent multiplier procedures. 
The DWB of \citet{Shao2010} introduces serial dependence into the auxiliary variables, and related dependent multiplier procedures have subsequently been developed for other statistics and empirical processes \citep{LeuchtNeumann2013,doukhannemann2015,buecherkojadinovic2016}. 
Recent work by \citet{HounyoLin2026} develops multiway wild-cluster bootstrap procedures for linear regression with serially correlated common time effects, including a dependent Rademacher construction tailored to the time dimension.
Their objective is to reproduce the dependence induced by multiway clustering and serially correlated time effects, whereas our construction provides a general device for weakly dependent estimating equations that accommodates arbitrary normalized two-point laws and allows the marginal law and serial dependence profile to be specified separately.
To our knowledge, this is the first general two-point DWB that accommodates an arbitrary normalized two-point marginal law while allowing its serial dependence profile to be specified separately.

The construction also has a natural connection to heteroskedasticity-and-autocorrelation-consistent (HAC) covariance estimation.
The latent Gaussian correlation determines the covariance of the transformed two-point multipliers, which in turn determines the corresponding HAC lag weights.
Using these multiplier autocovariances as lag weights yields a matched-HAC estimator whose finite-sample value coincides exactly with the conditional covariance of the bootstrap score sum. For consistency, we separate the standard population-score HAC approximation from the additional effects of estimating and recentering the score contributions, and show that the latter are asymptotically negligible. 
The general connection between DWB and HAC covariance estimation is already present in \citet{Shao2010} and is made explicit by \citet{davidsonmonticini2014}; in the present framework, however, the multiplier process, its autocovariance sequence, and the associated HAC lag weights arise jointly from the specified two-point marginal law and latent dependence structure.

We then establish first-order validity of the resulting two-point DWB in a general weakly dependent estimating-equation framework.
In particular, we show that the bootstrap estimating-equation sum converges conditionally to the same Gaussian limit as its original-sample counterpart, yielding valid bootstrap distributions for studentized statistics and the corresponding Wald and Lagrange multiplier statistics based on asymptotically linear estimators.
The proof requires an additional step beyond existing DWB results because the Gaussian copula dependence profile is not finite-range.
We construct a finite-dependent approximation that preserves the exact two-point marginal distribution, control the resulting approximation error, and establish the required conditional Gaussian limit using a blocking argument.
For linear regression, the validity results also extend to the null-imposed residual bootstrap under primitive conditions.

The Monte Carlo analysis considers nonlinear generalized method of moments (GMM) and linear regression under serial dependence, deterministic heteroskedasticity, and non-Gaussian innovations.
Across the designs considered, Rademacher DWB generally yields more accurate finite-sample size control for $z$-tests than Mammen DWB and Gaussian DWB.
Size distortions are also generally smaller for the restricted $z$-statistic $\wtilde z_{LM}$ than for the unrestricted $z$-statistic $\what z_W$, with particularly accurate size control obtained by combining $\wtilde z_{LM}$ with Rademacher DWB.
The simulations further show that these differences across bootstrap procedures cannot be explained primarily by differences among their HAC kernels, since the corresponding asymptotic HAC procedures behave very similarly.
The corresponding two-step GMM results give qualitatively similar conclusions.

We illustrate the method using a nonlinear GMM specification based on the conditional-mean implication of the Cox--Ingersoll--Ross short-rate model.
The application shows that asymptotic and bootstrap approximations can lead to different conclusions at conventional significance levels.

The rest of the paper is organized as follows.
Section~\ref{sec:setup} introduces the estimating-equation framework and motivating examples.
Section~\ref{sec:constructing_multipliers} develops the dependent two-point multiplier construction, its matched-HAC representation, and its Rademacher and Mammen special cases. 
Section~\ref{sec:estimating_equations} develops the bootstrap inference procedures for asymptotically linear estimators, including unrestricted and restricted studentized procedures and the regression-specific residual bootstrap. 
Section~\ref{sec:asymptotics} establishes the asymptotic validity of the proposed procedures.
Section~\ref{sec:mc} presents the Monte Carlo experiments.
Section~\ref{sec:empirical} gives the empirical illustration.

\section{Setup and First-Order Framework}\label{sec:setup}

To fix ideas, suppose that an estimator $\what{\btheta}$ for a parameter of interest $\btheta_0\in\Theta\subset\bbR^k$ is asymptotically linear:
\begin{align}
    \sqrt{T}(\what{\btheta}-\btheta_0)
    =
    \bB_0\frac{1}{\sqrt{T}}\sum_{t=1}^T\bg_t(\btheta_0)+o_p(1),
    \label{eq:asym_linear_theta}
\end{align}
where $\bg_t(\btheta)$ is a $d$-dimensional score, moment, or estimating-equation contribution and $\bB_0$ is a $k\times d$ population linearization matrix.
The true value satisfies $\E\bg_t(\btheta_0)=\bzero$. 
We assume that $\{\bg_t(\btheta_0)\}$ is covariance-stationary, weakly serially dependent, and satisfies
\begin{align}
    \frac{1}{\sqrt{T}}\sum_{t=1}^T\bg_t(\btheta_0)
    \CD
    N(\bzero,\bOmega_0),
    \label{eq:score_clt}
\end{align}
where $\bOmega_0
    =
    \sum_{h=-\infty}^{\infty}\bGamma_0(h)>0$, $
    \bGamma_0(h)=\E\{\bg_t(\btheta_0)\bg_{t-h}(\btheta_0)'\}$. 
Then
\begin{align}
    \sqrt{T}(\what{\btheta}-\btheta_0)
    \CD
    N(\bzero,\bV_{\theta}),
    \label{eq:theta_limit_generic}
\end{align}
where $\bV_{\theta}
    :=
    \bB_0\bOmega_0\bB_0'$. 

Under serial dependence, feasible inference therefore requires estimation of the long-run covariance matrix $\bOmega_0$.

\subsection{HAC inference under serial dependence}\label{subsec:hac_inference}

Let $\what{\bg}_t=\bg_t(\what{\btheta})$ and $\overline{\what{\bg}}=T^{-1}\sum_{t=1}^T\what{\bg}_t$.
For $h\geq0$, define the sample autocovariance matrix
\begin{align}
    \what{\bGamma}(h)
    &=
    \frac{1}{T}\sum_{t=h+1}^T
    (\what{\bg}_t-\overline{\what{\bg}})
    (\what{\bg}_{t-h}-\overline{\what{\bg}})'.
    \label{eq:generic_sample_autocovariance}
\end{align}
Let $w_T(h)$ denote symmetric lag weights satisfying $w_T(0)=1$.
A generic HAC estimator of $\bOmega_0$ is
\begin{align}
    \what{\bOmega}_{\mathrm{HAC}}
    &=
    \what{\bGamma}(0)
    +
    \sum_{h=1}^{T-1}
    w_T(h)
    \left\{
    \what{\bGamma}(h)+\what{\bGamma}(h)'
    \right\}.
    \label{eq:generic_hac}
\end{align}
For example, the Bartlett-type \citet{NeweyWest1987} estimator uses $w_T(h)=(1-|h|/b_T)\mathbf{1}\{|h|\leq b_T\}$, where $b_T$ is the bandwidth or truncation parameter.
Let $\what{\bB}$ be a consistent sample analogue of $\bB_0$ and define
\begin{align}
    \what{\bV}_{\theta,\mathrm{HAC}}
    &=
    \what{\bB}\what{\bOmega}_{\mathrm{HAC}}\what{\bB}'.
    \label{eq:generic_hac_vtheta}
\end{align}
When $\what{\bOmega}_{\mathrm{HAC}}\CP\bOmega_0$ and $\what{\bB}\CP\bB_0$, replacing $\bV_{\theta}$ by $\what{\bV}_{\theta,\mathrm{HAC}}$ gives conventional first-order HAC inference for $\btheta_0$.
Such asymptotic inference can nevertheless be sensitive in finite samples to long-run variance estimation and to the normal or chi-square approximation.
This motivates a dependent wild bootstrap that reproduces the serial covariance relevant for HAC inference while retaining bounded two-point multipliers.
The construction below chooses the HAC lag weights from the autocovariances of the bounded dependent multipliers and reproduces exactly the same weights in the bootstrap.

\subsection{Examples}\label{subsec:examples}

\paragraph{Linear regression.}
Consider
\begin{align}
    y_t=\bx_t'\btheta_0+u_t,
    \qquad
    \E(\bx_tu_t)=\bzero,
\end{align}
where $\bx_t$ is a $k$-dimensional regressor and $\btheta_0$ is the parameter of interest.
Define
\begin{align}
    \bg_t(\btheta)=\bx_t(y_t-\bx_t'\btheta).
\end{align}
If $\bQ=\E(\bx_t\bx_t')$ is nonsingular, the ordinary least squares (OLS) estimator satisfies
\begin{align}
    \sqrt{T}(\what{\btheta}-\btheta_0)
    =
    \bQ^{-1}\frac{1}{\sqrt{T}}\sum_{t=1}^T\bx_tu_t+o_p(1).
\end{align}
Thus $\bB_0=\bQ^{-1}$ and $\bg_t(\btheta_0)=\bx_tu_t$.
The estimated contribution is $\what{\bg}_t=\bg_t(\what{\btheta})=\bx_t\what u_t$, $\what u_t=y_t-\bx_t'\what{\btheta}$.

\paragraph{Maximum likelihood and smooth $M$-estimation.}
Let $\ell_t(\btheta)$ be a log-likelihood, quasi-log-likelihood, or smooth criterion contribution and define the score
\begin{align}
    \bg_t(\btheta)=\partial \ell_t(\btheta)/\partial\btheta.
\end{align}
Suppose that $\what{\btheta}$ solves $T^{-1}\sum_{t=1}^T\bg_t(\what{\btheta})=\bzero$. 
Let $\bG_0=\E\left\{\partial\bg_t(\btheta_0) /\partial\btheta'\right\}$. 
If $\bG_0$ is nonsingular and the standard Taylor expansion is valid, then
\begin{align}
    \sqrt{T}(\what{\btheta}-\btheta_0)
    =
    -\bG_0^{-1}\frac{1}{\sqrt{T}}\sum_{t=1}^T\bg_t(\btheta_0)+o_p(1),
\end{align}
so $\bB_0=-\bG_0^{-1}$.
Under correct likelihood specification, $\bG_0$ is the negative information matrix; under quasi-likelihood or general smooth $M$-estimation, it is the corresponding sensitivity matrix.

\paragraph{GMM.}
Let $\btheta_0$ be defined by $\E\bg_t(\btheta_0)=\bzero$, where $\bg_t(\btheta)$ is a $d$-dimensional moment vector. 
Consider a GMM estimator that minimizes
\begin{align}
    \overline{\bg}_T(\btheta)'\bW_T\overline{\bg}_T(\btheta),
    \qquad
    \overline{\bg}_T(\btheta)=T^{-1}\sum_{t=1}^T\bg_t(\btheta),
\end{align}
where $\bW_T$ is a symmetric positive-definite weighting matrix satisfying $\bW_T\CP\bW_0$ for some symmetric positive-definite matrix $\bW_0$. 
Define 
    $\bD_0
    =
    \E\left\{
        \partial\bg_t(\btheta_0)/\partial\btheta'
    \right\}$. 
If $\bD_0'\bW_0\bD_0$ is nonsingular, then
\begin{align}
    \sqrt{T}(\what{\btheta}-\btheta_0)
    =
    -(\bD_0'\bW_0\bD_0)^{-1}\bD_0'\bW_0
    \frac{1}{\sqrt{T}}\sum_{t=1}^T\bg_t(\btheta_0)
    +o_p(1),
\end{align}
so 
    $\bB_0
    =
    -(\bD_0'\bW_0\bD_0)^{-1}\bD_0'\bW_0$.

\paragraph{Panel data regression.}
The framework also covers panel settings in which cross-sectional score or moment contributions are aggregated into a time-indexed estimating-equation sequence. Provided that the resulting sequence satisfies the first-order conditions above, the dependent two-point multiplier can be applied along the time dimension in the same way as for the preceding examples. 
Panel-specific theory and methods for large panels are developed by Dai, Matsushita, and Yamagata~(\citeyear{DaiMatsushitaYamagata2026}).

\section{Constructing Dependent Wild Bootstrap Multipliers}\label{sec:constructing_multipliers}

This section constructs the multiplier process used for score and moment bootstrap inference and links its covariance to HAC estimation.
The construction first specifies a bounded two-point marginal distribution and then introduces serial dependence through a latent Gaussian copula while preserving the exact marginal law.
The resulting covariance kernel defines the original-sample HAC estimator and, by construction, the conditional covariance of the bootstrap score sum.

\subsection{Bounded two-point wild bootstrap distributions}\label{sec:twopoint}

Fix $p\in(0,1)$. Let $\xi$ take two values $a_p<0<b_p$ with probabilities
\begin{align}
    \Pro(\xi=a_p)=p,
    \qquad
    \Pro(\xi=b_p)=1-p.
\end{align}
Imposing $\E\xi=0$ and $\E\xi^2=1$ gives
\begin{align}
    a_p=-\sqrt{\frac{1-p}{p}},
    \qquad
    b_p=\sqrt{\frac{p}{1-p}}.
    \label{eq:two_point_values}
\end{align}
The corresponding distribution function is denoted by $F_p$. 

The third and fourth moments are
\begin{align}
    \E\xi^3
    =
    \frac{2p-1}{\sqrt{p(1-p)}},
    \qquad
    \E\xi^4
    =
    \frac{(1-p)^2}{p}+\frac{p^2}{1-p}.
    \label{eq:two_point_moments}
\end{align}
Two choices are central.

\paragraph{Davidson--Flachaire/Rademacher two-point multiplier.}
For $p=1/2$, $a_p=-1$ and $b_p=1$. This is the Rademacher two-point
wild bootstrap multiplier. It is the symmetric bounded choice emphasized by
\citet{davidsonflachaire2008}. It satisfies $\E\xi=0$, $\E\xi^2=1$,
$\E\xi^3=0$, and $\E\xi^4=1$.

\paragraph{Mammen skewness-replicating two-point multiplier.}
Mammen's two-point distribution is obtained by choosing $p$ so that
$\E\xi^3=1$. This gives
\begin{align}
    p_M=\frac{\sqrt{5}+1}{2\sqrt{5}},
    \qquad
    a_M=\frac{1-\sqrt{5}}{2},
    \qquad
    b_M=\frac{1+\sqrt{5}}{2}.
    \label{eq:mammen_distribution}
\end{align}
It satisfies $\E\xi=0$, $\E\xi^2=\E\xi^3=1$, and $\E\xi^4=2$. Thus Mammen's distribution is the skewness-replicating bounded two-point choice.

The distinction between these two cases is useful. The Rademacher choice is symmetric and maximally simple. The Mammen choice preserves the classical third-moment matching condition. Both are bounded and both are special cases of the same two-point family.

\subsection{Copula-based bounded dependent two-point multipliers}
\label{sec:construction}
Let $\{Z_t\}_{t\in\bbZ}$ be a stationary standard Gaussian sequence satisfying
\begin{align}
    \E Z_t=0,
    \qquad
    \E Z_t^2=1,
    \qquad
    \Corr(Z_t,Z_{t-h})=\rho_h,
    \qquad
    h\in\bbZ.
    \label{eq:latent_gaussian_general}
\end{align}
By stationarity, $\rho_{-h}=\rho_h$. 
For a sample of size $T$, define the $T\times1$ latent Gaussian vector
\begin{align}
    \bz_T
    =
    (Z_1,\ldots,Z_T)',
    \qquad
    \bz_T
    \sim
    N(\bzero,\bR_T),
    \qquad
    (\bR_T)_{t,s}
    = 
    \rho_{|t-s|}.
\end{align}

Apply the Gaussian copula transformation coordinatewise:
\begin{align}
    U_t
    =
    \Phi(Z_t),
    \qquad
    \xi_t
    =
    F_p^{-1}(U_t),
    \qquad
    t=1,\ldots,T.
\end{align}
Equivalently, with $q_p=\Phi^{-1}(p)$,
\begin{align}
    \xi_t
    =
    a_p\,\bone\{Z_t\leq q_p\}
    +
    b_p\,\bone\{Z_t>q_p\}.
    \label{eq:threshold_twopoint}
\end{align}
Thus each $\xi_t$ has exactly the prescribed two-point marginal distribution $F_p$.
Writing
\begin{align}
    \bxi_T
    =
    (\xi_1,\ldots,\xi_T)',
\end{align}
the transformation maps the Gaussian vector $\bz_T$ into a $T$-dimensional random vector whose coordinates each take the two values $a_p$ and $b_p$, with dependence inherited from the joint Gaussian distribution of $\bz_T$. 

For any two coordinates separated by lag $h$, $(Z_t,Z_{t-h})$ is bivariate standard normal with correlation $\rho_h$.
Since $\xi_t
    =
    F_p^{-1}\{\Phi(Z_t)\}$ and  
    $\xi_{t-h}
    =
    F_p^{-1}\{\Phi(Z_{t-h})\}$, 
their correlation is determined by $\rho_h$, given $p$.
We denote the resulting correlation transformation by $K_p$, so that $K_p(\rho_h)$ is the correlation between $\xi_t$ and $\xi_{t-h}$. 

The following proposition gives the explicit form of this correlation transformation and establishes that the resulting multiplier autocovariance sequence is valid.

\begin{prop}[Copula-induced correlation map]
\label{prop:copula_kernel}
For the normalized two-point distribution $F_p$,
\begin{align}
    K_p(\rho)
    =
    \frac{\Phi_2(q_p,q_p;\rho)-p^2}{p(1-p)},
    \qquad
    -1\leq\rho\leq1,
    \label{eq:general_copula_kernel}
\end{align}
where $\Phi_2(\cdot,\cdot;\rho)$ is the bivariate standard normal distribution function with correlation $\rho$.
Consequently, since $\{\xi_t\}$ is stationary, its autocovariance sequence is given by
\begin{align}
    \E(\xi_t\xi_{t-h})
    =
    K_p(\rho_h),
    \qquad
    h\in\bbZ.
\end{align}
Moreover, 
    $\bK_{p,T}
    :=
    \E(\bxi_T\bxi_T')$ 
is positive semi-definite for every finite $T$.
\end{prop}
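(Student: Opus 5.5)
The plan is to work from the threshold representation \eqref{eq:threshold_twopoint} and write each multiplier as an affine function of an indicator. Let $I_t=\bone\{Z_t\leq q_p\}$. Then
\begin{align}
    \xi_t = b_p + (a_p-b_p)I_t .
\end{align}
From \eqref{eq:two_point_values} we get $a_p-b_p=-1/\sqrt{p(1-p)}$. Since $\E\xi_t=0$, we also have $b_p=-(a_p-b_p)p$, and hence
\begin{align}
    \xi_t=-\frac{I_t-p}{\sqrt{p(1-p)}}.
\end{align}
Therefore $\E(\xi_t\xi_{t-h})=\{\E(I_tI_{t-h})-p^2\}/\{p(1-p)\}$.

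For the explicit form \eqref{eq:general_copula_kernel}, note that $\E(I_tI_{t-h})=\Pro(Z_t\leq q_p,Z_{t-h}\leq q_p)$. The pair $(Z_t,Z_{t-h})$ is bivariate standard normal with correlation $\rho_h$ by \eqref{eq:latent_gaussian_general}, so this probability equals $\Phi_2(q_p,q_p;\rho_h)$, which gives the formula. Because $\E\xi_t^2=1$, covariance and correlation coincide, and this identifies $K_p(\rho_h)$ as the correlation map defined before the proposition. A quick check: at $\rho=0$ we get $\Phi_2=p^2$, so $K_p=0$; at $\rho=1$ we get $\Phi_2=p$, so $K_p=1$.

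For the degenerate endpoints $\rho=\pm1$, interpret $\Phi_2$ as the distribution function of $(Z,\pm Z)$ so the formula remains valid. I would mention this briefly.

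For stationarity, the vector $(Z_{t_1+s},\dots,Z_{t_m+s})$ is Gaussian with mean zero and covariance depending only on the lag differences. Its law is therefore shift-invariant, and the coordinatewise measurable map $F_p^{-1}\circ\Phi$ preserves this property. Consequently $\E(\xi_t\xi_{t-h})$ depends only on $h$ and equals $K_p(\rho_h)$. Since $\rho_{-h}=\rho_h$, the same holds for negative $h$.

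Positive semi-definiteness should be immediate once it is stated correctly. The matrix $\bK_{p,T}$ is by definition the second-moment matrix $\E(\bxi_T\bxi_T')$ of a bounded random vector. For any $\bv\in\bbR^T$,
\begin{align}
    \bv'\bK_{p,T}\bv=\E\{(\bv'\bxi_T)^2\}\ge0 .
\end{align}
Its $(t,s)$ entry is $K_p(\rho_{|t-s|})$ by the previous step, so the Toeplitz matrix built from $K_p(\rho_h)$ is positive semi-definite. This holds whenever the $\rho_h$ form a valid correlation sequence, that is, whenever $\bR_T$ is positive semi-definite so that $\bz_T$ exists.

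I do not expect a genuine obstacle here. The only point needing care is that the Gaussian vector $\bz_T$ is well defined for every $T$, which is guaranteed by assumption \eqref{eq:latent_gaussian_general}. Given that, PSD follows from realizability rather than from any analytic property of $K_p$.
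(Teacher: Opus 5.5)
Your proposal is correct and follows essentially the same route as the paper's proof. Both write $\xi_t=(p-I_t)/\sqrt{p(1-p)}$, identify $\E(I_tI_{t-h})$ with $\Phi_2(q_p,q_p;\rho_h)$, and obtain positive semi-definiteness from $\bv'\bK_{p,T}\bv=\E\{(\bv'\bxi_T)^2\}\geq0$; your remarks on the endpoints $\rho=\pm1$ and on stationarity are harmless additions.
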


\begin{proof}
Let $I_t=\bone\{Z_t\leq q_p\}$.
Using $\xi_t=a_pI_t+b_p(1-I_t)$, $pa_p+(1-p)b_p=0$, and the definitions of $a_p$ and $b_p$,
\begin{align}
    \xi_t
    =
    \frac{p-I_t}{\sqrt{p(1-p)}}.
\end{align}
Therefore
\begin{align}
    \E(\xi_t\xi_{t-h})
    &=
    \frac{\E\{(p-I_t)(p-I_{t-h})\}}{p(1-p)}
    \nonumber\\
    &=
    \frac{\E(I_tI_{t-h})-p^2}{p(1-p)}.
\end{align}
Since
    $\E(I_tI_{t-h})
    =
    \Phi_2(q_p,q_p;\rho_h)$, 
the expression for $K_p(\rho_h)$ follows. 
Finally, for any $\ba=(a_1,\ldots,a_T)'\in\bbR^T$, 
    $\ba'\bK_{p,T}\ba
    =
    \E[
        (
            \sum_{t=1}^T a_t\xi_t
        )^2
    ]
    \geq0$; thus, $\bK_{p,T}$ is positive semi-definite for every finite $T$.
\end{proof}

For implementation, we specify the symmetric autocorrelation sequence of the latent Gaussian process as
\begin{align}
    \rho_{T,h}
    &=
    \exp\left\{
        -\left(\frac{h}{b_T}\right)^2
    \right\},
    \qquad
    h\in\bbZ,
    \label{eq:latent_gaussian_corr}
\end{align}
where $b_T>0$ is the dependence-scale parameter.

For $h\geq0$, Proposition~\ref{prop:copula_kernel} gives
\begin{align}
    \E(\xi_t\xi_{t-h})
    =
    K_p(\rho_{T,h})
    =
    K_p\left[
        \exp\left\{
            -\left(\frac{h}{b_T}\right)^2
        \right\}
    \right].
    \label{eq:induced_kernel_general}
\end{align}
For notational simplicity, the dependence of $Z_t$ and $\xi_t$ on $T$ is suppressed when \eqref{eq:latent_gaussian_corr} is used.
The induced multiplier autocovariance $K_p(\rho_{T,h})$ will also serve as the HAC lag weight at lag $h$.

\subsection{From multiplier dependence to matched-HAC inference}\label{sec:matched}

\paragraph{Original-sample HAC.}
In the generic HAC estimator \eqref{eq:generic_hac}, choose
\begin{align}
    w_T(h)
    &=
    K_p(\rho_{T,h}).
    \label{eq:matched_hac_weight}
\end{align}
The resulting original-sample HAC estimator is
\begin{align}
    \what{\bOmega}_p
    &:=
    \what{\bGamma}(0)
    +
    \sum_{h=1}^{T-1}
    K_p(\rho_{T,h})
    \left\{
    \what{\bGamma}(h)+\what{\bGamma}(h)'
    \right\}.
    \label{eq:matched_hac}
\end{align}
Equivalently,
\begin{align}
    \what{\bOmega}_p
    &=
    \frac{1}{T}\sum_{t=1}^T\sum_{s=1}^T
    K_p(\rho_{T,t-s})
    (\what{\bg}_t-\overline{\what{\bg}})
    (\what{\bg}_s-\overline{\what{\bg}})'.
    \label{eq:double_sum_hac}
\end{align}
The corresponding original-sample covariance estimator for the linearized estimator is
\begin{align}
    \what{\bV}_{\theta}
    &=
    \what{\bB}\what{\bOmega}_p\what{\bB}'.
    \label{eq:vtheta_hat}
\end{align}
Thus $\what{\bOmega}_p$ is the matched-HAC estimator targeting the
long-run covariance $\bOmega_0$, with lag weight at $h$ given by the
multiplier autocovariance $K_p(\rho_{T,h})$.
Section~\ref{sec:asymptotics} establishes conditions under which
$\what{\bOmega}_p\CP\bOmega_0$.

\paragraph{Bootstrap match.}
For each bootstrap replication, generate $\{\xi_t^*\}_{t=1}^T$ independently of the data from the same Gaussian copula threshold construction, so that
\begin{align}
    \E^*(\xi_t^*)
    &=
    0,
    \qquad
    \E^*(\xi_t^*\xi_{t-h}^*)
    =
    K_p(\rho_{T,h}),
    \label{eq:bootstrap_multiplier_covariance}
\end{align}
and define
\begin{align}
    \what{\bg}_t^*
    &=
    \xi_t^*(\what{\bg}_t-\overline{\what{\bg}}).
    \label{eq:score_bootstrap_theta}
\end{align}
Conditional on the data,
\begin{align}
    \E^*(\what{\bg}_t^*\what{\bg}_{t-h}^{*'})
    &=
    K_p(\rho_{T,h})
    (\what{\bg}_t-\overline{\what{\bg}})
    (\what{\bg}_{t-h}-\overline{\what{\bg}})'.
    \label{eq:bootstrap_score_pair_covariance}
\end{align}
Consequently, the conditional covariance of the bootstrap score sum exactly matches the original-sample HAC estimator:
\begin{align}
    \Var^*\left(
    \frac{1}{\sqrt{T}}\sum_{t=1}^T\what{\bg}_t^*
    \right)
    &=
    \frac{1}{T}\sum_{t=1}^T\sum_{s=1}^T
    K_p(\rho_{T,t-s})
    (\what{\bg}_t-\overline{\what{\bg}})
    (\what{\bg}_s-\overline{\what{\bg}})'\\
    &=
    \what{\bOmega}_p.
    \label{eq:exact_hac_match}
\end{align}
Hence the original-sample HAC estimator and the conditional covariance of the bootstrap score sum coincide exactly.
This exact equality is the matching property: the HAC weight $K_p(\rho_{T,h})$ at lag $h$ is also the covariance of two bootstrap multipliers $h$ periods apart.

Consequently, once matched-HAC consistency is established, the conditional covariance of the bootstrap score sum also converges to $\bOmega_0$. 
The asymptotic results below further show that the bootstrap score sum converges conditionally to the same $N(\bzero,\bOmega_0)$ limit as its original-sample counterpart.

For draw-specific studentization, define the bootstrap analogue of $\what{\bOmega}_p$ from the bootstrap score array:
\begin{align}
    \what{\bOmega}_p^*
    &=
    \what{\bGamma}^*(0)
    +
    \sum_{h=1}^{T-1}
    K_p(\rho_{T,h})
    \left\{
    \what{\bGamma}^*(h)+\what{\bGamma}^*(h)'
    \right\},
    \label{eq:bootstrap_matched_hac}
\end{align}
where, for $h\geq0$, 
    $\what{\bGamma}^*(h)
    =
    \frac{1}{T}\sum_{t=h+1}^T
    (\what{\bg}_t^*-\overline{\what{\bg}^*})
    (\what{\bg}_{t-h}^*-\overline{\what{\bg}^*})'$, $
    \overline{\what{\bg}^*}
    =
    \frac{1}{T}\sum_{t=1}^T\what{\bg}_t^*$. 
The associated draw-specific covariance estimator of $\what{\btheta}^*$ is
\begin{align}
    \what{\bV}_{\theta}^*
    &=
    \what{\bB}\what{\bOmega}_p^*\what{\bB}'.
    \label{eq:bootstrap_theta_covariance}
\end{align}
Note that $\what{\bOmega}_p^*$ and $\what{\bV}_{\theta}^*$ vary across bootstrap replications.

\begin{rem}[Positive semi-definiteness]
Equation \eqref{eq:exact_hac_match} shows that $\what{\bOmega}_p$ is a conditional covariance matrix and is therefore positive semi-definite for every finite sample.
At the same time, \eqref{eq:matched_hac} identifies it as a HAC estimator with lag weights $K_p(\rho_{T,h})$.
\end{rem}

\subsubsection{The Rademacher special case}

When $p=1/2$, $q_p=0$, $a_p=-1$, and $b_p=1$.
Then $\xi_t=\sgn(Z_t)$ up to the value assigned at zero.
The Gaussian sign-correlation identity gives
\begin{align}
    K_{1/2}(\rho)=\frac{2}{\pi}\arcsin(\rho).
    \label{eq:arcsine_special_case}
\end{align}
Writing $x=h/b_T$, the corresponding lag-weight function is
\begin{align}
    K_{1/2}\{\exp(- x^2)\}
    =
    \frac{2}{\pi}\arcsin\{\exp(- x^2)\}.
    \label{eq:arcsine_kernel}
\end{align}
Near the origin,
\begin{align}
    \frac{2}{\pi}\arcsin\{\exp(- x^2)\}
    =
    1-\frac{2\sqrt{2}}{\pi}|x|+o(|x|),
    \qquad x\to0.
    \label{eq:kernel_cusp}
\end{align}
Thus the Rademacher case has an explicit arcsine lag-weight function with a Bartlett-like cusp.

\subsubsection{The Mammen special case}

When $p=p_M$ in \eqref{eq:mammen_distribution}, the multiplier process has exact Mammen marginals at every date.
The covariance transformation is
\begin{align}
    K_{p_M}(\rho)=\frac{\Phi_2(q_{p_M},q_{p_M};\rho)-p_M^2}{p_M(1-p_M)}.
    \label{eq:mammen_kernel}
\end{align}
There is no arcsine simplification in this asymmetric case, but the transformation is explicit and can be evaluated numerically. For a specified multiplier correlation at a single lag, the corresponding latent Gaussian correlation can be obtained numerically by inverting $K_{p_M}$. In our construction, however, the latent Gaussian correlation sequence is specified directly through \eqref{eq:latent_gaussian_corr}, so no pointwise inversion of a prespecified sequence of multiplier correlations across lags is required.

\section{Dependent Wild Bootstrap for Estimating Equations}\label{sec:estimating_equations}

The setup in Section~\ref{sec:setup} gives 
\begin{align}
    \sqrt{T}(\what{\btheta}-\btheta_0)
    &=
    \bB_0\frac{1}{\sqrt{T}}\sum_{t=1}^T\bg_t(\btheta_0)+o_p(1),
\end{align}
and $\sqrt{T}(\what{\btheta}-\btheta_0)\CD N(\bzero,\bV_{\theta})$ with $\bV_{\theta}=\bB_0\bOmega_0\bB_0'$.
The dependent wild bootstrap applies the bootstrap contributions $\{\what{\bg}_t^*\}$ defined in \eqref{eq:score_bootstrap_theta} to this first-order representation.
The bootstrap analogue of the asymptotic linear representation is
\begin{align}
    \sqrt{T}(\what{\btheta}^*-\what{\btheta})
    &=
    \what{\bB}
    \frac{1}{\sqrt{T}}\sum_{t=1}^T
    \what{\bg}_t^*,
    \label{eq:bootstrap_theta_update}
\end{align}
where $\what{\bB}$ is the sample analogue of $\bB_0$.
The notation $\what{\btheta}^*$ is symbolic in this general formulation: the procedure resamples the first-order representation directly and does not require re-estimation from bootstrap data. 
In the linear regression implementation considered in Subsection~\ref{subsec:reg_residual}, where the observed regressors are held fixed across bootstrap draws, this direct bootstrap update coincides exactly with the unrestricted OLS estimator computed from the generated bootstrap sample.

For likelihood scores or just-identified smooth estimating equations, $\what{\bB}=-\what{\bG}^{-1}$ with $\what{\bG}=T^{-1}\sum_{t=1}^T\partial\bg_t(\what{\btheta})/\partial\btheta'$.
For efficient two-step GMM,
$\what{\bB}=-(\what{\bD}'\bW_T\what{\bD})^{-1}
\what{\bD}'\bW_T$, where $\what{\bD}=T^{-1}\sum_{t=1}^T\partial\bg_t(\what{\btheta})/\partial\btheta'$.
The weighting matrix $\bW_T$ is constructed from a first-step GMM estimator and satisfies $\bW_T\CP\bOmega_0^{-1}$.
When a matched-HAC estimator is used in this construction,
$\bW_T-\what{\bOmega}_p^{-1}=o_p(1)$.
The matched original-sample covariance estimator $\what{\bV}_{\theta}$ is given in \eqref{eq:vtheta_hat}, and \eqref{eq:exact_hac_match} shows that its moment covariance component is reproduced exactly by the bootstrap. 

\subsection{Unrestricted and restricted studentized implementations}\label{subsec:wald_lm_implementations}

We consider a fixed number $\ell$ of linear restrictions, $H_0:\bA\btheta=\ba_0$, where $\bA\in\bbR^{\ell\times k}$ has rank $\ell$ and $\ba_0\in\bbR^\ell$.
It is useful to formulate inference first in terms of feasible studentized $\ell$-dimensional vectors and then obtain the conventional Wald and LM statistics by taking quadratic forms.
The unrestricted Wald implementation uses the unrestricted estimator, whereas the restricted LM implementation imposes the null before forming the bootstrap moments. 
We write $\what{\bz}_{W}$ for the feasible unrestricted studentized vector and $\wtilde{\bz}_{LM}$ for the feasible restricted studentized vector.
For $\ell=1$, we refer to the scalar quantities $\what z_W$ and $\wtilde z_{LM}$ as the unrestricted and restricted $z$-statistics, respectively. 

For scalar restrictions, retaining the signed studentized statistic is
particularly useful because it preserves the direction of departures from
the null, which can be economically meaningful in regression and structural
parameter applications. It also permits the lower and upper tails of the
sampling approximation to be assessed separately, whereas the corresponding
quadratic Wald and LM statistics fold the two directions together.

All inverse square roots below denote the symmetric positive-definite inverse square root of the corresponding covariance matrix.

For the unrestricted implementation, define the feasible studentized restriction vector
\begin{align}
    \what{\bz}_{W}
    &=
    (\bA\what{\bV}_{\theta}\bA')^{-1/2}
    \sqrt{T}(\bA\what{\btheta}-\ba_0),
    \qquad
    W_T
    =
    \what{\bz}_{W}'\what{\bz}_{W}.
    \label{eq:wald_stat_generic}
\end{align}
Thus the usual Wald statistic is the squared Euclidean norm of the studentized restriction vector.
To approximate the distribution of $\what{\bz}_{W}$, use the bootstrap analogue in \eqref{eq:bootstrap_theta_update} and, for recomputed studentization, define
\begin{align}
    \what{\bz}_{W}^*
    &=
    (\bA\what{\bV}_{\theta}^*\bA')^{-1/2}
    \sqrt{T}\bA(\what{\btheta}^*-\what{\btheta}),
    \qquad
    W_T^*
    =
    \what{\bz}_{W}^{*'}\what{\bz}_{W}^*.
    \label{eq:wald_bootstrap_generic}
\end{align}
For fixed studentization, replace $\what{\bV}_{\theta}^*$ in \eqref{eq:wald_bootstrap_generic} by the original-sample matrix $\what{\bV}_{\theta}$.

The restricted implementation imposes the null before the moment contribution is formed.
Since $\operatorname{rank}(\bA)=\ell$, use an invertible linear reparameterization $\boldsymbol{\eta}=(\boldsymbol{\eta}_1',\boldsymbol{\eta}_2')'$ such that $\boldsymbol{\eta}_2=\bA\btheta$, where $\boldsymbol{\eta}_2\in\bbR^\ell$.
Under the null, $\boldsymbol{\eta}_2=\ba_0$, while $\boldsymbol{\eta}_1$ contains the $k-\ell$ nuisance parameters.
Write $\boldsymbol{\eta}_{1,0}$ for the true nuisance parameter, $\wtilde{\boldsymbol{\eta}}_1$ for its restricted estimate, and $\wtilde{\btheta}$ for the corresponding restricted estimate in the original parameterization, and define $\wtilde{\bg}_t=\bg_t(\wtilde{\btheta})$ and $\overline{\wtilde{\bg}}=T^{-1}\sum_{t=1}^T\wtilde{\bg}_t$.
The matched-HAC estimator based on the restricted moments is
\begin{align}
    \wtilde{\bOmega}_p
    &=
    \frac{1}{T}\sum_{t=1}^T\sum_{s=1}^T
    K_p(\rho_{T,t-s})
    (\wtilde{\bg}_t-\overline{\wtilde{\bg}})
    (\wtilde{\bg}_s-\overline{\wtilde{\bg}})'.
    \label{eq:restricted_matched_hac}
\end{align}
Let $\wtilde{\bG}=(\wtilde{\bG}_1,\wtilde{\bG}_2)$ be the sample Jacobian with respect to $(\boldsymbol{\eta}_1',\boldsymbol{\eta}_2')'$ at $\wtilde{\btheta}$, where $\wtilde{\bG}_1\in\bbR^{d\times(k-\ell)}$ and $\wtilde{\bG}_2\in\bbR^{d\times\ell}$.
Following \citet{NeweyMcFadden1994}, define
\begin{align}
    \wtilde{\bG}_{2\cdot1}
    &=
    \wtilde{\bG}_2
    -
    \wtilde{\bG}_1
    (\wtilde{\bG}_1'\wtilde{\bOmega}_p^{-1}\wtilde{\bG}_1)^{-1}
    \wtilde{\bG}_1'\wtilde{\bOmega}_p^{-1}\wtilde{\bG}_2.
    \label{eq:g2_partial_generic}
\end{align}
By construction, $\wtilde{\bG}_1'\wtilde{\bOmega}_p^{-1}\wtilde{\bG}_{2\cdot1}=\bzero$. 
The corresponding $\ell$-dimensional restricted studentized vector is 
\begin{align}
    \wtilde{\bz}_{LM}
    &=
    -\left(
    \wtilde{\bG}_{2\cdot1}'
    \wtilde{\bOmega}_p^{-1}
    \wtilde{\bG}_{2\cdot1}
    \right)^{-1/2}
    \wtilde{\bG}_{2\cdot1}'
    \wtilde{\bOmega}_p^{-1}
    \sqrt{T}\,\overline{\wtilde{\bg}},
    \qquad
    LM_T
    =
    \wtilde{\bz}_{LM}'\wtilde{\bz}_{LM}.
    \label{eq:lm_stat_generic}
\end{align}
For a scalar restriction, the leading minus sign is chosen so that, to first order, the sign of the restricted $z$-statistic $\wtilde z_{LM}$ agrees with that of the unrestricted $z$-statistic $\what z_W$, while leaving the LM statistic unchanged. 
Thus $LM_T$ is the usual LM statistic, given by the squared Euclidean norm of the restricted studentized vector. 

For the direct bootstrap, define the restricted bootstrap moments and their sample average by
\begin{align}
    \wtilde{\bg}_t^*
    &=
    \xi_t^*(\wtilde{\bg}_t-\overline{\wtilde{\bg}}),
    \qquad
    \overline{\wtilde{\bg}^*}
    =
    T^{-1}\sum_{t=1}^T\wtilde{\bg}_t^*.
    \label{eq:restricted_moment_bootstrap}
\end{align}
Note that, conditional on the data,
\begin{align*}
    \Var^*\left(
    \frac{1}{\sqrt{T}}\sum_{t=1}^T\wtilde{\bg}_t^*
    \right)
    &=
    \wtilde{\bOmega}_p
\end{align*}
holds exactly.
For draw-specific studentization, define
\begin{align}
    \wtilde{\bOmega}_p^*
    &=
    \frac{1}{T}\sum_{t=1}^T\sum_{s=1}^T
    K_p(\rho_{T,t-s})
    (\wtilde{\bg}_t^*-\overline{\wtilde{\bg}^*})
    (\wtilde{\bg}_s^*-\overline{\wtilde{\bg}^*})'.
    \label{eq:restricted_bootstrap_matched_hac}
\end{align}
The original-sample projection $\wtilde{\bG}_{2\cdot1}'\wtilde{\bOmega}_p^{-1}$ is held fixed across bootstrap draws, and the recomputed feasible bootstrap restricted studentized vector is 
\begin{align}
    \wtilde{\bz}_{LM}^*
    &=
    -\left\{
    \wtilde{\bG}_{2\cdot1}'
    \wtilde{\bOmega}_p^{-1}
    \wtilde{\bOmega}_p^*
    \wtilde{\bOmega}_p^{-1}
    \wtilde{\bG}_{2\cdot1}
    \right\}^{-1/2}
    \wtilde{\bG}_{2\cdot1}'
    \wtilde{\bOmega}_p^{-1}
    \sqrt{T}\,\overline{\wtilde{\bg}^*},
    \qquad
    LM_T^*
    =
    \wtilde{\bz}_{LM}^{*'}\wtilde{\bz}_{LM}^*.
    \label{eq:lm_bootstrap_generic}
\end{align}
For fixed studentization, replace $\wtilde{\bOmega}_p^*$ in the covariance matrix of \eqref{eq:lm_bootstrap_generic} by $\wtilde{\bOmega}_p$, leaving the same original-sample projection fixed.

For restricted efficient two-step GMM, the role of the projection $\wtilde{\bG}_{2\cdot1}$ can also be seen from the first-order conditions. 
The weighting matrix constructed from the first-step GMM estimator satisfies $\bW_T-\wtilde{\bOmega}_p^{-1}=o_p(1)$, so the first-order conditions imply $\wtilde{\bG}_1'
    \wtilde{\bOmega}_p^{-1}
    \overline{\wtilde{\bg}}
    =
    o_p(T^{-1/2})$. Hence, to first order,
    $\wtilde{\bG}_{2\cdot1}'
    \wtilde{\bOmega}_p^{-1}
    \sqrt{T}\,\overline{\wtilde{\bg}}
    =
    \wtilde{\bG}_2'
    \wtilde{\bOmega}_p^{-1}
    \sqrt{T}\,\overline{\wtilde{\bg}}$;
see \citet[][p.~2230]{NeweyMcFadden1994}. 
However, the direct bootstrap does not re-estimate the restricted nuisance parameter in each draw, so the corresponding bootstrap first-order condition is not available. 
The projection $\wtilde{\bG}_{2\cdot1}'\wtilde{\bOmega}_p^{-1}$ is therefore retained in \eqref{eq:lm_bootstrap_generic} to account for the first-order effect of restricted nuisance-parameter estimation.


\subsection{Regression-specific residual resampling}\label{subsec:reg_residual}

Regression models permit an additional implementation in which the same fixed $\ell$-dimensional linear restriction is imposed and the multiplier is applied to residuals rather than directly to the score.

\subsubsection{Original-sample statistics}
Consider
\begin{align}
    y_t
    =
    \bx_t'\btheta+u_t,
    \qquad
    H_0:\bA\btheta=\ba_0,
    \qquad
    t=1,\ldots,T,
    \label{eq:reg_model}
\end{align}
where $\bx_t=(1,\bz_t')'\in\bbR^k$. 
We assume throughout the regression framework that the intercept is included under both the null and alternative and is left unrestricted by $\bA$.
Let $\what{\btheta}$ denote the unrestricted OLS estimator and define
\begin{align}
    \what{\bQ}_x
    &=
    T^{-1}\sum_{t=1}^T\bx_t\bx_t',
    \\
    \what u_t
    &=
    y_t-\bx_t'\what{\btheta},
    \qquad
    \what{\bg}_t
    =
    \bx_t\what u_t.
\end{align}
The unrestricted normal equations imply $\overline{\what{\bg}}=\bzero$.
Let $\what{\bOmega}_p$ denote the matched-HAC estimator based on $\{\what{\bg}_t\}$ and set
\begin{align}
    \what{\bV}_{\theta}
    =
    \what{\bQ}_x^{-1}\what{\bOmega}_p\what{\bQ}_x^{-1}.
    \label{def:vcov_wald_linear}
\end{align}
The regression Wald statistic is therefore \eqref{eq:wald_stat_generic} with $\what{\bB}=\what{\bQ}_x^{-1}$.

For the restricted implementation, choose a fixed nonsingular matrix $\bC=(\bC_1,\bC_2)$ such that $\bA\bC_1=\bzero$ and $\bA\bC_2=\bI_\ell$, and write $\btheta=\bC_1\boldsymbol{\eta}_1+\bC_2\boldsymbol{\eta}_2$ so that $\boldsymbol{\eta}_2=\bA\btheta$.
Let $\wtilde{\boldsymbol{\eta}}_1$ denote the restricted OLS estimator under $\boldsymbol{\eta}_2=\ba_0$ and define
\begin{align}
    \wtilde{\btheta}
    &=
    \bC_1\wtilde{\boldsymbol{\eta}}_1+\bC_2\ba_0,
    \\
    \wtilde u_t
    &=
    y_t-\bx_t'\wtilde{\btheta},
    \qquad
    \wtilde{\bg}_t
    =
    \bx_t\wtilde u_t,
    \qquad
    \overline{\wtilde{\bg}}
    =
    T^{-1}\sum_{t=1}^T\wtilde{\bg}_t.
\end{align}
Because the intercept is an unrestricted nuisance parameter, the restricted normal equations imply $\sum_{t=1}^T\wtilde u_t=0$.
Thus the restricted residuals themselves require no additional recentering before multiplication, although the restricted moment array is centered in the direct bootstrap and matched-HAC studentizer.
For the matched-HAC implementation, let $\wtilde{\bOmega}_p$ be defined from $\{\wtilde{\bg}_t-\overline{\wtilde{\bg}}\}$ as in \eqref{eq:restricted_matched_hac}.
For $\bg_t(\btheta)=\bx_t(y_t-\bx_t'\btheta)$,
\begin{align}
    \wtilde{\bG}
    =
    (\wtilde{\bG}_1,\wtilde{\bG}_2)
    =
    -\what{\bQ}_x(\bC_1,\bC_2).
\end{align}
Using these regression-specific derivative matrices, define $\wtilde{\bG}_{2\cdot1}$ by \eqref{eq:g2_partial_generic}.
The regression LM statistic is therefore \eqref{eq:lm_stat_generic} with these regression-specific objects.

\subsubsection{Residual resampling}
For a Wald-type residual bootstrap, set $u_t^\dagger=\xi_t^*\what u_t$ and generate
\begin{align}
    y_t^*
    =
    \bx_t'\what{\btheta}+u_t^\dagger.
\end{align}
Here $u_t^\dagger$ is the generated bootstrap disturbance.

Let $\what{\btheta}^*$ denote the unrestricted OLS estimator from the bootstrap sample.
As shown below, this estimator coincides exactly with the direct-score bootstrap update denoted by $\what{\btheta}^*$ in \eqref{eq:bootstrap_theta_update}, so we use the same notation.
Define the post-estimation bootstrap residual and residual-bootstrap moment contribution by
\begin{align}
    \what u_t^*
    &=
    y_t^*-\bx_t'\what{\btheta}^*,
    \qquad
    \what{\bg}_{t,\mathrm{res}}^*
    =
    \bx_t\what u_t^*.
\end{align}
The generated disturbance $u_t^\dagger$ is generally different from the post-estimation residual $\what u_t^*$.
The OLS normal equations give exactly
\begin{align}
    \sqrt T\bA(\what{\btheta}^*-\what{\btheta})
    =
    \bA\what{\bQ}_x^{-1}
    T^{-1/2}\sum_{t=1}^T\xi_t^*\what{\bg}_t.
    \label{eq:reg_resid_wald_numerator}
\end{align}
Thus the Wald numerator from refitting the bootstrap sample is exactly the same as that obtained by direct score resampling in \eqref{eq:bootstrap_theta_update}, since 
$\overline{\what{\bg}}=\bzero$.
For fixed studentization, replace $\what{\bV}_{\theta}^*$ in \eqref{eq:wald_bootstrap_generic} by the original-sample matrix
$\what{\bV}_{\theta}$ defined in \eqref{def:vcov_wald_linear}; for recomputed studentization, construct $\what{\bV}_{\theta}^*$
from the draw-specific moment array $\{\what{\bg}_{t,\mathrm{res}}^*\}$.

For an LM-type residual bootstrap, impose the restriction before resampling by setting $\wtilde u_t^\dagger=\xi_t^*\wtilde u_t$ and generate
\begin{align}
    y_t^*
    =
    \bx_t'\wtilde{\btheta}+\wtilde u_t^\dagger.
\end{align}
For each bootstrap sample, re-estimate $\boldsymbol{\eta}_1$ under $\boldsymbol{\eta}_2=\ba_0$, let $\wtilde{\boldsymbol{\eta}}_1^*$ denote the resulting nuisance estimate, and define
\begin{align}
    \wtilde{\btheta}^*
    &=
    \bC_1\wtilde{\boldsymbol{\eta}}_1^*+\bC_2\ba_0,
    \\
    \wtilde u_t^*
    &=
    y_t^*-\bx_t'\wtilde{\btheta}^*,
    \qquad
    \wtilde{\bg}_{t,\mathrm{res}}^*
    =
    \bx_t\wtilde u_t^*.
\end{align}
The definition of $\wtilde{u}_t^*$ and $\wtilde{\bG}_{2\cdot1}'\wtilde{\bOmega}_p^{-1}\wtilde{\bG}_1=\bzero$ give the exact decomposition\footnote{This decomposition does not require the regression to contain an intercept; the intercept condition above is used only to ensure that the restricted residuals have zero sample mean.}
\begin{align}
    \wtilde{\bG}_{2\cdot1}'\wtilde{\bOmega}_p^{-1}
    T^{-1/2}\sum_{t=1}^T\wtilde{\bg}_{t,\mathrm{res}}^*
    =
    \wtilde{\bG}_{2\cdot1}'\wtilde{\bOmega}_p^{-1}
T^{-1/2}\sum_{t=1}^T
\xi_t^*(\wtilde{\bg}_t-\overline{\wtilde{\bg}})
+
\wtilde{\bG}_{2\cdot1}'\wtilde{\bOmega}_p^{-1}
\overline{\wtilde{\bg}}\,
T^{-1/2}\sum_{t=1}^T\xi_t^* .
    \label{eq:reg_resid_lm_numerator}
\end{align} 
The first term on the right-hand side is the centered direct-bootstrap projected moment, while the second is a centering correction that is asymptotically negligible under $H_0$. 
For recomputed studentization, construct $\wtilde{\bOmega}_p^*$ from $\{\wtilde{\bg}_{t,\mathrm{res}}^*\}$ after subtracting its bootstrap
sample average, and use it as the middle covariance matrix of \eqref{eq:lm_bootstrap_generic}, while $\wtilde{\bG}_{2\cdot1}$ and $\wtilde{\bOmega}_p^{-1}$ remain fixed at their original restricted-sample values.
Primitive conditions for residual resampling are given in Subsection~\ref{subsec:reg_validity}.

\section{Asymptotic Results}\label{sec:asymptotics}

In this section, we derive the asymptotic properties of the dependent wild bootstrap.
The primary inferential results are Gaussian approximations for the studentized estimating-equation vector and the feasible unrestricted and restricted studentized vectors; the familiar chi-square limits for the Wald and LM statistics then follow by continuous mapping. 

\begin{ass}
\begin{itemize}
    \item[(i)] The asymptotic expansion \eqref{eq:asym_linear_theta} and the score central limit theorem (CLT) \eqref{eq:score_clt} hold, with $\bOmega_0$ positive definite.
    \item[(ii)] $\sup_t\E\|\bg_t(\btheta_0)\|^2<\infty$.
    \item[(iii)] $\sup_t\E\sup_{\btheta\in\Theta}\|(\partial/\partial\btheta')\bg_t(\btheta)\|^2<\infty$.
    \item[(iv)] $\btheta_0$ is an interior point of $\Theta$, and $\bg_t(\btheta)$ is continuously differentiable in a neighborhood of $\btheta_0$.
\end{itemize}
\label{ass:moment_weakdepend}
\end{ass}

Assumption~\ref{ass:moment_weakdepend}(i) restates the first-order framework in Section~\ref{sec:setup}.
Assumption~\ref{ass:moment_weakdepend}(iii) is a standard condition used to establish HAC consistency and asymptotic normality of smooth estimators \citep{Andrews1991,NeweyMcFadden1994}.

\begin{ass}[Bandwidth]
The bandwidth $b_T$ used in \eqref{eq:double_sum_hac} satisfies $b_T\to\infty$ and $b_T/T\to0$.
\label{ass:bandwidth}
\end{ass}

Using the same lag weights $K_p(\rho_{T,h})$ as in the feasible matched-HAC estimator, 
define the corresponding infeasible finite-sample HAC matrix based on $\bg_t(\btheta_0)$ by
\begin{align} \bOmega_{p,T} = \bGamma_T(0)+ \sum_{h=1}^{T-1}K_p(\rho_{T,h})\left\{\bGamma_T(h)+\bGamma_T(h)'\right\}, \end{align} 
where $\bGamma_T(h) = T^{-1}\sum_{t=h+1}^T\bg_t(\btheta_0)\bg_{t-h}(\btheta_0)'$.
Thus $\bOmega_{p,T}$ separates the standard HAC approximation problem from the additional effects of estimating and recentering the score contributions. 
Consistency of the feasible matched-HAC estimator, established in Theorem~\ref{thm:matched_hac_consistency} below, therefore has two ingredients: consistency of $\bOmega_{p,T}$ for $\bOmega_0$, and asymptotic negligibility of replacing the true-parameter scores by their estimated and recentered counterparts. The first ingredient is stated in the next assumption. 
Throughout this section, the latent correlation sequence $\{\rho_{T,h}\}$ is given by \eqref{eq:latent_gaussian_corr} and is used both to construct the bootstrap multipliers and to define the matched-HAC weights. 

The population-score HAC consistency condition is as follows.

\begin{ass}
$\bOmega_{p,T}\CP\bOmega_0$.
\label{ass:consistency_infHAC}
\end{ass}
Assumption~\ref{ass:consistency_infHAC} holds under standard weak-dependence conditions.
For example, Assumption A of \citet{Andrews1991} implies this condition when the lag-weight function $x\mapsto K_p\{\exp(-x^2)\}$ belongs to the $\cK_1$ class.
Lemma~\ref{lem:kernel_radem} verifies this property.

\begin{lem}
The function $x\mapsto K_p\{\exp(-x^2)\}$ is continuous, equals one at $x=0$, is symmetric in $x$, and satisfies $\int_{-\infty}^{\infty}|K_p\{\exp(-x^2)\}|^qdx<\infty$ for each $p\in(0,1)$ and finite $q>0$.
Therefore $x\mapsto K_p\{\exp(-x^2)\}$ belongs to the $\cK_1$ class of \citet{Andrews1991}.
\label{lem:kernel_radem}
\end{lem}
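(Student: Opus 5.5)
The plan is to verify each defining property of the $\cK_1$ class of \citet{Andrews1991} directly from the explicit form of $K_p$ in Proposition~\ref{prop:copula_kernel}. Recall that $\cK_1$ consists of functions $k:\bbR\to[-1,1]$ with $k(0)=1$, $k(x)=k(-x)$, $\int|k(x)|\,dx<\infty$ and $\int k(x)^2dx<\infty$, continuous at zero and at all but finitely many points. For the range condition, note that $K_p(\rho)$ is the correlation of $\xi_t$ and $\xi_{t-h}$, so $|K_p(\rho)|\le 1$ by Cauchy--Schwarz.

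\emph{Continuity, value at zero, symmetry.} The map $\rho\mapsto\Phi_2(q_p,q_p;\rho)$ is continuous on $[-1,1]$. On $(-1,1)$ this follows from dominated convergence applied to the bivariate density, or more simply from Plackett's identity
\begin{align}
\partial\Phi_2(q,q;\rho)/\partial\rho=\phi_2(q,q;\rho)=\frac{1}{2\pi\sqrt{1-\rho^2}}\exp\Bigl\{-\frac{q^2}{1+\rho}\Bigr\}.
\end{align}
At $\rho=1$, continuity holds because $(Z,\rho Z+\sqrt{1-\rho^2}Z')\to(Z,Z)$ in distribution and the limiting law puts no mass on the boundary of the orthant. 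Since $x\mapsto\exp(-x^2)$ is continuous, even, and takes values in $(0,1]$, the composition is continuous and even. At $x=0$ we have $\rho=1$, so $\Phi_2(q_p,q_p;1)=\Phi(q_p)=p$, and therefore $K_p(1)=(p-p^2)/\{p(1-p)\}=1$.

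\emph{Integrability.} This is the main step, and it reduces to the behavior of $K_p(\rho)$ as $\rho\to0$. Since $\Phi_2(q,q;0)=p^2$, we have $K_p(0)=0$. By Plackett's identity, $K_p$ is $C^1$ near $\rho=0$ with derivative $K_p'(0)=\phi(q_p)^2/\{p(1-p)\}$, so $|K_p(\rho)|\le C|\rho|$ for $|\rho|\le 1/2$. In fact the bound $|K_p(\rho)|\le C_p|\rho|$ holds on all of $[0,1]$, because $\phi_2(q,q;\rho)$ is bounded on $[0,1/2]$ and $|K_p|\le1$ elsewhere. More simply, one can use the Mehler/Hermite expansion $K_p(\rho)=\sum_{j\ge1}c_j^2\rho^j$ with $\sum c_j^2=1$, which gives $0\le K_p(\rho)\le\rho$ for $\rho\in[0,1]$. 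Then
\begin{align}
|K_p\{\exp(-x^2)\}|^q\le C_p^q\exp(-qx^2),
\end{align}
and the right-hand side is integrable over $\bbR$ for every finite $q>0$. Taking $q=1,2$ gives the integrability conditions in $\cK_1$.

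\emph{Conclusion.} Combining these properties with $|K_p|\le1$ yields membership in $\cK_1$. The only delicate point is the linear bound near $\rho=0$. I would state it via the Hermite expansion, which needs only positivity of the coefficients and their summation to $K_p(1)=1$, because that avoids any derivative computations.
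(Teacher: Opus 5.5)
Your proposal is correct and follows essentially the paper's proof: both use $K_p(0)=0$ and the derivative $\partial\Phi_2(q,q;\rho)/\partial\rho=\phi_2(q,q;\rho)$ at $\rho=0$ to get a linear bound $|K_p(u)|\le C_p u$ near zero. Both then extend this bound to all of $[0,1]$ (the paper by a continuity argument on a compact $x$-interval, you via $|K_p|\le 1$) and conclude $|K_p\{\exp(-x^2)\}|\le C\exp(-x^2)$. Your Mehler--Hermite alternative, $K_p(\rho)=\sum_{j\ge1}c_j^2\rho^j$ with $\sum_j c_j^2=1$, is a valid shortcut: it gives the sharper global bound $0\le K_p(\rho)\le\rho$ on $[0,1]$ and continuity by uniform convergence, and your explicit check that the range lies in $[-1,1]$ supplies a detail the paper leaves implicit.
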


Lemma~\ref{lem:kernel_radem} shows that the HAC lag-weight function induced by the two-point multiplier construction satisfies the standard kernel regularity conditions. Hence, under conventional weak-dependence conditions, Assumption~\ref{ass:consistency_infHAC} follows from existing HAC theory. The next result shows that estimation and recentering do not affect this consistency asymptotically.


\begin{thm}[matched-HAC consistency]
Suppose that Assumptions~\ref{ass:moment_weakdepend}--\ref{ass:consistency_infHAC} hold. 
If $b_T/T^{1/2}\to0$, then $\what{\bOmega}_p\CP\bOmega_0$.
\label{thm:matched_hac_consistency}
\end{thm}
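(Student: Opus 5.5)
The plan is to compare $\what{\bOmega}_p$ with the infeasible matrix $\bOmega_{p,T}$ and show that $\what{\bOmega}_p-\bOmega_{p,T}=o_p(1)$. Assumption~\ref{ass:consistency_infHAC} then gives the result. Write $w_h:=K_p(\rho_{T,h})$, $\bg_t:=\bg_t(\btheta_0)$, $\overline{\bg}:=T^{-1}\sum_t\bg_t$, and $\bbbeta_t:=\what{\bg}_t-\bg_t$. Then
\[
\what{\bg}_t-\overline{\what{\bg}}=\ba_t+\be_t,\qquad \ba_t=\bg_t-\overline{\bg},\qquad \be_t=\bbbeta_t-\overline{\bbbeta}.
\]
Using the double-sum form \eqref{eq:double_sum_hac}, $\what{\bOmega}_p$ splits into four pieces:
\begin{itemize}
\item the $\ba\ba'$ piece;
\item two cross pieces $T^{-1}\sum_{t,s}w_{t-s}\ba_t\be_s'$ and its transpose;
\item the $\be\be'$ piece.
\end{itemize}
In turn, the $\ba\ba'$ piece differs from $\bOmega_{p,T}$ only through terms involving $\overline{\bg}$.

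\textbf{Kernel summability.} The basic tool is $\sum_{h=-(T-1)}^{T-1}|w_h|=O(b_T)$. To prove it, note that $\partial\Phi_2(q,q;\rho)/\partial\rho$ equals the bivariate normal density at $(q,q)$, which is positive. Hence $K_p$ is increasing in $\rho$. Since $\rho_{T,h}=\exp\{-(h/b_T)^2\}$ is decreasing in $|h|$, the map $x\mapsto|K_p\{\exp(-x^2)\}|$ is eventually monotone. The sum is therefore bounded by $b_T\int|K_p\{\exp(-x^2)\}|dx+O(1)$, which is finite by Lemma~\ref{lem:kernel_radem} with $q=1$.

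With summability in hand, any bilinear term is controlled by Cauchy--Schwarz:
\[
\Bigl\|T^{-1}\sum_{t,s}w_{t-s}\bx_t\by_s'\Bigr\|\le T^{-1}\Bigl(\sum_h|w_h|\Bigr)\Bigl(\sum_t\|\bx_t\|^2\Bigr)^{1/2}\Bigl(\sum_t\|\by_t\|^2\Bigr)^{1/2}.
\]

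\textbf{Estimation effect.} Assumption~\ref{ass:moment_weakdepend}(i) gives $\|\what{\btheta}-\btheta_0\|=O_p(T^{-1/2})$. With probability approaching one, $\what{\btheta}$ lies in the neighborhood of Assumption~\ref{ass:moment_weakdepend}(iv). A row-wise mean-value expansion then gives
\[
\|\bbbeta_t\|\le M_t\,\|\what{\btheta}-\btheta_0\|,\qquad M_t=\sup_{\btheta}\|\partial\bg_t(\btheta)/\partial\btheta'\|.
\]
By Assumption~\ref{ass:moment_weakdepend}(iii), $T^{-1}\sum_tM_t^2=O_p(1)$. Hence $\sum_t\|\be_t\|^2\le\sum_t\|\bbbeta_t\|^2=O_p(1)$. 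By Assumption~\ref{ass:moment_weakdepend}(ii), $\sum_t\|\ba_t\|^2=O_p(T)$. Plugging these into the Cauchy--Schwarz bound:
\begin{itemize}
\item each cross piece is $O_p(T^{-1}b_T\,T^{1/2})=O_p(b_T/T^{1/2})$;
\item the $\be\be'$ piece is $O_p(b_T/T)$.
\end{itemize}

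\textbf{Recentering effect.} The $\ba\ba'$ piece minus $T^{-1}\sum_{t,s}w_{t-s}\bg_t\bg_s'$ consists of terms like $\overline{\bg}\,T^{-1}\sum_{t,s}w_{t-s}\bg_s'$ and $\overline{\bg}\,\overline{\bg}'\,T^{-1}\sum_{t,s}w_{t-s}$. The score CLT \eqref{eq:score_clt} gives $\|\overline{\bg}\|=O_p(T^{-1/2})$. Combined with $T^{-1}\sum_t\|\bg_t\|=O_p(1)$ and summability, these terms are $O_p(b_T/T^{1/2})$ and $O_p(b_T/T)$ respectively.

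Finally, $T^{-1}\sum_{t,s}w_{t-s}\bg_t\bg_s'=\bOmega_{p,T}$ exactly, because $w_0=K_p(1)=1$ and the weights are symmetric.

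\textbf{Main obstacle.} The delicate step is the cross term between the true scores and the estimation error. It is of order $b_T/T^{1/2}$, and this is exactly why the theorem imposes $b_T/T^{1/2}\to0$. That rate is stronger than Assumption~\ref{ass:bandwidth}. The rest of the argument only requires the kernel-summability bound, which rests on the monotonicity of $K_p$ and Lemma~\ref{lem:kernel_radem}.
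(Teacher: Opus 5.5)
Your proof is correct. It reaches the same $O_p(b_T/T^{1/2})+O_p(b_T/T)$ rate as the paper, but by a somewhat different route.

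\textbf{The paper's route.} The paper splits $\what{\bOmega}_p-\bOmega_0$ into three pieces:
\begin{itemize}
\item a recentering term $\what{\bOmega}_p-\bOmega_{p,T}(\what{\btheta})$;
\item a parameter-estimation term $\bOmega_{p,T}(\what{\btheta})-\bOmega_{p,T}$;
\item the infeasible term $\bOmega_{p,T}-\bOmega_0$.
\end{itemize}
The estimation term is handled by a mean-value expansion of the uncentered HAC functional $\bOmega_{p,T}(\btheta)$ in $\btheta$, justified only after a scalar projection $\bb'(\cdot)\bb$. That requires bounding $\partial\bGamma_T(h;\btheta)/\partial\btheta'$ uniformly in $h$, and hence controlling $T^{-1}\sum_t\sup_{\btheta\in[\btheta_0,\what{\btheta}]}\|\bg_t(\btheta)\|^2$. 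The recentering term then needs $\overline{\what{\bg}}=O_p(T^{-1/2})$, which is itself obtained from a further mean-value expansion.

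\textbf{Your route.} You write the centered estimated score as the centered true score plus a centered estimation error. You then control every perturbation with a single Toeplitz--Schur bilinear inequality. This never differentiates the HAC functional. It needs only $\|\what{\bg}_t-\bg_t\|\le M_t\|\what{\btheta}-\btheta_0\|$ and $\overline{\bg}=O_p(T^{-1/2})$ from the score CLT. The recentering of the estimation error is absorbed through $\sum_t\|\be_t\|^2\le\sum_t\|\bbbeta_t\|^2$.

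\textbf{Comparison.} Your argument is essentially the one the paper uses only later, in the regression appendix, via its HAC matrix replacement lemma. Your version is therefore arguably cleaner and more reusable. The paper's three-term decomposition, by contrast, mirrors its stated aim of separating the population-score HAC problem, the estimation effect, and the recentering effect.

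\textbf{Kernel summability.} Here the two arguments also differ.
\begin{itemize}
\item The paper derives $|K_p\{\exp(-x^2)\}|\le C\exp(-x^2)$ from a first-order expansion of $K_p$ at zero.
\item You use monotonicity of $K_p$ (via $\partial\Phi_2/\partial\rho>0$) together with integrability from Lemma~\ref{lem:kernel_radem}.
\end{itemize}
Your claim can be strengthened. Since $K_p(0)=0$ and $\rho_{T,h}\in(0,1]$, the weights are nonnegative and nonincreasing in $|h|$ globally, not merely eventually. The integral comparison $\sum_{|h|<T}K_p(\rho_{T,h})\le 1+b_T\int_{\bbR}K_p\{\exp(-x^2)\}\,dx$ is then immediate.
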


Matched-HAC consistency provides the covariance approximation needed for studentization, but bootstrap validity additionally requires a conditional Gaussian approximation to the bootstrap score sum. 
For this purpose, we strengthen the weak-dependence and moment conditions as follows.

\begin{ass}
\begin{itemize}

    \item[(i)] $\bg_t(\btheta_0)$ is strongly mixing with mixing coefficients $\alpha(h)$.
    \item[(ii)] There exists $\delta\geq2$ such that $\sup_t\E\|\bg_t(\btheta_0)\|^{2+\delta}<\infty$ and $\sum_{h=1}^{\infty}\alpha(h)^{\delta/(2+\delta)}<\infty$.
\end{itemize}
\label{ass:moment_stronger}
\end{ass}

Assumption~\ref{ass:moment_stronger} is the same as Assumption 3.1 of \citet{Shao2010}.
Under this stronger condition, the next theorem establishes the central bootstrap approximation at the level of the estimating-equation sum.

\begin{thm}[Conditional Gaussian approximation]
Suppose that Assumptions~\ref{ass:moment_weakdepend}--\ref{ass:moment_stronger} hold. 
If $b_T/T^{\delta/(2+2\delta)}\to0$, then
    ${T}^{-\frac{1}{2}}\sum_{t=1}^T\bg_t(\btheta_0)
    \CD
    N(\bzero,\bOmega_0)$, 
    ${T}^{-\frac{1}{2}}\sum_{t=1}^T\what{\bg}_t^*
    \CDs
    N(\bzero,\bOmega_0)$ in probability, 
and
\begin{align}
    \sup_{\bx\in\bbR^d}
    \left|
    \Pro\left(\frac{1}{\sqrt{T}}\sum_{t=1}^T\bg_t(\btheta_0)\leq\bx\right)
    -
    \Pro^*\left(
    \frac{1}{\sqrt{T}}\sum_{t=1}^T\what{\bg}_t^*
    \leq\bx
    \right)
    \right|
    \CP0.
    \label{eq:score_bootstrap_cdf_validity}
\end{align}
\label{thm:score_boot_consistent}
\end{thm}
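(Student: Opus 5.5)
The first claim is immediate: the original-sample convergence ${T}^{-1/2}\sum_{t=1}^T\bg_t(\btheta_0)\CD N(\bzero,\bOmega_0)$ is Assumption~\ref{ass:moment_weakdepend}(i). By Pólya's theorem, since the Gaussian limit has a continuous distribution function, \eqref{eq:score_bootstrap_cdf_validity} then follows once the conditional limit of the bootstrap sum is established. The substance of the proof is therefore that $S_T^*:=T^{-1/2}\sum_{t=1}^T\xi_t^*\bm_t$, with $\bm_t:=\what{\bg}_t-\overline{\what{\bg}}$, converges conditionally to $N(\bzero,\bOmega_0)$ in probability.

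By the Cramér--Wold device it suffices to treat $\lambda'S_T^*$ for fixed $\lambda\in\bbR^d$. Conditionally on the data, this is a weighted sum of a stationary two-point multiplier array. By \eqref{eq:exact_hac_match} its conditional variance is exactly $\lambda'\what{\bOmega}_p\lambda$, and this converges to $\lambda'\bOmega_0\lambda$ by Theorem~\ref{thm:matched_hac_consistency}. The bandwidth rate $b_T=o(T^{\delta/(2+2\delta)})$ is at most $o(T^{1/2})$ because $\delta\ge2$, so that theorem applies.

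The main obstacle is that $\{\xi_t^*\}$ is not $m$-dependent: $\rho_{T,h}=\exp\{-(h/b_T)^2\}>0$ for every $h$. I would construct a finite-dependent approximation that keeps the exact two-point marginal. Take $m_T=\lceil b_T\sqrt{\log T}\,c\rceil$ for a large constant $c$. Represent the latent process as a moving average $Z_t=\sum_j\psi_j\varepsilon_{t-j}$; the Gaussian autocorrelation has a Gaussian-shaped square root with $\sum_j\psi_j^2=1$. Define $\bar Z_t=\sum_{|j|\le m_T}\psi_j\varepsilon_{t-j}/\{\sum_{|j|\le m_T}\psi_j^2\}^{1/2}$. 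Then $\bar Z_t$ is exactly standard normal and $2m_T$-dependent, so $\bar\xi_t=F_p^{-1}\{\Phi(\bar Z_t)\}$ has the exact marginal $F_p$ and is $2m_T$-dependent. Because the tails of $\psi$ are Gaussian-type, $\E(Z_t-\bar Z_t)^2\le C\exp(-c'm_T^2/b_T^2)=O(T^{-k})$ for any fixed $k$. Since a threshold indicator is non-smooth, $\xi_t\neq\bar\xi_t$ only when $Z_t$ lies near $q_p$ or $|Z_t-\bar Z_t|$ is large. This gives $\Pro(\xi_t\ne\bar\xi_t)\le C\epsilon_T$ with $\epsilon_T=T^{-k/3}$ (up to logs), by splitting on $|Z_t-\bar Z_t|\le\eta$ and using the bounded normal density at $q_p$.

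Since the multipliers are bounded, the approximation error satisfies
\[
\E^*\Bigl|T^{-1/2}\sum_t(\xi_t^*-\bar\xi_t^*)\lambda'\bm_t\Bigr|\le T^{-1/2}\sum_t|a_p-b_p|\,\Pro(\xi_t\ne\bar\xi_t)\,|\lambda'\bm_t| .
\]
This is $O_p(T^{1/2}\epsilon_T)=o_p(1)$, using $T^{-1}\sum\|\bm_t\|=O_p(1)$, which follows from Assumption~\ref{ass:moment_weakdepend}(ii)--(iii) and the consistency of $\what\btheta$. The variance of the approximation converges to the same limit by the same bound.

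For the $2m_T$-dependent sum I would follow \citet{Shao2010} with a big-block/small-block argument. Use big blocks of length $\ell_T$ with $m_T\ll\ell_T\ll T$ and separating blocks of length $2m_T$. Big blocks are then conditionally independent, and the small blocks contribute conditional variance $O_p(m_T/\ell_T)$. It remains to verify the conditional Lyapunov condition for the big-block sums. Their fourth conditional moment, using boundedness of $\xi$, is at most $C\,T^{-2}\sum_{\text{blocks}}(\sum_{t\in B}|\lambda'\bm_t|)^4$. Replacing $\bm_t$ by $\bg_t(\btheta_0)$ up to $O_p(\|\what\btheta-\btheta_0\|)$ terms, this is bounded using the $(2+\delta)$ moments and mixing of Assumption~\ref{ass:moment_stronger}. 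This step is where the rate $b_T=o(T^{\delta/(2+2\delta)})$ enters: it ensures $\max_t|\lambda'\bm_t|\,\ell_T/\sqrt T\to0$ in probability, since $\max_t\|\bg_t\|=O_p(T^{1/(2+\delta)})$, while $m_T/\ell_T\to0$ remains possible. The Lindeberg CLT for independent triangular arrays then gives conditional convergence to $N(0,\lambda'\bOmega_0\lambda)$ in probability, completing the argument.
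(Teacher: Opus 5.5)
Your architecture is essentially the paper's: P\'olya, Cram\'er--Wold, an $m$-dependent coupling that keeps the exact two-point marginal, then Shao's big-block/small-block argument. As written, however, it proves the theorem only under a strictly stronger bandwidth condition. Your claim that $b_T=o(T^{\delta/(2+2\delta)})$ makes the Lindeberg step work is an arithmetic error. With only $\max_t\|\bg_t(\btheta_0)\|=O_p(T^{1/(2+\delta)})$ available, the requirement $\ell_T\max_t|\lambda'(\what{\bg}_t-\overline{\what{\bg}})|/\sqrt{T}\to0$ needs $\ell_T=o(T^{1/2-1/(2+\delta)})=o(T^{\delta/(4+2\delta)})$. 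Together with $m_T=o(\ell_T)$, this means $b_T\sqrt{\log T}=o(T^{\delta/(4+2\delta)})$. Since $\delta/(4+2\delta)<\delta/(2+2\delta)$ (for $\delta=2$: $1/4$ versus $1/3$), a bandwidth such as $b_T=T^{0.3}$ is allowed by the theorem but not covered by your proof. Your fourth-moment bound does not rescue this. After H\"older it is $C\ell_T^3T^{-2}\sum_t Y_t^4=O_p(\ell_T^3/T)$ with $Y_t=\lambda'\bg_t(\btheta_0)$, which gives at best $m_T=o(T^{1/3})$, short of $T^{\delta/(2+2\delta)}$ whenever $\delta>2$. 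The Lyapunov condition must be verified in the $(2+\delta)$th moment. The paper applies Rosenthal's inequality, conditionally on the data, to the residue classes of each large block; these indices are more than $m$ apart and hence conditionally independent. For $U_B=\sum_{t\in B}\bar\xi_t^*Y_t$ this gives $(\E^*|U_B|^{2+\delta})^{1/(2+\delta)}\le Cm^{1/2}(\sum_{t\in B}Y_t^2)^{1/2}$, so the Lyapunov ratio is $O_p(m^{(1+\delta)/2}T^{-\delta/4})$, which vanishes exactly when $m=o(T^{\delta/(2+2\delta)})$. Even the crude bound $\E^*|U_B|^{2+\delta}\le C\ell_T^{1+\delta}\sum_{t\in B}|Y_t|^{2+\delta}$ reaches this rate if $\ell_T$ is taken only slightly larger than $m$.

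The second, smaller loss is the $\sqrt{\log T}$ inflation of your dependence range. You control the coupling error by a conditional $L^1$ triangle inequality applied directly to the estimated, recentered scores, so you need $\Pro(\xi_t\neq\bar\xi_t)=o(T^{-1/2})$. That forces $m_T\asymp b_T\sqrt{\log T}$, and then even the sharp Lyapunov bound requires $b_T\sqrt{\log T}=o(T^{\delta/(2+2\delta)})$. The hypothesis does not give this; take $b_T=T^{\delta/(2+2\delta)}/\log\log T$. The paper avoids any rate requirement on the coupling by carrying it out for the population-score sum $T^{-1/2}\sum_t\xi_t^*\bg_t(\btheta_0)$ in conditional $L^2$. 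Because the multipliers are independent of the data, $\E^*\{(\xi_t^*-\bar\xi_t^*)(\xi_s^*-\bar\xi_s^*)\}$ is deterministic and bounded by $\eta_M=\sup_t\E^*|\xi_t^*-\bar\xi_t^*|^2$. Hence $\E\,\E^*|T^{-1/2}\sum_t(\xi_t^*-\bar\xi_t^*)Y_t|^2\le\eta_M\sum_h|\E(Y_tY_{t-h})|$, and the sum is finite by Davydov's inequality. Only $\eta_M\to0$ is then needed, so you can take truncation radius $Mb_T$ with $M=R_T\to\infty$ arbitrarily slowly and $R_Tb_T=o(T^{\delta/(2+2\delta)})$. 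The price is that estimation and recentering must be handled separately afterwards. The paper writes $\what{\bg}_t=\bg_t(\btheta_0)+\bD_t(\overline{\btheta}_t)(\what{\btheta}-\btheta_0)$ and bounds the two remainders $\{T^{-1}\sum_t\xi_t^*\bD_t(\overline{\btheta}_t)\}\sqrt{T}(\what{\btheta}-\btheta_0)$ and $\overline{\what{\bg}}\,T^{-1/2}\sum_t\xi_t^*$. Both have conditional second moments $O_p(b_T/T)$, using only the row-sum bound $\sum_s|K_p(\rho_{T,t-s})|=O(b_T)$. With these two repairs your argument goes through under the stated rate.
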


Theorem~\ref{thm:score_boot_consistent} is stated at the score level and is the key bootstrap result. 
For asymptotically linear estimators, it transfers directly through the corresponding linear representation, yielding the following result.

\begin{cor}[Gaussian approximation for estimators]
Suppose that Assumptions~\ref{ass:moment_weakdepend}--\ref{ass:moment_stronger} hold, $\operatorname{rank}(\bB_0)=k$, 
$b_T/T^{\delta/(2+2\delta)}\to0$, and $\what{\bB}\CP\bB_0$.
Then 
    $\sqrt{T}(\what{\btheta}-\btheta_0)
    \CD N(\bzero,\bV_{\theta})$, 
    $\sqrt{T}(\what{\btheta}^*-\what{\btheta})
    \CDs
    N(\bzero,\bV_{\theta})$ 
    in probability, 
and
\begin{align}
    \sup_{\bx\in\bbR^k}
    \left|
    \Pro\left(\sqrt{T}(\what{\btheta}-\btheta_0)\leq\bx\right)
    -
    \Pro^*\left(\sqrt{T}(\what{\btheta}^*-\what{\btheta})\leq\bx\right)
    \right|
    \CP0.
    \label{eq:theta_bootstrap_cdf_validity}
\end{align}
\label{cor:theta_hat_boot_consistent}
\end{cor}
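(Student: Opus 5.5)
The plan is to transfer Theorem~\ref{thm:score_boot_consistent} through the linear maps $\bB_0$ and $\what{\bB}$, using the continuous mapping theorem and Slutsky-type arguments in both the original and the bootstrap worlds. The uniform (Kolmogorov) statement then follows from Pólya's theorem, since the common limit is continuous.

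The first claim is immediate. By Assumption~\ref{ass:moment_weakdepend}(i), $\sqrt{T}(\what{\btheta}-\btheta_0)=\bB_0 T^{-1/2}\sum_t\bg_t(\btheta_0)+o_p(1)$. Theorem~\ref{thm:score_boot_consistent} (or \eqref{eq:score_clt} directly) gives $T^{-1/2}\sum_t\bg_t(\btheta_0)\CD N(\bzero,\bOmega_0)$. Continuous mapping and Slutsky then yield $N(\bzero,\bB_0\bOmega_0\bB_0')=N(\bzero,\bV_\theta)$.

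For the bootstrap claim, use the exact representation \eqref{eq:bootstrap_theta_update}:
\begin{align}
\sqrt{T}(\what{\btheta}^*-\what{\btheta})=\what{\bB}\,\bs_T^*+(\what{\bB}-\bB_0)\bs_T^*\ \text{rewritten as}\ \bB_0\bs_T^*+(\what{\bB}-\bB_0)\bs_T^*,
\end{align}
where $\bs_T^*=T^{-1/2}\sum_t\what{\bg}_t^*$. By Theorem~\ref{thm:score_boot_consistent}, $\bs_T^*\CDs N(\bzero,\bOmega_0)$ in probability, so the conditional continuous mapping theorem gives $\bB_0\bs_T^*\CDs N(\bzero,\bV_\theta)$ in probability. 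For the remainder, note $\E^*\|\bs_T^*\|^2=\tr(\what{\bOmega}_p)=O_p(1)$ by the matching identity \eqref{eq:exact_hac_match} and Theorem~\ref{thm:matched_hac_consistency}; the bandwidth rate $b_T/T^{\delta/(2+2\delta)}\to0$ implies $b_T/T^{1/2}\to0$ because $\delta/(2+2\delta)<1/2$. Conditional Markov then gives $\Pro^*(\|(\what{\bB}-\bB_0)\bs_T^*\|>\epsilon)\le\epsilon^{-2}\|\what{\bB}-\bB_0\|^2\tr(\what{\bOmega}_p)=o_p(1)$, since $\what{\bB}$ is data-measurable and $\what{\bB}\CP\bB_0$. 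A conditional Slutsky lemma, applied through bounded-Lipschitz metrics in probability, combines the two pieces.

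The uniform bound \eqref{eq:theta_bootstrap_cdf_validity} follows by the triangle inequality through the limit CDF $\Psi$ of $N(\bzero,\bV_\theta)$. The key input is the rank condition: $\operatorname{rank}(\bB_0)=k$ with $\bOmega_0>0$ makes $\bV_\theta$ positive definite, so $\Psi$ is continuous on $\bbR^k$ and Pólya's theorem upgrades weak convergence to uniform convergence of CDFs. This is immediate for the original statistic. For the bootstrap statistic, apply the subsequence characterization of convergence in probability: every subsequence has a further subsequence along which conditional weak convergence holds almost surely, hence, by Pólya, $\sup_\bx|\Pro^*(\cdot\le\bx)-\Psi(\bx)|\to0$ almost surely along it.

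The main obstacle is mainly bookkeeping: making the conditional Slutsky and Pólya steps rigorous "in probability." Handling $\what{\bB}$ via the second-moment identity avoids needing any rate on $\what{\bB}-\bB_0$.
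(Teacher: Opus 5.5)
Your proposal is correct and follows the paper's proof. The original-sample limit comes from the linear expansion and the score CLT, and the bootstrap limit from applying (conditional) Slutsky to $\what{\bB}\bs_T^*$ together with Theorem~\ref{thm:score_boot_consistent}; positive definiteness of $\bV_{\theta}$ follows from $\operatorname{rank}(\bB_0)=k$, and P\'olya plus the triangle inequality then give the uniform CDF bound.

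Your explicit bound on $(\what{\bB}-\bB_0)\bs_T^*$ via $\E^*\|\bs_T^*\|^2=\tr(\what{\bOmega}_p)$ and Theorem~\ref{thm:matched_hac_consistency} is valid but optional, since conditional tightness of $\bs_T^*$ already suffices for the paper's Slutsky step. Note also that your first display should read $\what{\bB}\bs_T^*=\bB_0\bs_T^*+(\what{\bB}-\bB_0)\bs_T^*$.
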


\subsection{Gaussian validity of the unrestricted and restricted studentized vectors}\label{subsec:wald_lm_validity}

The Gaussian approximation is the primary first-order result for inference, while the conventional Wald and LM chi-square laws follow by taking squared norms.

\begin{cor}[Unrestricted bootstrap validity]\label{cor:wald}
Assume that $H_0:\bA\btheta=\ba_0$ holds, where $\bA\in\bbR^{\ell\times k}$ has rank $\ell$, $\ba_0\in\bbR^\ell$, and $\ell$ is fixed.
Suppose that $\bA\bV_{\theta}\bA'$ is positive definite and $\what{\bV}_{\theta}\CP\bV_{\theta}$.
Under the conditions of Corollary~\ref{cor:theta_hat_boot_consistent}, let the bootstrap vector use either the fixed studentizer $\what{\bV}_{\theta}$ or a recomputed studentizer satisfying $\what{\bV}_{\theta}^*\CPs\bV_{\theta}$ in probability.
Then 
    $\what{\bz}_{W}
    \CD
    N(\bzero,\bI_\ell)$,
    $\what{\bz}_{W}^*
    \CDs
    N(\bzero,\bI_\ell)$ in probability, 
and
\begin{align}
    \sup_{\bx\in\bbR^\ell}
    \left|
    \Pro^*(\what{\bz}_{W}^*\leq\bx)-\Pro(\what{\bz}_{W}\leq\bx)
    \right|
    \CP0.
    \label{eq:wald_bootstrap_gaussian_validity}
\end{align}
Consequently, $W_T=\what{\bz}_{W}'\what{\bz}_{W}\CD\chi_\ell^2$, $W_T^*=\what{\bz}_{W}^{*'}\what{\bz}_{W}^*\CDs\chi_\ell^2$ in probability, and
\begin{align}
    \sup_{x \geq 0}
    \left|
    \Pro^*(W_T^*\leq x)-\Pro(W_T\leq x)
    \right|
    \CP0.
    \label{eq:wald_bootstrap_validity}
\end{align}
\end{cor}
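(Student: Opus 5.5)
The plan is to reduce everything to Corollary~\ref{cor:theta_hat_boot_consistent} by continuous mapping, handling the original-sample and bootstrap statistics in parallel. Under $H_0$, $\bA\btheta_0=\ba_0$, so $\sqrt{T}(\bA\what{\btheta}-\ba_0)=\bA\sqrt{T}(\what{\btheta}-\btheta_0)\CD N(\bzero,\bA\bV_\theta\bA')$ by Corollary~\ref{cor:theta_hat_boot_consistent}. Since $\bA\bV_\theta\bA'$ is positive definite and the symmetric inverse square root is continuous on the cone of positive definite matrices, $\what{\bV}_\theta\CP\bV_\theta$ gives $(\bA\what{\bV}_\theta\bA')^{-1/2}\CP(\bA\bV_\theta\bA')^{-1/2}$. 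Slutsky then yields $\what{\bz}_W\CD N(\bzero,\bI_\ell)$.

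For the bootstrap side, I will use the conditional version of the same argument. Corollary~\ref{cor:theta_hat_boot_consistent} gives $\sqrt{T}(\what{\btheta}^*-\what{\btheta})\CDs N(\bzero,\bV_\theta)$ in probability, hence $\sqrt{T}\bA(\what{\btheta}^*-\what{\btheta})\CDs N(\bzero,\bA\bV_\theta\bA')$ in probability. With fixed studentization the scaling matrix is data-measurable and converges in probability, so the conditional Slutsky step is immediate. With recomputed studentization, $\what{\bV}_\theta^*\CPs\bV_\theta$ in probability and the continuous mapping theorem for $\CPs$ give $(\bA\what{\bV}_\theta^*\bA')^{-1/2}\CPs(\bA\bV_\theta\bA')^{-1/2}$ in probability. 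A bootstrap Slutsky lemma, which I would justify by a subsequence argument (every subsequence has a further subsequence along which the conditional convergences hold almost surely, where ordinary Slutsky applies), then gives $\what{\bz}_W^*\CDs N(\bzero,\bI_\ell)$ in probability. Making this conditional Slutsky step rigorous is the only mildly delicate point, and the subsequence characterization of convergence in probability is what I would use first.

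For the uniform statement \eqref{eq:wald_bootstrap_gaussian_validity}, I will insert the standard normal CDF $\Phi_\ell$ and use the triangle inequality. Because the limit is continuous, P\'olya's theorem upgrades pointwise convergence to uniform convergence: $\sup_\bx|\Pro(\what{\bz}_W\le\bx)-\Phi_\ell(\bx)|\to0$. Applying it along almost-sure subsequences also gives $\sup_\bx|\Pro^*(\what{\bz}_W^*\le\bx)-\Phi_\ell(\bx)|\CP0$.

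The chi-square conclusions follow by the continuous map $\bz\mapsto\bz'\bz$, giving $W_T\CD\chi^2_\ell$ and $W_T^*\CDs\chi^2_\ell$ in probability. Since the $\chi^2_\ell$ CDF is continuous, the same P\'olya and triangle-inequality argument yields \eqref{eq:wald_bootstrap_validity}.
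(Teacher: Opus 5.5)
Your proposal is correct and follows essentially the same route as the paper, which phrases the Slutsky step by comparing $\what{\bz}_{W}$ and $\what{\bz}_{W}^*$ with infeasible versions studentized by $(\bA\bV_{\theta}\bA')^{-1/2}$, showing the differences are $o_p(1)$ and $o_{p^*}(1)$ in probability, and then applying P\'olya, the triangle inequality and continuous mapping exactly as you do. Your subsequence justification of the conditional Slutsky and conditional P\'olya steps fills in a detail that the paper leaves implicit.
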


For the restricted-score LM result, let
$\bG=(\bG_1,\bG_2)$ denote the probability limit of
$\wtilde{\bG}=(\wtilde{\bG}_1,\wtilde{\bG}_2)$ under $H_0$, and let
$\bG_{2\cdot1}
=
\bG_2
-
\bG_1
(\bG_1'\bOmega_0^{-1}\bG_1)^{-1}
\bG_1'\bOmega_0^{-1}\bG_2$.

\begin{cor}[Restricted bootstrap validity]\label{cor:lm}
Assume $H_0:\bA\btheta=\ba_0$ holds as in Corollary~\ref{cor:wald}.
Suppose that the conditions of Theorem~\ref{thm:score_boot_consistent} hold, $\wtilde{\bG}_j\CP\bG_j$ for $j\in\{1,2\}$, $\wtilde{\bOmega}_p\CP\bOmega_0$, $\bG_1'\bOmega_0^{-1}\bG_1$ is nonsingular, and $\bG_{2\cdot1}'\bOmega_0^{-1}\bG_{2\cdot1}$ is positive definite.
Suppose that the restricted score sum admits the following expansion with respect to $\bbbeta$ under the null:
\begin{align}
    \sqrt{T}\,\overline{\wtilde{\bg}}
    =
    \frac{1}{\sqrt{T}}\sum_{t=1}^T\bg_t(\btheta_0)
    +
    \bG_1\sqrt{T}(\wtilde{\boldsymbol{\eta}}_1-\boldsymbol{\eta}_{1,0})
    +
    o_p(1),
    \qquad
    \sqrt{T}(\wtilde{\boldsymbol{\eta}}_1-\boldsymbol{\eta}_{1,0})=O_p(1).
    \label{eq:lm_restricted_moment_linearization}
\end{align}
For recomputed studentization, suppose additionally that $\wtilde{\bOmega}_p^*\CPs\bOmega_0$ in probability.
Then the fixed-studentizer and recomputed-studentizer versions both satisfy 
    $\wtilde{\bz}_{LM}
    \CD
    N(\bzero,\bI_\ell)$,
    $\wtilde{\bz}_{LM}^*
    \CDs
    N(\bzero,\bI_\ell)$ 
    in probability, 
and
\begin{align}
    \sup_{\bx\in\bbR^\ell}
    \left|
    \Pro^*(\wtilde{\bz}_{LM}^*\leq\bx)-\Pro(\wtilde{\bz}_{LM}\leq\bx)
    \right|
    \CP0.
    \label{eq:lm_bootstrap_gaussian_validity}
\end{align}
Consequently, $LM_T=\wtilde{\bz}_{LM}'\wtilde{\bz}_{LM}\CD\chi_\ell^2$, $LM_T^*=\wtilde{\bz}_{LM}^{*'}\wtilde{\bz}_{LM}^*\CDs\chi_\ell^2$ in probability, and
\begin{align}
    \sup_{x \geq 0}
    \left|
    \Pro^*(LM_T^*\leq x)-\Pro(LM_T\leq x)
    \right|
    \CP0.
    \label{eq:lm_bootstrap_validity}
\end{align}
\end{cor}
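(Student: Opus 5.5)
The plan is to reduce both the original and the bootstrap restricted studentized vectors to fixed linear maps of score sums whose Gaussian limits are supplied by Theorem~\ref{thm:score_boot_consistent}. Then Slutsky arguments and Pólya's theorem give uniform convergence of distribution functions.

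\emph{Step 1: original-sample statistic.} Put $\bM=\bG_{2\cdot1}'\bOmega_0^{-1}$ and $\wtilde{\bM}=\wtilde{\bG}_{2\cdot1}'\wtilde{\bOmega}_p^{-1}$. The assumptions $\wtilde{\bG}_j\CP\bG_j$, $\wtilde{\bOmega}_p\CP\bOmega_0$ and nonsingularity of $\bG_1'\bOmega_0^{-1}\bG_1$ give, by continuous mapping, $\wtilde{\bG}_{2\cdot1}\CP\bG_{2\cdot1}$ and $\wtilde{\bM}\CP\bM$. Insert expansion \eqref{eq:lm_restricted_moment_linearization}. The key identity is $\bM\bG_1=\bzero$, which holds by the construction of $\bG_{2\cdot1}$. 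Because $\sqrt{T}(\wtilde{\bbbeta}_1-\bbbeta_{1,0})=O_p(1)$, we get
\begin{align}
\wtilde{\bM}\sqrt{T}\,\overline{\wtilde{\bg}}=\bM\frac{1}{\sqrt{T}}\sum_{t=1}^T\bg_t(\btheta_0)+o_p(1).
\end{align}
By Theorem~\ref{thm:score_boot_consistent} this converges to $N(\bzero,\bM\bOmega_0\bM')$, and $\bM\bOmega_0\bM'=\bG_{2\cdot1}'\bOmega_0^{-1}\bG_{2\cdot1}=:\bSigma$, which is positive definite. The studentizer satisfies $\wtilde{\bG}_{2\cdot1}'\wtilde{\bOmega}_p^{-1}\wtilde{\bG}_{2\cdot1}\CP\bSigma$. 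The symmetric inverse square root is continuous at positive definite matrices, so $\wtilde{\bz}_{LM}\CD N(\bzero,\bI_\ell)$. The sign flip is irrelevant because the limit is symmetric.

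\emph{Step 2: bootstrap statistic.} The projection $\wtilde{\bM}$ is data-measurable and held fixed across draws, so the bootstrap numerator is $\wtilde{\bM}T^{-1/2}\sum_t\xi_t^*(\wtilde{\bg}_t-\overline{\wtilde{\bg}})$. Here Theorem~\ref{thm:score_boot_consistent} does not apply verbatim, since it concerns $\what{\bg}_t^*$ built from unrestricted contributions. I would rerun its argument with $\wtilde{\bg}_t$ in place of $\what{\bg}_t$. Under $H_0$, $\wtilde{\btheta}$ is $\sqrt T$-consistent, and the proof of Theorem~\ref{thm:score_boot_consistent} uses the estimated contributions only through $\bg_t(\btheta_0)$ plus a mean-value remainder controlled by Assumption~\ref{ass:moment_weakdepend}(iii). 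That remainder bound is exactly what $\sqrt T$-consistency of the plugged-in parameter delivers, so the same conditional CLT yields $T^{-1/2}\sum_t\wtilde{\bg}_t^*\CDs N(\bzero,\bOmega_0)$ in probability. This transfer is the main obstacle. The cleanest route is to check that the negligibility step in the proof of Theorem~\ref{thm:score_boot_consistent}, namely bounding $\E^*\|T^{-1/2}\sum_t\xi_t^*(\wtilde{\bg}_t-\bg_t(\btheta_0)-\text{centering})\|^2$ by a matched-HAC quadratic form in the remainders, used nothing about $\what{\btheta}$ beyond $\sqrt T$-consistency. The conditional-variance identity $\Var^*=\wtilde{\bOmega}_p\CP\bOmega_0$ is assumed, which supports this. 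Given the conditional CLT and $\wtilde{\bM}\CP\bM$, the numerator converges conditionally to $N(\bzero,\bSigma)$ in probability, via a subsequence argument that turns convergence in probability into almost-sure convergence along subsequences.

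\emph{Step 3: studentizers and conclusion.} For fixed studentization the middle matrix is $\wtilde{\bM}\wtilde{\bOmega}_p\wtilde{\bM}'=\wtilde{\bG}_{2\cdot1}'\wtilde{\bOmega}_p^{-1}\wtilde{\bG}_{2\cdot1}\CP\bSigma$. For recomputed studentization, $\wtilde{\bOmega}_p^*\CPs\bOmega_0$ in probability gives $\wtilde{\bM}\wtilde{\bOmega}_p^*\wtilde{\bM}'\CPs\bSigma$. A conditional Slutsky lemma then gives $\wtilde{\bz}_{LM}^*\CDs N(\bzero,\bI_\ell)$ in probability. Both distribution functions converge to the same continuous limit, so Pólya's theorem, applied along subsequences, yields \eqref{eq:lm_bootstrap_gaussian_validity}. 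Continuous mapping of $\bx\mapsto\bx'\bx$ gives the $\chi^2_\ell$ limits, and Pólya's theorem for the continuous $\chi^2_\ell$ cdf gives \eqref{eq:lm_bootstrap_validity}.
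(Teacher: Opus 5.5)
Your proposal is correct and follows the paper's proof. The paper likewise reduces the original statistic using $\bM_0\bG_1=\bzero$, then writes the projected bootstrap numerator as $\bM_0T^{-1/2}\sum_t\xi_t^*\bg_t(\btheta_0)$ plus three remainders (a projection error, a mean-value remainder handled by the same derivative bound as in Theorem~\ref{thm:score_boot_consistent} using only root-$T$ consistency of $\wtilde{\btheta}$, and a centering term of order $O_p(b_T/T)$ in conditional $L^2$) before applying Slutsky and P\'olya. The only difference is cosmetic ordering: you first re-establish the conditional CLT for the unprojected restricted bootstrap score sum and then project, whereas the paper projects first and controls the three remainders directly.
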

For recomputed studentization, Corollaries~\ref{cor:wald} and \ref{cor:lm} impose high-level conditions requiring consistency of the recomputed studentizers under the bootstrap law. 
Section~\ref{subsec:reg_validity} provides primitive sufficient conditions for these requirements in the regression residual-bootstrap setting.


\subsection{Primitive validity of regression residual resampling}\label{subsec:reg_validity}

The preceding results concern direct score resampling, and we now give primitive conditions for the regression-specific residual bootstrap in Subsection~\ref{subsec:reg_residual}.
Under $H_0$, write
\begin{align}
    y_t=\bx_t'\btheta_0+u_t,
    \qquad
    \bA\btheta_0=\ba_0,
\end{align}
and retain the fixed matrix $\bC=(\bC_1,\bC_2)$ from Subsection~\ref{subsec:reg_residual}.
Throughout this subsection, $\bx_t=(1,\bz_t')'$ and the intercept is left unrestricted by $\bA$.
Write $\bQ_x=\E(\bx_t\bx_t')$ and $\bg_t=\bg_t(\btheta_0)=\bx_tu_t$.
For least squares, $\bB_0=\bQ_x^{-1}$, so $\bV_{\theta}=\bQ_x^{-1}\bOmega_0\bQ_x^{-1}$.
Let
\begin{align}
    \bG_1=-\bQ_x\bC_1,
    \qquad
    \bG_2=-\bQ_x\bC_2,
\end{align}
and define
\begin{align}
    \bG_{2\cdot1}
    =
    \bG_2
    -
    \bG_1(\bG_1'\bOmega_0^{-1}\bG_1)^{-1}\bG_1'\bOmega_0^{-1}\bG_2.
\end{align}
For this regression model,
\begin{align}
    \left(\bG_{2\cdot1}'\bOmega_0^{-1}\bG_{2\cdot1}\right)^{-1}
    =
    \bA\bV_{\theta}\bA'.
    \label{eq:reg_lm_wald_variance_relation}
\end{align}
Let $g_{t,j}$ denote the $j$th component of $\bg_t$, and let $\operatorname{cum}(g_{s,a},g_{t,b},g_{u,c},g_{v,d})$ denote the fourth-order cumulant of $(g_{s,a},g_{t,b},g_{u,c},g_{v,d})$.
The following conditions provide primitive sufficient conditions for the original-sample regression results.
\begin{ass}[Primitive regression conditions]
\label{ass:reg_additional}
\begin{itemize}
    \item[(i)] $\E(\bx_tu_t)=\bzero$ and $\bQ_x$ is positive definite.
    \item[(ii)] The process $\{(\bx_t,u_t)\}_{t\in\bbZ}$ is strictly stationary and strongly mixing with coefficients $\alpha(h)$, and for some $\delta\geq2$, $\E\|\bx_tu_t\|^{2+\delta}<\infty$ and $\sum_{h=1}^{\infty}\alpha(h)^{\delta/(2+\delta)}<\infty$.
    \item[(iii)] $\E\|\bx_t\|^4<\infty$ and $\bOmega_0$ is positive definite.
    \item[(iv)] For every $j_1,j_2,j_3,j_4\in\{1,\ldots,k\}$, 
        $\sum_{h_1,h_2,h_3\in\bbZ}
        \left|
        \operatorname{cum}(g_{0,j_1},g_{h_1,j_2},g_{h_2,j_3},g_{h_3,j_4})
        \right|
        <\infty$.
    \item[(v)] $b_T\to\infty$ and $b_T/T^{1/2}\to0$.
\end{itemize}
\end{ass}

For fixed $p\in(0,1)$, define the observed matched two-point HAC estimators
\begin{align}
    \what{\bOmega}_p
    &=
    {1\over T}\sum_{t=1}^T\sum_{s=1}^T
    K_p(\rho_{T,t-s})
    \what{\bg}_t\what{\bg}_s',
    \\
    \wtilde{\bOmega}_p
    &=
    {1\over T}\sum_{t=1}^T\sum_{s=1}^T
    K_p(\rho_{T,t-s})
    (\wtilde{\bg}_t-\overline{\wtilde{\bg}})
    (\wtilde{\bg}_s-\overline{\wtilde{\bg}})'.
    \label{eq:reg_observed_twopoint_hac}
\end{align}
We use the regression Wald and LM statistics defined in Subsection~\ref{subsec:reg_residual}.

The residual bootstrap requires the additional conditions below.
Condition (i) is needed for the conditional multiplier approximation, including fixed studentization, while conditions (ii)--(iii) are imposed only when the bootstrap HAC studentizer is recomputed.

\begin{ass}[Residual-bootstrap conditions]
\label{ass:reg_bootstrap}
\begin{itemize}
    \item[(i)] With $\delta$ as in Assumption~\ref{ass:reg_additional}(ii), $b_T/T^{\delta/(2+2\delta)}\to0$.
    \item[(ii)] For a recomputed bootstrap HAC studentizer, $\E\|\bx_t\|^{4+2\delta}+\E|u_t|^{4+2\delta}<\infty$.
    \item[(iii)] If $b_T^*$ denotes the bootstrap HAC bandwidth, then $b_T^*\to\infty$ and $b_T^*=O(b_T)$.
\end{itemize}
\end{ass}

Let $\what{\bOmega}_p^*$ denote the matched-HAC estimator recomputed from the unrestricted residual-bootstrap moment array $\{\what{\bg}_{t,\mathrm{res}}^*\}$ using bandwidth $b_T^*$, and set $\what{\bV}_{\theta}^*=\what{\bQ}_x^{-1}\what{\bOmega}_p^*\what{\bQ}_x^{-1}$.
Let $\wtilde{\bOmega}_p^*$ denote the matched-HAC estimator recomputed from the restricted residual-bootstrap moment array $\{\wtilde{\bg}_{t,\mathrm{res}}^*\}$ after subtracting its bootstrap sample average, using bandwidth $b_T^*$.
For the recomputed residual-bootstrap LM statistic, retain the original-sample projection $\wtilde{\bG}_{2\cdot1}'\wtilde{\bOmega}_p^{-1}$ and use the draw-specific projected covariance $\wtilde{\bG}_{2\cdot1}'\wtilde{\bOmega}_p^{-1}\wtilde{\bOmega}_p^*\wtilde{\bOmega}_p^{-1}\wtilde{\bG}_{2\cdot1}$ in \eqref{eq:lm_bootstrap_generic}.

\begin{prop}[Regression and residual-bootstrap Gaussian validity]
\label{prop:reg_validity}
Suppose Assumption~\ref{ass:reg_additional} holds. 
The studentized vectors below are the feasible statistics defined in Subsection~\ref{subsec:wald_lm_implementations}. 
\begin{itemize}
    \item[(i)] The unrestricted OLS estimator satisfies
    \begin{align}
        \sqrt T(\what{\btheta}-\btheta_0)
        =
        \bQ_x^{-1}{1\over\sqrt T}\sum_{t=1}^T\bg_t+o_p(1),
        \qquad
        \sqrt T(\what{\btheta}-\btheta_0)
        \CD
        N(\bzero,\bV_{\theta}).
        \label{eq:reg_theta_primitive_expansion}
    \end{align}
    Moreover,
    \begin{align}
        \what{\bOmega}_p&\CP\bOmega_0,
        \qquad
        \wtilde{\bOmega}_p\CP\bOmega_0,
        \qquad
        \what{\bV}_{\theta}\CP\bV_{\theta},
        \\
        \wtilde{\bG}_{2\cdot1}'\wtilde{\bOmega}_p^{-1}\wtilde{\bG}_{2\cdot1}
        &\CP
        \bG_{2\cdot1}'\bOmega_0^{-1}\bG_{2\cdot1}.
        \label{eq:reg_observed_hac_consistency}
    \end{align}
    Hence under $H_0$,
    \begin{align}
        \what{\bz}_{W}
        &\CD
        N(\bzero,\bI_\ell),
        \qquad
        \wtilde{\bz}_{LM}
        \CD
        N(\bzero,\bI_\ell),
        \label{eq:reg_observed_gaussian_limits}
    \end{align}
    and consequently $W_T\CD\chi_\ell^2$ and $LM_T\CD\chi_\ell^2$.
    \item[(ii)] If Assumption~\ref{ass:reg_bootstrap}(i) also holds, then the residual-bootstrap unrestricted and restricted studentized vectors with the observed matched-HAC studentizers held fixed satisfy
    \begin{align}
        \what{\bz}_{W}^*
        &\CDs
        N(\bzero,\bI_\ell),
        \qquad
        \wtilde{\bz}_{LM}^*
        \CDs
        N(\bzero,\bI_\ell)
        \label{eq:reg_boot_fixed_clt}
    \end{align}
    in probability.
    \item[(iii)] If Assumption~\ref{ass:reg_bootstrap}(ii)--(iii) also holds, then
    \begin{align}
        \what{\bOmega}_p^*&\CPs\bOmega_0,
        \qquad
        \wtilde{\bOmega}_p^*\CPs\bOmega_0,
        \qquad
        \what{\bV}_{\theta}^*\CPs\bV_{\theta},
        \\
        \wtilde{\bG}_{2\cdot1}'\wtilde{\bOmega}_p^{-1}\wtilde{\bOmega}_p^*\wtilde{\bOmega}_p^{-1}\wtilde{\bG}_{2\cdot1}
        &\CPs
        \bG_{2\cdot1}'\bOmega_0^{-1}\bG_{2\cdot1}.
        \label{eq:reg_boot_recomputed_hac_consistency}
    \end{align}
    Hence the recomputed residual-bootstrap studentized vectors also satisfy the conditional Gaussian limits in \eqref{eq:reg_boot_fixed_clt}.
    Consequently,
    \begin{align}
        \sup_{\bx\in\bbR^\ell}
        \left|
        \Pro^*(\what{\bz}_{W}^*\leq\bx)-\Pro(\what{\bz}_{W}\leq\bx)
        \right|
        &\CP0,
        \\
        \sup_{\bx\in\bbR^\ell}
        \left|
        \Pro^*(\wtilde{\bz}_{LM}^*\leq\bx)-\Pro(\wtilde{\bz}_{LM}\leq\bx)
        \right|
        &\CP0.
        \label{eq:reg_boot_cdf_validity}
    \end{align}
    In particular, $W_T^*\CDs\chi_\ell^2$ and $LM_T^*\CDs\chi_\ell^2$ in probability for both fixed and recomputed studentization.
\end{itemize}
\end{prop}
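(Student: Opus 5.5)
The plan is to reduce every claim in Proposition~\ref{prop:reg_validity} to the general results of Section~\ref{sec:asymptotics}. Everything is specialised to $\bg_t(\btheta)=\bx_t(y_t-\bx_t'\btheta)$, for which the Jacobian is the nonrandom-in-$\btheta$ matrix $-\what{\bQ}_x$.

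\textbf{Part (i).} Under Assumption~\ref{ass:reg_additional}(i)--(ii), $\what{\bQ}_x\CP\bQ_x$ by the ergodic theorem for strictly stationary mixing sequences. The identity $\sqrt T(\what{\btheta}-\btheta_0)=\what{\bQ}_x^{-1}T^{-1/2}\sum_t\bg_t$ is exact, and the score CLT holds by the mixing CLT under (ii). Together these give \eqref{eq:reg_theta_primitive_expansion} and verify Assumption~\ref{ass:moment_weakdepend}. Assumption~\ref{ass:moment_weakdepend}(iii) holds because $\E\|\bx_t\|^4<\infty$.

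For Assumption~\ref{ass:consistency_infHAC}, I invoke Lemma~\ref{lem:kernel_radem}. The lag-weight function is in $\cK_1$, and the summable fourth cumulants in (iv) together with mixing (which gives summable autocovariances) are exactly Andrews' (1991) Assumption~A. Hence $\bOmega_{p,T}\CP\bOmega_0$ when $b_T\to\infty$ and $b_T/T\to0$. Theorem~\ref{thm:matched_hac_consistency}, using $b_T/T^{1/2}\to0$, then yields $\what{\bOmega}_p\CP\bOmega_0$.

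For $\wtilde{\bOmega}_p$, the restricted residual satisfies
\[
\wtilde u_t=u_t-\bx_t'\bC_1(\wtilde{\bbbeta}_1-\bbbeta_{1,0}),
\]
with $\sqrt T(\wtilde{\bbbeta}_1-\bbbeta_{1,0})=O_p(1)$ under $H_0$ by restricted OLS algebra. The same argument as in the proof of Theorem~\ref{thm:matched_hac_consistency} then applies with $\wtilde{\btheta}$ in place of $\what{\btheta}$: it bounds the perturbation by $\sum_h|K_p(\rho_{T,h})|=O(b_T)$ times $O_p(T^{-1/2})$, which is $o_p(1)$.

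Consistency of $\what{\bV}_\theta$ and of the projected LM matrix follows by continuous mapping. The LM linearization \eqref{eq:lm_restricted_moment_linearization} is exact here: $\sqrt T\,\overline{\wtilde{\bg}}=T^{-1/2}\sum\bg_t+\wtilde{\bG}_1\sqrt T(\wtilde{\bbbeta}_1-\bbbeta_{1,0})$. Using $\wtilde{\bG}_{2\cdot1}'\wtilde{\bOmega}_p^{-1}\wtilde{\bG}_1=\bzero$, the numerator of $\wtilde{\bz}_{LM}$ is asymptotically $\bG_{2\cdot1}'\bOmega_0^{-1}T^{-1/2}\sum\bg_t$, which has variance $\bG_{2\cdot1}'\bOmega_0^{-1}\bG_{2\cdot1}$. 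This gives both Gaussian limits in \eqref{eq:reg_observed_gaussian_limits}; identity \eqref{eq:reg_lm_wald_variance_relation} is a by-product of the same projection algebra.

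\textbf{Part (ii).} The Wald numerator coincides exactly with the direct score bootstrap by \eqref{eq:reg_resid_wald_numerator}. Theorem~\ref{thm:score_boot_consistent} and Corollary~\ref{cor:wald} therefore apply directly, with Assumption~\ref{ass:moment_stronger} supplied by Assumption~\ref{ass:reg_additional}(ii). For LM I use decomposition \eqref{eq:reg_resid_lm_numerator}. Its first term is handled by Theorem~\ref{thm:score_boot_consistent} applied to the restricted centred array; the proof of that theorem only uses the array through HAC consistency and negligibility of estimation error, both of which were verified in part (i). For the second term, $\overline{\wtilde{\bg}}=O_p(T^{-1/2})$ under $H_0$ and
\[
\Var^*\Bigl(T^{-1/2}\sum_t\xi_t^*\Bigr)=T^{-1}\sum_{t,s}K_p(\rho_{T,t-s})=O(b_T),
\]
so the second term is $O_{p^*}(b_T^{1/2}T^{-1/2})=o_{p^*}(1)$.

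\textbf{Part (iii)} is the main obstacle. I would write
\[
\what{\bg}^*_{t,\mathrm{res}}=\xi_t^*\what{\bg}_t-\bx_t\bx_t'(\what{\btheta}^*-\what{\btheta}).
\]
The correction term has size $O_{p^*}(T^{-1/2})$, so its contribution to the recomputed HAC is $O(b_T^*/T^{1/2})=o(1)$ by Assumption~\ref{ass:reg_bootstrap}(iii) and $b_T/T^{1/2}\to0$. The main term is $T^{-1}\sum_{t,s}K^*_{ts}\,\xi_t^*\xi_s^*\,\what{\bg}_t\what{\bg}_s'$, where $K^*_{ts}$ uses bandwidth $b_T^*$. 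Its conditional mean is $T^{-1}\sum K^*_{ts}K_{ts}\what{\bg}_t\what{\bg}_s'$; this product-kernel HAC, with effective kernel in $\cK_1$ and bandwidth of order $b_T^*$, converges to $\bOmega_0$ by the part (i) argument. The conditional variance requires fourth moments of the multipliers, which are bounded since the $\xi$'s are bounded, times sums of $\|\what{\bg}_t\|^2\|\what{\bg}_s\|^2$. This is where the $4+2\delta$ moments in Assumption~\ref{ass:reg_bootstrap}(ii) enter. I expect a bound of order $b_T^{*2}T^{-1}$ in probability, obtained through a finite-range (truncated-latent-correlation) approximation of $\xi^*$ so that the fourth-order multiplier cumulants are controlled. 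Showing that this bound vanishes and handling the non-finite-range copula dependence is the delicate step. The restricted version is identical after centring, using the same residual correction. Pólya's theorem together with the continuity of the Gaussian limit then gives the uniform CDF statements.
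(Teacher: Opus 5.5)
Your parts (i) and (ii) are sound, but part (iii) has a genuine gap at the conditional-variance step, and the route you sketch for it would not close it.

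\textbf{Parts (i) and (ii).} You quote \citet{Andrews1991}, Assumption~A, together with Lemma~\ref{lem:kernel_radem} to get $\bOmega_{p,T}\CP\bOmega_0$. You then feed the regression objects into Theorem~\ref{thm:matched_hac_consistency}, Theorem~\ref{thm:score_boot_consistent} and Corollary~\ref{cor:wald}. This is a legitimate shortcut for what the paper does by hand: a direct bias and cumulant-variance computation for $\bOmega_{p,T}$, and the $L^2$ replacement Lemma~\ref{lem:reg_hac_replacement} for the estimated scores.

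\textbf{Why your route for (iii) fails.} Suppose you bound the conditional variance of $T^{-1}\sum_{t,s}K^*_{ts}\xi_t^*\xi_s^*\what{\bg}_t\what{\bg}_s'$ using only bounded multiplier moments, a finite-range approximation of range $m_T\asymp b_T$, and sums of products of $\|\what{\bg}_t\|$'s. The sum then runs over roughly $T\,b_T^{*2}(m_T+b_T^*)$ quadruples $(t,s,u,v)$ with weights of order one. The bound you obtain is therefore of order $b_T^{*2}(b_T+b_T^*)/T$, not $b_T^{*2}/T$. With $b_T^*\asymp b_T$ this is $b_T^3/T$. That need not vanish under $b_T=o(T^{1/2})$, nor under Assumption~\ref{ass:reg_bootstrap}(i) when $\delta>2$.

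\textbf{The missing idea.} You must exploit the serial dependence of the data as well as that of the multipliers. The paper does this in three steps:
\begin{enumerate}
\item It compares the residual-bootstrap arrays directly with the ideal array $\xi_t^*\bg_t$ built from the true scores. This uses Lemma~\ref{lem:reg_hac_replacement} with bandwidth $b_T^*$ and the $O_{p^*}(T^{-1})$ array-equivalence bounds. This step, not the variance step, is where the paper invokes Assumption~\ref{ass:reg_bootstrap}(ii).
\item It proves in Lemma~\ref{lem:multiplier_product_covariance} that $|\operatorname{Cov}^*(\xi_t^*\xi_s^*,\xi_u^*\xi_v^*)|\leq C\exp(-c_0d^2/b_T^2)$, where $d$ is the minimal distance between $\{t,s\}$ and $\{u,v\}$. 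The proof uses the truncated Brownian moving-average representation, and the bound gives row sums of order $b_T$.
\item It takes the unconditional expectation of the conditional variance and splits $\E(g_tg_sg_ug_v)$ into the fourth cumulant plus three products of autocovariances. Summability of the cumulants (Assumption~\ref{ass:reg_additional}(iv)) and of the autocovariances collapses the quadruple sum to $O\{(1+b_T+b_T^*)/T\}$. Chebyshev's inequality then gives $\bQ_T^*\CPs\bOmega_0$.
\end{enumerate}
Keeping $\what{\bg}_t$ in the main term, as you propose, obstructs step~3. Since $\what{\btheta}$ depends on the whole sample, $\E(\what g_t\what g_s\what g_u\what g_v)$ has no usable cumulant structure.

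\textbf{A smaller point.} The conditional mean has lag weight $K_p\{\exp[-(h/b_T^*)^2]\}K_p(\rho_{T,h})$. This is of the form $k(h/S_T)$ for a fixed $k\in\cK_1$ only if $b_T^*/b_T$ is constant, so Andrews' result cannot be quoted as is. The paper instead uses pointwise convergence of the weights to one, dominated convergence for the bias, and the row-sum bound $O(\min\{b_T,b_T^*\})$ in the cumulant-variance calculation.
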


Proposition~\ref{prop:reg_validity} gives primitive Gaussian validity for the same fixed $\ell$ linear restrictions considered in the general theory, with the Wald and LM chi-square limits following as consequences. 
The proof is given in Appendix~\ref{app:reg_primitive}.

\subsection{Bandwidth choice}\label{sec:bandwidth}

The validity results above impose rate conditions on the bandwidth $b_T$.
For the Rademacher case, the arcsine kernel in \eqref{eq:arcsine_kernel} has a first-order cusp at the origin, as shown in \eqref{eq:kernel_cusp}, suggesting the standard first-order HAC bandwidth rate $b_T=\lceil c_bT^{1/3}\rceil$. 
This choice is permitted under Assumption~\ref{ass:moment_stronger} when $\delta>2$. 
In the simulations, we set $c_b=1$.
Sensitivity to the dependence window can be examined by varying the bandwidth scale $c_b$.

For a general two-point law, the same latent Gaussian correlation profile induces the lag-weight function $x\mapsto K_p\{\exp(-x^2)\}$, with the transformation $K_p$ depending on the chosen marginal distribution.
In particular, the simulations use this common latent correlation profile for both the Rademacher and Mammen DWB procedures.

\section{Monte Carlo Experiments}\label{sec:mc}

This section examines the finite-sample performance of the proposed dependent two-point multiplier procedures in nonlinear GMM and linear regression models. 
To keep the main comparison focused, we report results for a demanding design combining serial dependence, deterministic heteroskedasticity, and skewed innovations. 
Results for alternative innovation distributions, together with two-step GMM results and a comparison with conventional LM and Wald bootstrap tests, are reported in Appendix~\ref{app:additional_mc}.

In both experiments, sample sizes are $T\in\{100,200,400\}$. 
For each design, $R=5{,}000$ Monte Carlo replications are conducted with $B=499$ bootstrap draws per replication. 
All reported tests concern a single restriction, $\ell=1$, and use either the restricted $z$-statistic $\wtilde z_{LM}$ or the unrestricted $z$-statistic $\what z_W$.
We use the signed $z$-statistics as the primary finite-sample diagnostic because they preserve the direction of departures from the null and allow the lower and upper tails of the sampling approximation to be assessed separately.
The main results therefore report equal-tail two-sided $z$-tests: rejection occurs below the $\alpha/2$ critical value or above the $1-\alpha/2$ critical value, using the standard Normal reference distribution for asymptotic inference and the corresponding conditional bootstrap distribution for bootstrap inference.
For $\ell=1$, the conventional LM and Wald statistics are obtained by squaring $\wtilde z_{LM}$ and $\what z_W$, respectively, and hence fold the two directions together.
For comparison, the corresponding conventional LM and Wald bootstrap tests are reported in Appendix~\ref{app:symmetric_tests}.

The common bandwidth is $b_T=\lceil T^{1/3}\rceil$, which is also used as the moving-block-bootstrap block length. 
For the dependent two-point multipliers, the latent Gaussian correlation at lag $h\geq0$ is $\rho_{T,h}=\exp\{-(h/b_T)^2\}$. 
The asymptotic comparisons use the Bartlett HAC kernel and the kernels induced by the dependent Rademacher and Mammen constructions. 
The bootstrap comparisons use the dependent Rademacher and Mammen procedures, a Gaussian DWB with Bartlett-correlated Gaussian multipliers, and a moving-block-bootstrap benchmark. 

\subsection{Nonlinear GMM}\label{subsec:mc_nonlinear_gmm}

The first experiment considers the nonlinear model
\begin{align}
    y_t
    &=
    \theta_{1,0}+\exp(\theta_{2,0}x_t)
    +
    \sigma_t u_t,
    \qquad t=1,\ldots,T,
    \label{eq:mc_nonlinear_model}
\end{align}
where $\sigma_t^2=1/2$ for $t\leq T/2$ and $\sigma_t^2=3/2$ for $t>T/2$. 
We set $\theta_{1,0}=0$ throughout and consider the test $H_0:\theta_{2,0}=0.5$, with $\theta_{2,0}=0.5$ for size and $\theta_{2,0}=0.7$ for power.
The regressor follows
\begin{align}
    x_t
    &=
    \rho_x x_{t-1}
    +
    \sqrt{1-\rho_x^2}\,v_t,
    \qquad
    v_t\sim N(0,1),
    \label{eq:mc_x_process}
\end{align}
with $\rho_x=0.8$. 
The serially dependent component $u_t$ is generated as
\begin{align}
    u_t
    &=
    \rho_u u_{t-1}
    +
    \sqrt{1-\rho_u^2}\,\varepsilon_t,
    \label{eq:mc_u_process}
\end{align}
with $\rho_u=0.5$. 
The autoregressive processes $x_t$ and $u_t$ are initialized at zero, and the first 300 observations are discarded before retaining the sample of size $T$. 
We consider standard Normal, standardized Student-$t_5$, and standardized centered $\chi_1^2$ innovations $\varepsilon_t$. 
The main text reports the centered $\chi_1^2$ results, while the Normal and Student-$t_5$ results are given in Appendix~\ref{app:additional_mc}. 

Let $\bz_t=(1,x_t,x_{t-1})'$ and define the moment contribution $\bg_t(\btheta)=\bz_t\{y_t-\theta_1-\exp(\theta_2x_t)\}$. 
There are three moment conditions and two parameters. 
We consider equal-tail two-sided $z$-tests of $H_0:\theta_2=0.5$ versus $H_1:\theta_2\neq0.5$.
The one-step GMM estimator uses the identity weighting matrix.
For the restricted $z$-statistic $\wtilde{z}_{LM}$, $\theta_2$ is fixed at its null value and the nuisance parameter $\theta_1$ is estimated subject to this restriction.
The unrestricted $z$-statistic $\what{z}_{W}$ is based on the unrestricted one-step GMM estimator.

For the bootstrap version of the restricted $z$-statistic $\wtilde{z}_{LM}$, centered restricted moment contributions are resampled and the HAC studentizer is recomputed in each bootstrap draw. 
For the bootstrap version of the unrestricted $z$-statistic $\what{z}_{W}$, the corresponding unrestricted centered moment contributions are used. 
The dependent Rademacher and Mammen procedures use their matched-HAC kernels, while the Gaussian DWB uses Bartlett-correlated Gaussian multipliers and Bartlett studentization. 
As the moving-block-bootstrap (MBB) benchmark, we use the Hall--Horowitz recentered procedure  \citep{HallHorowitz1996}. 
Throughout the nonlinear GMM experiment and the empirical application, MBB refers to this recentered version.  
The recentering subtracts the conditional bootstrap mean induced by unequal representation of observations near the sample boundaries, thereby imposing the bootstrap analogue of the moment condition. 
Further details are reported in Appendix~\ref{app:mbb_recentering}. 

\begin{table}[!ht]
\centering
\caption{Empirical size and power of equal-tail two-sided $z$-tests based on the GMM estimator under serially dependent heteroskedastic $\chi_1^2$ innovations}
\label{tab:nlgmm-main-chi2}
\begin{threeparttable}
\small
\setlength{\tabcolsep}{5.0pt}
\renewcommand{\arraystretch}{1.05}

\begin{tabular}{
l
@{\hspace{1.0em}}
S[table-format=2.2]
S[table-format=2.2]
S[table-format=2.2]
@{\hspace{1.5em}}
S[table-format=2.2]
S[table-format=2.2]
S[table-format=2.2]
}
\toprule
Method
& \multicolumn{3}{c}{Size (\%)}
& \multicolumn{3}{c}{Power (\%)} \\
\cmidrule(lr){2-4}
\cmidrule(lr){5-7}
\multicolumn{1}{r}{$T$}
& {100} & {200} & {400}
& {100} & {200} & {400} \\
\midrule

\multicolumn{7}{l}{\textit{Panel A: Restricted $z$-statistic, $\wtilde{z}_{LM}$}} \\

\multicolumn{7}{l}{\textit{Asymptotic}} \\
\quad ASY Bartlett
& 9.04 & 8.58 & 6.46
& 47.42 & 67.72 & 89.22 \\
\quad ASY Rad.\ kernel
& 9.08 & 8.60 & 6.62
& 47.34 & 67.02 & 88.92 \\
\quad ASY Mam.\ kernel
& 9.12 & 8.68 & 6.56
& 47.58 & 67.26 & 89.08 \\

\multicolumn{7}{l}{\textit{Bootstrap}} \\
\quad Dep.\ Rademacher
& 6.12 & 6.68 & 5.50
& 40.38 & 63.08 & 87.14 \\
\quad Dep.\ Mammen
& 11.16 & 11.28 & 8.08
& 52.42 & 72.26 & 90.58 \\
\quad Gaussian DWB
& 6.68 & 7.26 & 5.66
& 42.36 & 65.12 & 88.06 \\
\quad MBB
& 10.18 & 10.04 & 7.20
& 50.60 & 70.38 & 89.70 \\

\midrule

\multicolumn{7}{l}{\textit{Panel B: Unrestricted $z$-statistic, }$\what{z}_{W}$} \\

\multicolumn{7}{l}{\textit{Asymptotic}} \\
\quad ASY Bartlett
& 11.72 & 9.82 & 7.12
& 64.44 & 80.94 & 94.66 \\
\quad ASY Rad.\ kernel
& 11.82 & 9.96 & 7.18
& 64.64 & 80.80 & 94.62 \\
\quad ASY Mam.\ kernel
& 11.92 & 9.96 & 7.18
& 64.66 & 80.84 & 94.66 \\

\multicolumn{7}{l}{\textit{Bootstrap}} \\
\quad Dep.\ Rademacher
& 9.28 & 8.08 & 6.18
& 60.16 & 78.22 & 93.70 \\
\quad Dep.\ Mammen
& 12.44 & 10.48 & 7.56
& 63.34 & 79.78 & 93.76 \\
\quad Gaussian DWB
& 10.06 & 8.76 & 6.54
& 61.72 & 79.34 & 94.04 \\
\quad MBB
& 9.76 & 8.90 & 6.60
& 58.58 & 76.72 & 92.10 \\

\bottomrule
\end{tabular}

\begin{tablenotes}[flushleft]
\footnotesize
\item \textit{Notes:} All reported tests are equal-tail two-sided $z$-tests for $H_0:\theta_2=0.5$ at the nominal $\alpha=0.05$ level based on the one-step GMM estimator. The size is the rejection frequency when $\theta_2=0.5$, while the power is the rejection frequency when $\theta_2=0.7$. Critical values are given by the $\alpha/2$ and $1-\alpha/2$ quantiles of the respective reference distributions. ASY Bartlett, ASY Rad.\ kernel, and ASY Mam.\ kernel denote asymptotic procedures based on the Bartlett kernel and the kernels induced by the dependent Rademacher and Mammen constructions, respectively. Dep.\ Rademacher, Dep.\ Mammen, and Gaussian DWB denote the corresponding dependent wild bootstrap procedures, and MBB denotes the moving-block bootstrap. 
\end{tablenotes}

\end{threeparttable}
\end{table}

The corresponding results are summarized in Table~\ref{tab:nlgmm-main-chi2}, which shows a clear advantage of the bootstrap approximation for the restricted $z$-statistic $\wtilde{z}_{LM}$. 
At $T=100$, the three asymptotic procedures have rejection frequencies of about $9\%$, whereas Dep.\ Rademacher and Gaussian DWB reduce them to $6.12\%$ and $6.68\%$, respectively. 
The improvement persists as the sample size increases, with rejection frequencies of $5.50\%$ and $5.66\%$ at $T=400$. 
In contrast, Dep.\ Mammen and MBB remain appreciably oversized, particularly at $T=100$ and $T=200$. 

The three asymptotic procedures give almost identical results. 
Hence, the substantial differences among the bootstrap procedures cannot be attributed primarily to differences between the Bartlett and matched two-point HAC kernels. 

Size distortions are generally larger for the unrestricted $z$-statistic $\what{z}_{W}$ than for the restricted $z$-statistic $\wtilde{z}_{LM}$. 
Nevertheless, among the bootstrap procedures, Dep.\ Rademacher generally provides the most successful size control. 

Raw rejection frequencies under the alternative differ across procedures, but these differences largely track the corresponding differences in size distortion. 
We therefore do not interpret the raw power rankings independently of size.

The results for Normal and Student-$t_5$ innovations reported in Appendix~\ref{app:additional_mc} are qualitatively very similar:
Dep.\ Rademacher continues to provide the most accurate size control for the restricted $z$-statistic, while size distortions are generally larger for the unrestricted $z$-statistic.

The corresponding two-step GMM results are reported in Appendix~\ref{app:twostep_mc} and give qualitatively very similar conclusions: size distortions are generally smaller for the restricted $z$-statistic $\wtilde z_{LM}$ than for the unrestricted $z$-statistic $\what z_W$, and Dep.\ Rademacher again provides the most successful size control for the restricted $z$-statistic. 

\subsection{Linear regression}\label{subsec:mc_regression_lm}

The second experiment considers the linear regression
\begin{align}
    y_t
    &=
    \alpha + \theta x_t + \sigma_t u_t,
    \qquad
    t=1,\dots,T.
    \label{eq:mc_regression}
\end{align}
We set $\alpha=0$. The regressor $x_t$ and error process $u_t$ are generated by \eqref{eq:mc_x_process} and \eqref{eq:mc_u_process} as in the nonlinear GMM experiment with standardized $\chi_1^2$ innovations. 

A two-sided test, $H_0:\theta=0$ versus $H_1:\theta\neq0$, is implemented. Empirical size is evaluated at $\theta=0$, while power is evaluated at $\theta=0.20$. 

For the restricted $z$-statistic $\wtilde{z}_{LM}$, the regression is estimated under $H_0$ and the corresponding score is constructed using the restricted residuals. 
The unrestricted $z$-statistic $\what{z}_{W}$ is based on the unrestricted OLS estimator and its HAC covariance. 
The bootstrap procedures use residual resampling, with the HAC studentizer recomputed in each bootstrap draw. 
The dependent Rademacher and Mammen procedures use their respective matched-HAC kernels, while Gaussian DWB and MBB use Bartlett studentization. 
Results for Normal and Student-$t_5$ innovations are reported in Appendix~\ref{app:additional_mc}. 


\begin{table}[!ht]
\centering
\caption{Empirical size and power of equal-tail two-sided $z$-tests based on OLS estimator under serially dependent heteroskedastic $\chi_1^2$ innovations}
\label{tab:reg-main-chi2}
\begin{threeparttable}
\small
\setlength{\tabcolsep}{5.0pt}
\renewcommand{\arraystretch}{1.05}

\begin{tabular}{
l
@{\hspace{1.0em}}
S[table-format=2.2]
S[table-format=2.2]
S[table-format=2.2]
@{\hspace{1.5em}}
S[table-format=2.2]
S[table-format=2.2]
S[table-format=2.2]
}
\toprule
Method
& \multicolumn{3}{c}{Size (\%)}
& \multicolumn{3}{c}{Power (\%)} \\
\cmidrule(lr){2-4}
\cmidrule(lr){5-7}
\multicolumn{1}{r}{$T$}
& {100} & {200} & {400}
& {100} & {200} & {400} \\
\midrule

\multicolumn{7}{l}{\textit{Panel A: Restricted $z$-statistic, $\wtilde{z}_{LM}$}} \\

\multicolumn{7}{l}{\textit{Asymptotic}} \\
\quad ASY Bartlett
& 9.42 & 7.82 & 7.44
& 37.64 & 54.28 & 78.96 \\
\quad ASY Rad.\ kernel
& 9.30 & 7.56 & 7.52
& 37.46 & 54.18 & 78.60 \\
\quad ASY Mam.\ kernel
& 9.38 & 7.60 & 7.56
& 37.72 & 54.32 & 78.72 \\

\multicolumn{7}{l}{\textit{Bootstrap}} \\
\quad Dep.\ Rademacher
& 5.72 & 5.34 & 5.60
& 28.92 & 46.54 & 73.32 \\
\quad Dep.\ Mammen
& 11.76 & 10.50 & 10.32
& 41.12 & 56.68 & 79.86 \\
\quad Gaussian DWB
& 7.26 & 6.14 & 6.34
& 32.56 & 49.42 & 76.04 \\
\quad MBB
& 6.32 & 5.64 & 6.14
& 30.46 & 48.02 & 75.22 \\

\midrule

\multicolumn{7}{l}{\textit{Panel B: Unrestricted $z$-statistic, }$\what{z}_{W}$} \\

\multicolumn{7}{l}{\textit{Asymptotic}} \\
\quad ASY Bartlett
& 11.88 & 9.32 & 8.40
& 42.42 & 58.08 & 80.62 \\
\quad ASY Rad.\ kernel
& 12.46 & 9.40 & 8.36
& 42.64 & 57.90 & 80.60 \\
\quad ASY Mam.\ kernel
& 12.34 & 9.42 & 8.42
& 42.74 & 58.14 & 80.68 \\

\multicolumn{7}{l}{\textit{Bootstrap}} \\
\quad Dep.\ Rademacher
& 7.88 & 6.16 & 6.42
& 32.94 & 49.40 & 75.00 \\
\quad Dep.\ Mammen
& 11.12 & 9.00 & 8.70
& 38.38 & 53.36 & 76.62 \\
\quad Gaussian DWB
& 8.58 & 7.18 & 6.92
& 35.96 & 52.06 & 77.18 \\
\quad MBB
& 6.44 & 5.70 & 6.18
& 31.64 & 48.88 & 75.62 \\

\bottomrule
\end{tabular}

\begin{tablenotes}[flushleft]
\footnotesize
\item \textit{Notes:} All reported tests are equal-tail two-sided $z$-tests for $H_0:\theta=0$ at the nominal $\alpha=0.05$ level. The size is the rejection frequency when $\theta=0$, while the power is the rejection frequency when $\theta=0.20$. Critical values are given by the $\alpha/2$ and $1-\alpha/2$ quantiles of the respective reference distributions. ASY Bartlett, ASY Rad.\ kernel, and ASY Mam.\ kernel denote asymptotic inference based on the Bartlett kernel and the kernels induced by the dependent Rademacher and Mammen constructions, respectively. Dep.\ Rademacher, Dep.\ Mammen, and Gaussian DWB denote the corresponding dependent wild bootstrap procedures, and MBB denotes the moving-block bootstrap.
\end{tablenotes}

\end{threeparttable}
\end{table}

Table~\ref{tab:reg-main-chi2} summarizes the regression results, which reinforce the main findings from the nonlinear GMM experiment. 
For the restricted $z$-statistic $\wtilde{z}_{LM}$, Dep.\ Rademacher provides the most accurate size control, followed closely by MBB and then Gaussian DWB, while Dep.\ Mammen is clearly the most oversized. 
As in the nonlinear GMM experiment, size distortions are generally larger for the unrestricted $z$-statistic $\what{z}_{W}$ than for $\wtilde{z}_{LM}$, with Dep.\ Mammen being the main exception. 
The notable difference is that MBB provides slightly better size control than Dep.\ Rademacher for $\what{z}_{W}$. 
Overall, however, the best finite-sample size control is obtained by combining the restricted $z$-statistic $\wtilde{z}_{LM}$ with Dep.\ Rademacher, giving the same main conclusion as in the nonlinear GMM experiment. 

The corresponding results for Normal and Student-$t_5$ innovations reported in Appendix~\ref{app:additional_mc} are qualitatively very similar and preserve the main conclusion that the restricted $z$-statistic combined with Dep.\ Rademacher provides particularly accurate finite-sample size control.

The power results are broadly similar across procedures, and the relatively small differences in rejection frequencies appear to reflect mainly the corresponding differences in size distortion.

\section{Empirical Application: Mean Reversion in Short-Term Interest Rates}
\label{sec:empirical}

We illustrate the proposed bootstrap procedures using a nonlinear GMM specification motivated by the Cox--Ingersoll--Ross (CIR) model \citep{CoxIngersollRoss1985}. 
In continuous time, measured in years, the short rate $r(\tau)$ follows 
    $dr(\tau)
    =
    \kappa\{\mu-r(\tau)\}\,d\tau
    +
    \sigma\sqrt{r(\tau)}\,dW(\tau)$, 
where $\mu$ is the long-run mean level, $\kappa>0$ is the speed of mean reversion, $\sigma>0$ governs the instantaneous volatility, and $W(\tau)$ is a standard Brownian motion. 
The model implies the conditional mean 
    $\E\{r(\tau+\Delta)\mid\cF_\tau\}
    =
    \mu
    +
    \{r(\tau)-\mu\}\exp(-\kappa\Delta)$, 
where $\cF_\tau$ denotes the information available at time $\tau$ and $\Delta$ is the horizon measured in years. 
Thus, $\mu$ determines the level toward which the short rate reverts, while $\kappa$ determines the speed at which deviations from this level decay, with the associated mean-reversion half-life given by $\log(2)/\kappa$ years. 
We use only this conditional-mean implication and do not impose the CIR conditional-variance specification. 

The empirical analysis uses the monthly three-month Treasury bill secondary-market rate as an observable proxy for the short rate. 
The sample runs from January 1985 to August 2026; see Appendix~\ref{app:cir_data} for further details on the data and their construction. 
Let $r_t$ denote the observed rate in month $t$, and let $h$ denote the horizon measured in months. 
Evaluating the CIR conditional mean at the monthly observation dates gives 
    $\E_t(r_{t+h})
    =
    \mu
    +
    (r_t-\mu)\exp(-\kappa h/12)$, 
where $\E_t(\cdot):=\E(\cdot\mid\cF_t)$. 

Accordingly, we construct unconditional moment restrictions from this conditional-mean implication using functions of the current short rate as instruments. 
Let $x_t=r_t/10$, where this normalization defines the relative weighting of the moments under the identity-weighted criterion. We retain the same normalization in the two-step comparison. 
The implied moment restrictions are $\E\bg_t(\btheta_0)=\bzero$, where
\begin{align}
    \bg_t(\btheta)
    &=
    \bz_t
    \left[
        r_{t+h}
        -
        \mu
        -
        (r_t-\mu)\exp\left(-\kappa h/12\right)
    \right],
    \qquad
    \btheta=(\mu,\kappa)',
    \label{eq:cir_moment}
\end{align}
with $\bz_t=(1,x_t,x_t^2)'$. 
This gives three moment conditions for the two parameters $\mu$ and $\kappa$. 

We examine the model for $h=6,9,12$ months and test the null hypothesis of a two-year mean-reversion half-life, $H_0:\kappa=\log(2)/2$. 
For the empirical comparison, we focus on the restricted $z$-statistic $\wtilde z_{LM}$, which delivered the most accurate size control in the Monte Carlo experiments, particularly for the dependent Rademacher procedure. 
Table~\ref{tab:cir-h9} reports the main specification with $h=9$ months, while the corresponding results for $h=6$ and $h=12$ are reported in Appendix~\ref{app:cir_additional}.

The one-step GMM estimate implies a mean-reversion half-life substantially longer than the two-year value under the null. 
All three asymptotic procedures reject the null at the $5\%$ level. 
Among the bootstrap procedures, however, only Gaussian DWB rejects, and its result lies very close to the $5\%$ rejection boundary. 
The dependent Rademacher procedure gives the largest bootstrap $p$-value and does not reject the null, consistent with its stronger finite-sample size control in the Monte Carlo experiments. 
Thus, in this application, the empirical conclusion at the conventional $5\%$ level depends on the distributional approximation used.

Appendix~\ref{app:cir_additional} reports the one-step results for $h=6$ and $h=12$, together with a two-step GMM robustness check for
the main $h=9$ specification. 
The evidence varies with the horizon, while the two-step results preserve the main qualitative comparison between the asymptotic and bootstrap procedures.

\begin{table}[!htb]
\centering
\caption{Equal-tail two-sided $z$-tests of a two-year mean-reversion half-life, $h=9$ months}
\label{tab:cir-h9}
\small
\begin{threeparttable}
\begin{tabular}{lrrrr}
\toprule
Method & $\wtilde z_{LM}$ & $p$-value & $q_{0.025}$ & $q_{0.975}$ \\
\midrule
\multicolumn{5}{l}{\textit{Asymptotic}} \\
\quad ASY Bartlett      & -2.27 & \multicolumn{1}{l}{$0.023^{**}$} & -1.96 & 1.96 \\
\quad ASY Rad.\ kernel  & -2.11 & \multicolumn{1}{l}{$0.035^{**}$} & -1.96 & 1.96 \\
\quad ASY Mam.\ kernel  & -2.13 & \multicolumn{1}{l}{$0.033^{**}$} & -1.96 & 1.96 \\
\multicolumn{5}{l}{\textit{Bootstrap}} \\
\quad Dep.\ Rademacher  & -2.11 & \multicolumn{1}{l}{0.089}         & -2.41 & 2.40 \\
\quad Dep.\ Mammen      & -2.13 & \multicolumn{1}{l}{0.074}         & -2.32 & 2.28 \\
\quad Gaussian DWB      & -2.27 & \multicolumn{1}{l}{$0.050^{**}$} & -2.27 & 2.26 \\
\quad MBB               & -2.27 & \multicolumn{1}{l}{0.061}         & -2.38 & 2.38 \\
\bottomrule
\end{tabular}
\begin{tablenotes}[flushleft]
\footnotesize
\item \textit{Notes:} The sample is January 1985--August 2026 ($T=491$). 
The one-step GMM estimates are $\what\mu=2.65$ and $\what\kappa=0.18$, implying a mean-reversion half-life of 3.77 years.
All reported tests are equal-tail two-sided $z$-tests based on the restricted $z$-statistic $\wtilde z_{LM}$; the test procedures are defined as in the nonlinear GMM Monte Carlo experiment in Table~\ref{tab:nlgmm-main-chi2}.  
Bootstrap $p$-values and quantiles are based on 9,999 draws. 
$^{**}$ denotes rejection at the 5\% level.
\end{tablenotes}
\end{threeparttable}
\end{table}

\section{Conclusion}

In this paper, we have developed a general two-point dependent wild bootstrap (DWB) for weakly dependent estimating equations. 
Its key feature is that the two-point marginal law and the latent dependence profile can be specified separately. 
The construction combines a normalized two-point distribution with a stationary latent Gaussian process through a Gaussian copula transformation, includes dependent Rademacher and Mammen multipliers as special cases, and nests the classical iid two-point wild bootstrap under serial independence. 
The induced multiplier autocovariances determine the corresponding HAC lag weights, and the resulting matched-HAC estimator coincides exactly with the conditional covariance of the bootstrap estimating-equation sum. 
We establish consistency of this covariance estimator and a conditional Gaussian approximation for the bootstrap score sum, yielding first-order validity for HAC-studentized $z$-statistics and the corresponding Wald and LM statistics.

The Monte Carlo results complement these first-order validity results by showing substantial differences in finite-sample performance across multiplier distributions.
Rademacher DWB generally yields more accurate finite-sample size control for \(z\)-tests than Mammen DWB and Gaussian DWB, with particularly good performance when combined with the restricted \(z\)-statistic \(\wtilde z_{LM}\).
At the same time, the corresponding asymptotic HAC procedures behave very similarly, indicating that these differences cannot be attributed primarily to the associated HAC kernels.
The empirical application further illustrates that bootstrap and asymptotic approximations can lead to practically meaningful differences in inference, including different conclusions at conventional significance levels.

The generality of the two-point construction motivates several extensions already being developed in companion manuscripts. 
Matsushita, Nishi and Yamagata (\citeyear{MatsushitaNishiYamagata2026}) study higher-order properties of the dependent wild bootstrap, while 
Dai, Matsushita and Yamagata (\citeyear{DaiMatsushitaYamagata2026}) extend the approach to large panels, where cross-sectional aggregation yields time-indexed score processes, building on the panel DWB framework of \citet{GaoPengYan2024}. 
A distinct extension, motivated by matrix-based spatial multiplier constructions such as \citet{ConleyGoncalvesKimPerron2023}, 
Nishi and Yamagata (\citeyear{NishiYamagata2026CrossTemporal}) develop bounded two-point multipliers under both cross-sectional and temporal dependence, with applications to spatial and related network models. 
Together, these developments illustrate the broader applicability of the proposed construction beyond the weakly dependent time-series setting considered here.

\section*{Acknowledgments}
We are grateful to Yukitoshi Matsushita for helpful discussions and useful comments. 
This work was supported by JSPS KAKENHI (grant numbers 25KJ0041, 25K00625, 25K05036 and 25H00544). 

\section*{Data Availability and Disclosure Statement}
All data used in this study are publicly available. Data sources and replication code are provided in the Online Appendix and supplementary material. 
The authors declare no conflicts of interest.

\section*{Generative AI disclosure}
Generative AI tools, including ChatGPT (OpenAI) and Claude (Anthropic), were used for language editing and coding assistance in the Monte Carlo experiments and empirical analysis.
All research design, methodology, analysis, interpretation, and verification were undertaken by the authors, who take full responsibility for the paper and code.

\bibliographystyle{apalike}
\bibliography{DWB}

\newpage
\appendix
\setcounter{page}{1}
\setcounter{section}{0}
\setcounter{footnote}{0}
\renewcommand{\thesection}{\Alph{section}}
\renewcommand{\theequation}{\thesection.\arabic{equation}}
\renewcommand{\thetable}{\thesection.\arabic{table}}
\renewcommand{\thefigure}{\thesection.\arabic{figure}}

\makeatletter
\@addtoreset{equation}{section}
\@addtoreset{table}{section}
\@addtoreset{figure}{section}
\makeatother
\begin{center}
	{\Large Supplementary Material for} \\[7mm]
	\textbf{\Large Two-Point Dependent Wild Bootstrap for Weakly Dependent Estimating Equations} \\[10mm]
	\textsc{\large Mikihito Nishi$^*$} {\large and} \textsc{\large Takashi Yamagata$^\dagger$} \\[5mm]
	*\textit{\large Graduate School of Economics, University of Tokyo} \\[1mm]
	$\dagger$\textit{\large Department of Economics and Related Studies, University of York} \\[1mm]
	$\dagger$\textit{\large Graduate School of Economics and Management, Tohoku University}
\end{center}

\section{Supplementary Implementation and Mathematical Proofs}\label{sec:proof}

\subsection{One-step GMM benchmark}\label{app:onestep_gmm}

The main nonlinear-GMM experiment and empirical application use identity-weighted one-step GMM to separate the choice of estimation weight from bootstrap approximation. This subsection gives the general-weight formulas; efficient two-step GMM results are reported as robustness checks.
Let $\bW_1$ be a fixed positive-definite matrix, with $\bW_1=\bI_d$ in the simulations, and define
\begin{align}
    \what{\btheta}^{(1)}
    &=
    \arg\min_{\btheta\in\Theta}
    \overline{\bg}_T(\btheta)'\bW_1\overline{\bg}_T(\btheta).
    \label{eq:one_step_gmm_estimator}
\end{align}
Writing $\bD_0=\E\{\partial\bg_t(\btheta_0)/\partial\btheta'\}$, the first-order map is
\begin{align}
    \bB_0^{(1)}
    &=
    -(\bD_0'\bW_1\bD_0)^{-1}\bD_0'\bW_1,
\end{align}
so the asymptotic covariance is $\bV_{\theta}^{(1)}=\bB_0^{(1)}\bOmega_0\bB_0^{(1)\prime}$.
In the sample, define $\what{\bg}_t^{(1)}=\bg_t(\what{\btheta}^{(1)})$, $\what{\bD}^{(1)}=T^{-1}\sum_t\partial\bg_t(\what{\btheta}^{(1)})/\partial\btheta'$, and
\begin{align}
    \what{\bB}^{(1)}
    &=
    -\{\what{\bD}^{(1)\prime}\bW_1\what{\bD}^{(1)}\}^{-1}\what{\bD}^{(1)\prime}\bW_1.
\end{align}
Let $\what{\bOmega}_p^{(1)}$ be the matched-HAC estimator constructed from $\{\what{\bg}_t^{(1)}-\overline{\what{\bg}^{(1)}}\}_{t=1}^T$ and set $\what{\bV}_{\theta}^{(1)}=\what{\bB}^{(1)}\what{\bOmega}_p^{(1)}\what{\bB}^{(1)\prime}$.
The one-step Wald statistic is obtained from \eqref{eq:wald_stat_generic} by replacing $\what{\btheta}$ and $\what{\bV}_{\theta}$ with $\what{\btheta}^{(1)}$ and $\what{\bV}_{\theta}^{(1)}$.

The direct one-step bootstrap uses
\begin{align}
    \what{\bg}_t^{(1)*}
    &=
    \xi_t^*\{\what{\bg}_t^{(1)}-\overline{\what{\bg}^{(1)}}\},
    \\
    \sqrt{T}\{\what{\btheta}^{(1)*}-\what{\btheta}^{(1)}\}
    &=
    \what{\bB}^{(1)}T^{-1/2}\sum_{t=1}^T\what{\bg}_t^{(1)*}.
    \label{eq:one_step_gmm_bootstrap_update}
\end{align}
For recomputed studentization, construct $\what{\bOmega}_p^{(1)*}$ from the bootstrap moment array and set $\what{\bV}_{\theta}^{(1)*}=\what{\bB}^{(1)}\what{\bOmega}_p^{(1)*}\what{\bB}^{(1)\prime}$, while the fixed version uses $\what{\bV}_{\theta}^{(1)}$ throughout.
Thus the estimated first-order map $\what{\bB}^{(1)}$ is held fixed across bootstrap draws exactly as in the efficient two-step implementation.

For the restricted one-step implementation, let $\wtilde{\btheta}^{(1)}$ minimize the same criterion subject to $\boldsymbol{\eta}_2=\ba_0$, define $\wtilde{\bg}_t^{(1)}=\bg_t(\wtilde{\btheta}^{(1)})$, and write the restricted Jacobian as $\wtilde{\bG}^{(1)}=(\wtilde{\bG}_1^{(1)},\wtilde{\bG}_2^{(1)})$. 
Define
\begin{align}
    \wtilde{\bG}_{2\cdot1}^{(1)}
    &=
    \wtilde{\bG}_2^{(1)}
    -
    \wtilde{\bG}_1^{(1)}
    \{\wtilde{\bG}_1^{(1)\prime}\bW_1\wtilde{\bG}_1^{(1)}\}^{-1}
    \wtilde{\bG}_1^{(1)\prime}\bW_1\wtilde{\bG}_2^{(1)}.
    \label{eq:one_step_gmm_partial_direction}
\end{align}

Let $\wtilde{\bOmega}_p^{(1)}$ be the matched-HAC estimator based on the centered restricted one-step moments.
Because $\bW_1$ is not generally equal to $\bOmega_0^{-1}$, the projected-score covariance retains the sandwich form
\begin{align}
    \wtilde{\bJ}^{(1)}
    &=
    \wtilde{\bG}_{2\cdot1}^{(1)\prime}
    \bW_1
    \wtilde{\bOmega}_p^{(1)}
    \bW_1
    \wtilde{\bG}_{2\cdot1}^{(1)}.
\end{align}
The corresponding one-step restricted studentized vector is
\begin{align}
    \wtilde{\bz}_{LM}^{(1)}
    &=
    -\{\wtilde{\bJ}^{(1)}\}^{-1/2}
    \wtilde{\bG}_{2\cdot1}^{(1)\prime}
    \bW_1
    \sqrt{T}\,
    \overline{\wtilde{\bg}^{(1)}},
\end{align}
with
\begin{align}
    LM_T^{(1)}
    &=
    \wtilde{\bz}_{LM}^{(1)\prime}
    \wtilde{\bz}_{LM}^{(1)}.
\end{align}
For the direct one-step LM bootstrap, define $\wtilde{\bg}_t^{(1)*}=\xi_t^*\{\wtilde{\bg}_t^{(1)}-\overline{\wtilde{\bg}^{(1)}}\}$ and let $\wtilde{\bOmega}_p^{(1)*}$ be its draw-specific HAC estimator.
Holding $\bW_1$ and $\wtilde{\bG}_{2\cdot1}^{(1)}$ fixed, the recomputed bootstrap statistic is
\begin{align}
    LM_T^{(1)*}
    &=
    T\overline{\wtilde{\bg}^{(1)*}}'
    \bW_1
    \wtilde{\bG}_{2\cdot1}^{(1)}
    \left\{
    \wtilde{\bG}_{2\cdot1}^{(1)\prime}
    \bW_1
    \wtilde{\bOmega}_p^{(1)*}
    \bW_1
    \wtilde{\bG}_{2\cdot1}^{(1)}
    \right\}^{-1}
    \wtilde{\bG}_{2\cdot1}^{(1)\prime}
    \bW_1
    \overline{\wtilde{\bg}^{(1)*}}.
    \label{eq:one_step_gmm_lm_bootstrap}
\end{align}
The fixed-studentizer version replaces $\wtilde{\bOmega}_p^{(1)*}$ in the middle matrix by $\wtilde{\bOmega}_p^{(1)}$.
If the weighting matrix is replaced by an efficient sequence converging to $\bOmega_0^{-1}$, this sandwich collapses to the simpler $\bOmega_0^{-1}$ geometry used in the main two-step development.
The efficient formulas in the theoretical development are therefore a special case; the main numerical results use the general-weight formulas above, and the appendix compares them with two-step GMM.
In Experiment 1 the one-step benchmark sets $\bW_1=\bI_3$ in both unrestricted and restricted estimation, with all DGPs, multipliers, bandwidths, and bootstrap draw counts otherwise identical to the two-step specification.

\subsection{General proofs}

\noindent\begin{proof}[Proof of Lemma \ref{lem:kernel_radem}]
    The first three assertions are immediate. Let us prove the last statement. Fix $p\in(0,1)$ and a finite $q>0$. Since $K_p(0)=0$, the derivative of the bivariate normal distribution function and a first-order Taylor expansion at zero give
    \begin{align}
        K_p(u)
        =
        \frac{\exp(-q_p^2)}{2\pi p(1-p)}u+o(u),
        \qquad u\downarrow0.
    \end{align}
    Hence, for some $\eta\in(0,1)$ and constant $C_p\in(0,\infty)$,
    \begin{align}
        |K_p(u)|\leq C_pu
        \quad\text{for all}\quad u\in[0,\eta].
    \end{align}
    Because $0\leq \exp(-x^2)\leq\eta$ for $|x|\geq\sqrt{\log\eta^{-1}}\eqqcolon\delta$, we have $|K_p\{\exp(-x^2)\}|\leq C_p \exp(-x^2)$ for all $|x|\geq \delta$. Additionally, noting that $\exp(x^2)|K_p\{\exp(-x^2)\}|$ is continuous on $|x|\leq \delta$ and thus bounded on this interval, we have $|K_p\{\exp(-x^2)\}|\leq M\exp(-x^2)$ for some constant $M\in(0,\infty)$. Letting $C\coloneqq \max\{C_p,M\}<\infty$, we get
    \begin{align}
        |K_p\{\exp(-x^2)\}|\leq C\exp(-x^2), \label{bound:kp_exp_uni}
    \end{align}
    for all $x\in\bbR$, and consequently
    \begin{align}
        \int_{-\infty}^{\infty}|K_p\{\exp(-x^2)\}|^qdx \leq C^q\int_{-\infty}^{\infty}\exp(-qx^2)dx<\infty.
    \end{align}
    
    This completes the proof.
\end{proof}

\noindent\begin{proof}[Proof of Theorem \ref{thm:matched_hac_consistency}]
For $h\geq0$, define
\begin{align}
    \bGamma_T(h;\btheta)
    &=
    \frac{1}{T}\sum_{t=h+1}^T
    \bg_t(\btheta)\bg_{t-h}(\btheta)',
    \qquad
    \bGamma_T(-h;\btheta)
    =
    \bGamma_T(h;\btheta)'.
\end{align}
Define
\begin{align}
    \bOmega_{p,T}(\btheta)
    &=
    \bGamma_T(0;\btheta)
    +
    \sum_{h=1}^{T-1}K_p(\rho_{T,h})
    \left\{\bGamma_T(h;\btheta)+\bGamma_T(-h;\btheta)\right\}.
\end{align}
Thus $\bOmega_{p,T}=\bOmega_{p,T}(\btheta_0)$.
We decompose \begin{align} \what{\bOmega}_p-\bOmega_0 &= \left\{ \what{\bOmega}_p-\bOmega_{p,T}(\what{\btheta}) \right\} + \left\{ \bOmega_{p,T}(\what{\btheta})-\bOmega_{p,T} \right\} + \left\{ \bOmega_{p,T}-\bOmega_0 \right\}. \label{eq:matched_hac_decomposition} \end{align} The last term is $o_p(1)$ by Assumption~\ref{ass:consistency_infHAC}. 
We show below that the middle term, due to parameter estimation, and the first term, due to recentering, are also $o_p(1)$. 
We first show $\bOmega_{p,T}(\what{\btheta})-\bOmega_{p,T}\stackrel{p}{\to}\bzero$.
Assumption \ref{ass:consistency_infHAC} then implies $\bOmega_{p,T}(\what{\btheta})\stackrel{p}{\to}\bOmega_0$.
It is sufficient to establish the first convergence after premultiplication and postmultiplication by an arbitrary fixed $\bb\in\bbR^d$, so in what follows we suppress this scalar projection.

Given $b_T/T^{1/2}\to0$, it suffices to show
\begin{align}
    \frac{\sqrt{T}}{b_T}
    \left\{
    \bOmega_{p,T}(\what{\btheta})-\bOmega_{p,T}
    \right\}
    =
    O_p(1).
\end{align}
A mean value expansion of $\bOmega_{p,T}(\what{\btheta})$ around $\btheta_0$ yields
\begin{align}
    \frac{\sqrt{T}}{b_T}
    \left\{
    \bOmega_{p,T}(\what{\btheta})-\bOmega_{p,T}
    \right\}
    &=
    \frac{1}{b_T}
    \frac{\partial}{\partial\btheta'}
    \bOmega_{p,T}(\overline{\btheta})
    \sqrt{T}(\what{\btheta}-\btheta_0)\\
    &=
    \frac{1}{b_T}
    \sum_{h=-T+1}^{T-1}
    K_p(\rho_{T,h})
    \left.
    \frac{\partial}{\partial\btheta'}
    \bGamma_T(h;\btheta)
    \right|_{\btheta=\overline{\btheta}}
    \sqrt{T}(\what{\btheta}-\btheta_0),
    \label{eq:mv_expansion_ddot}
\end{align}
for some $\overline{\btheta}$ on the line segment joining $\what{\btheta}$ and $\btheta_0$.

Applying the Cauchy--Schwarz inequality gives
\begin{align}
    &\sup_{|h|<T}
    \left\|
    \left.
    \frac{\partial}{\partial\btheta'}
    \bGamma_T(h;\btheta)
    \right|_{\btheta=\overline{\btheta}}
    \right\|\\
    &\quad\leq
    2\left\{
    \frac{1}{T}\sum_{t=1}^T
    \sup_{\btheta\in\Theta}
    \left\|
    \frac{\partial}{\partial\btheta'}
    \bg_t(\btheta)
    \right\|^2
    \right\}^{1/2}
    \left\{
    \frac{1}{T}\sum_{t=1}^T
    \sup_{\btheta\in[\btheta_0,\what{\btheta}]}
    \|\bg_t(\btheta)\|^2
    \right\}^{1/2}
    =
    O_p(1),
    \label{eq:gamma_derivative_bound}
\end{align}
where the last probability order follows from Assumptions \ref{ass:moment_weakdepend}(i)--(iii).
In particular, the mean value theorem gives
\begin{align}
    T^{-1}\sum_{t=1}^T
    \sup_{\btheta\in[\btheta_0,\what{\btheta}]}
    \|\bg_t(\btheta)\|^2
    &\leq
    2T^{-1}\sum_{t=1}^T
    \|\bg_t(\btheta_0)\|^2\\
    &\quad+
    2\|\what{\btheta}-\btheta_0\|^2
    T^{-1}\sum_{t=1}^T
    \sup_{\vartheta\in\Theta}
    \|\bD_t(\vartheta)\|^2
    =
    O_p(1),
\end{align}
where
\begin{align}
    \bD_t(\btheta)
    =
    \frac{\partial\bg_t(\btheta)}{\partial\btheta'}.
\end{align}
This result, Assumption \ref{ass:moment_weakdepend}(i), and
\begin{align}
    \frac{1}{b_T}
    \sum_{h=-T+1}^{T-1}
    \left|
    K_p(\rho_{T,h})
    \right|
    =
    O(1),
    \label{eq:kernel_rowsum_scaled}
\end{align}
by \eqref{bound:kp_exp_uni} imply that the right-hand side of \eqref{eq:mv_expansion_ddot} is $O_p(1)$.
Hence
\begin{align}
    \bOmega_{p,T}(\what{\btheta})-\bOmega_{p,T}
    =
    O_p\left(\frac{b_T}{\sqrt{T}}\right)
    =
    o_p(1),
    \label{eq:ddot_tilde_difference}
\end{align}
and Assumption \ref{ass:consistency_infHAC} yields $\bOmega_{p,T}(\what{\btheta})\stackrel{p}{\to}\bOmega_0$.

It remains to show that recentering is asymptotically negligible, namely,
\begin{align}
    \what{\bOmega}_p-\bOmega_{p,T}(\what{\btheta})
    \stackrel{p}{\to}
    \bzero.
\end{align}
For each $t$, the mean value theorem gives some $\overline{\btheta}_t$ on the line segment joining $\btheta_0$ and $\what{\btheta}$ such that
\begin{align}
    \bg_t(\what{\btheta})
    =
    \bg_t(\btheta_0)
    +
    \bD_t(\overline{\btheta}_t)
    (\what{\btheta}-\btheta_0).
\end{align}
Therefore,
\begin{align}
    \overline{\what{\bg}}
    &=
    \frac{1}{T}\sum_{t=1}^T
    \bg_t(\btheta_0)
    +
    \left\{
    \frac{1}{T}\sum_{t=1}^T
    \bD_t(\overline{\btheta}_t)
    \right\}
    (\what{\btheta}-\btheta_0)
    =
    O_p(T^{-1/2}),
    \label{eq:gbar_rate}
\end{align}
because
\begin{align}
    \frac{1}{T}\sum_{t=1}^T
    \|\bD_t(\overline{\btheta}_t)\|^2
    &\leq
    \frac{1}{T}\sum_{t=1}^T
    \sup_{\btheta\in\Theta}
    \|\bD_t(\btheta)\|^2
    =
    O_p(1),
    \label{eq:intermediate_jacobian_bound}
\end{align}
while $T^{-1}\sum_{t=1}^T\bg_t(\btheta_0)=O_p(T^{-1/2})$ and $\what{\btheta}-\btheta_0=O_p(T^{-1/2})$ by Assumption \ref{ass:moment_weakdepend}(i).

Expanding the centered products gives
\begin{align}
    \what{\bOmega}_p-\bOmega_{p,T}(\what{\btheta})
    ={}&
    -\frac{1}{T}
    \sum_{t=1}^T\sum_{s=1}^T
    K_p(\rho_{T,t-s})
    \what{\bg}_t
    \overline{\what{\bg}}'\\
    &-
    \frac{1}{T}
    \sum_{t=1}^T\sum_{s=1}^T
    K_p(\rho_{T,t-s})
    \overline{\what{\bg}}
    \what{\bg}_s'\\
    &+
    \frac{1}{T}
    \sum_{t=1}^T\sum_{s=1}^T
    K_p(\rho_{T,t-s})
    \overline{\what{\bg}}
    \overline{\what{\bg}}'.
    \label{eq:centered_uncentered}
\end{align}
Equation \eqref{eq:kernel_rowsum_scaled} implies
\begin{align}
    \max_{1\leq t\leq T}
    \sum_{s=1}^T
    |K_p(\rho_{T,t-s})|
    =
    O(b_T).
    \label{eq:kernel_rowsum}
\end{align}
Moreover, $T^{-1}\sum_{t=1}^T\|\what{\bg}_t\|=O_p(1)$ by the mean value theorem.
Therefore, \eqref{eq:gbar_rate} and \eqref{eq:centered_uncentered} imply
\begin{align}
    \left\|
    \what{\bOmega}_p-\bOmega_{p,T}(\what{\btheta})
    \right\|
    =
    O_p\left(
    \frac{b_T}{\sqrt{T}}
    +
    \frac{b_T}{T}
    \right)
    =
    o_p(1).
    \label{eq:center_difference_rate}
\end{align}
Combining this result with \eqref{eq:ddot_tilde_difference} and Assumption \ref{ass:consistency_infHAC} proves $\what{\bOmega}_p\stackrel{p}{\to}\bOmega_0$.
\end{proof}

\noindent\begin{proof}[Proof of Theorem \ref{thm:score_boot_consistent}]
For this proof, write $\bs_T^*=T^{-1/2}\sum_{t=1}^T\what{\bg}_t^*$.
Let $\by\sim N(\bzero,\bOmega_0)$. Assumption
\ref{ass:moment_weakdepend}(i) and the P\'olya theorem imply
\begin{align}
    \sup_{\bx\in\bbR^d}\left|
    \Pro\left(\frac1{\sqrt{T}}\sum_{t=1}^T\bg_t(\btheta_0)\leq \bx\right)
    -\Pro(\by \leq \bx)\right|\to0.                              
    \label{eq:polya_original}
\end{align}
It therefore suffices to prove that, under the bootstrap law conditional on the
data, $\bs_T^*\CDs N(\bzero,\bOmega_0)$ in probability.

We first establish the conditional CLT for the infeasible population-score bootstrap vector
\begin{align}
    \bm_T^*=\frac1{\sqrt{T}}\sum_{t=1}^T
       \xi_t^*\bg_t(\btheta_0).                              
    \label{eq:MTstar_def}
\end{align}
The core idea is to follow the proof of Theorem 3.1 of \citet{Shao2010}. However, his result cannot be directly used because the lag-weight function $x\mapsto K_p\{\exp(-x^2)\}$ has noncompact support, and hence the multipliers are not $b_T$-dependent. Borrowing from \citet{buckleyWindingStationaryGaussian2018} and \citet{kurisuGaussianApproximationSpatially2024}, we handle this by using an $m$-dependent approximation and controlling the omitted part in conditional $L^2$.

\noindent\textbf{Step 1: Approximation by the truncated multiplier process.}
Because a zero-mean stationary Gaussian process is determined by its covariance, we may write the latent Gaussian process, without changing its distribution, as
\begin{align}
    Z_{T,t}=\int_{\bbR}h_T(t-u)\,dW(u),\qquad
    h_T(u)=\left(\frac{4}{\pi b_T^2}\right)^{1/4}
            \exp\left(-\frac{2u^2}{b_T^2}\right),
    \label{eq:gaussian_ma_rep}
\end{align}
where $W$ is a two-sided standard Brownian motion.  Direct calculation gives
$\int h_T^2=1$ and $\E(Z_{T,t}Z_{T,s})=\exp\{-((t-s)/b_T)^2\}=\rho_{T,t-s}$. For any $M>0$, set
\begin{align}
    \nu_M^2&=\int_{|u|\leq Mb_T}h_T(u)^2\,du,\\
    h_{T,M}(u)&=\nu_M^{-1}h_T(u)1\{|u|\leq Mb_T\},\\
    Z_{T,t}^{(M)}&=\int_{\bbR}h_{T,M}(t-u)\,dW(u),\qquad
    \xi_{T,t}^{*(M)}=F_p^{-1}\{\Phi(Z_{T,t}^{(M)})\}.
    \label{eq:truncated_multiplier}
\end{align}
Each $\xi_{T,t}^{*(M)}$ has the same marginal two-point law $F_p$ as $\xi_t^*$, and the sequence is $m_{T,M}$-dependent with $m_{T,M}=\lceil2Mb_T\rceil$. In this sense, $\xi_{T,t}^{*(M)}$ is the ``truncated'' version of $\xi_{T,t}^*$.

The truncation error is exponentially small in $M$, uniformly in $T$. Indeed, $\Corr(Z_{T,t},Z_{T,t}^{(M)})=\nu_M$ and Gaussian tail bounds give $1-\nu_M\leq C\exp(-c_0M^2)$.  For any $\varepsilon>0$, $Z_{T,t}$ and $Z_{T,t}^{(M)}$ are on different sides of $q_p$ only if $|Z_{T,t}-q_p|\leq\varepsilon$ or $|Z_{T,t}-Z_{T,t}^{(M)}|>\varepsilon$. Gaussian anti-concentration and Chebyshev's inequality therefore yield
\begin{align}
 \E^*|\xi_t^*-\xi_{T,t}^{*(M)}|^2 
 &\leq \Pro\left(\left|Z_{T,t} - q_p\right|\leq \varepsilon \text{ or } \left|Z_{T,t} - Z_{T,t}^{(M)}\right|>\varepsilon\right) \left|a_p-b_p\right|^2 \\
 &\leq C_p\left(\varepsilon + \frac{\Var(Z_{T,t} - Z_{T,t}^{(M)})}{\varepsilon^2}\right) \\
 &\leq C_p'\left\{\varepsilon+
       \frac{1-\nu_M}{\varepsilon^2}\right\},
\end{align}
where $C_p'$ is a finite constant depending only on $p$. Taking $\varepsilon=(1-\nu_M)^{1/3}$, we have that, for constants $C,c_1>0$,
\begin{align}
    \eta_M:=\sup_{T,t}\E^*|\xi_t^*-\xi_{T,t}^{*(M)}|^2
    \leq C\exp(-c_1M^2)\longrightarrow0 \ \text{ as } M\to \infty.
    \label{eq:multiplier_trunc_error}
\end{align}

Since $b_T/T^{\delta/(2+2\delta)}\to0$, choose the deterministic sequence
\begin{align}
    R_T=\left(\frac{T^{\delta/(2+2\delta)}}{b_T}\right)^{1/2}.
    \label{eq:diagonal_RT}
\end{align}
Then $R_Tb_T/T^{\delta/(2+2\delta)}=(b_T/T^{\delta/(2+2\delta)})^{1/2}\to0$ by the hypothesis.  Fix $\ba\in\bbR^d$ and put
$Y_t=\ba'\bg_t(\btheta_0)$, $m_T^*(\ba)=\ba'\bm_T^*=T^{-1/2}\sum_{t=1}^T\xi_t^*Y_t$, and
\begin{align}
    m_{T,R_T}^*(\ba)=\frac1{\sqrt{T}}\sum_{t=1}^T
       \xi_{T,t}^{*(R_T)}Y_t.
    \label{eq:diagonal_multiplier_stat}
\end{align}
Assumption \ref{ass:moment_stronger} and Davydov's inequality imply $\sum_{h\in\bbZ}|\E(Y_tY_{t-h})|<\infty$. Since the data and multipliers are independent,
\begin{align}
 \E\left[\E^*|m_T^*(\ba)-m_{T,R_T}^*(\ba)|^2\right]
 &\leq \frac{\eta_{R_T}}{T}\sum_{t,s=1}^T|\E(Y_tY_s)|\\
 &\leq \eta_{R_T}\sum_{h\in\bbZ}|\E(Y_tY_{t-h})|\longrightarrow0,
\end{align}
where the last convergence holds because $R_T\to\infty$ by the hypothesis and hence $\eta_{R_T}\to0$. Markov's inequality therefore gives
\begin{align}
    \E^*|m_T^*(\ba)-m_{T,R_T}^*(\ba)|^2=o_p(1). \label{eqn:l2_approximation}
\end{align}
Thus the contribution outside the expanding interval
$[-R_Tb_T,R_Tb_T]$ is negligible in conditional $L^2$ while the truncated
multiplier sequence remains sufficiently short-memory for the blocking argument of \citet{Shao2010}. Below, we establish CLT for $m_{T,R_T}^*(\ba)$ (conditional on data), which together with \eqref{eqn:l2_approximation} implies CLT for $m_T^*(\ba)$.

\noindent\textbf{Step 2: Conditional CLT for $m_{T,R_T}^*(\ba)$.}
Because both multiplier statistics have conditional mean zero, the triangle
inequality in conditional $L^2$ and the bound established in Step 1 imply
\begin{align}
 \left|\{\Var^*\{m_{T,R_T}^*(\ba)\}\}^{1/2}
       -\{\Var^*(m_T^*(\ba))\}^{1/2}\right|
 &\leq
 \left\{\E^*|m_{T,R_T}^*(\ba)-m_T^*(\ba)|^2\right\}^{1/2}\\
 &=o_p(1).
\end{align}
Moreover,
\begin{align}
    \Var^*\{m_T^*(\ba)\}=
\ba'\bOmega_{p,T}\ba
\stackrel{p}{\longrightarrow}
\ba'\bOmega_0\ba
\end{align}
by Assumption \ref{ass:consistency_infHAC}.  Since $\ba'\bOmega_0 \ba>0$ for
$\ba\neq \bzero$, it follows that
\begin{align}
    \Var^*\{m_{T,R_T}^*(\ba)\}\stackrel{p}{\longrightarrow}\ba'\bOmega_0 \ba.
    \label{eq:compact_variance_consistency}
\end{align}

Let
\begin{align}
    m_{T,R_T}=\lceil2R_Tb_T\rceil,
    \qquad
    L_T=\left\lfloor T^{1/2}m_{T,R_T}^{-1/\delta}\right\rfloor,
    \qquad
    k_T=\left\lfloor\frac{T}{L_T+m_{T,R_T}}\right\rfloor.
\end{align}
By \eqref{eq:diagonal_RT}, $m_{T,R_T}/T^{\delta/(2+2\delta)}\to0$.  Hence
$m_{T,R_T}=o(L_T)$, $L_T\to\infty$, and $k_T\to\infty$.
Following the large-block--small-block argument in the proof of Theorem 3.1 of \citet{Shao2010}, define
\begin{align}
    \mathcal L_{T,r}
    &=\left\{t\in\bbN:(r-1)(L_T+m_{T,R_T})+1\leq t
      \leq r(L_T+m_{T,R_T})-m_{T,R_T}\right\},
      \qquad 1\leq r\leq k_T,\\
    \mathcal S_{T,r}
    &=\left\{t\in\bbN:r(L_T+m_{T,R_T})-m_{T,R_T}+1\leq t
      \leq r(L_T+m_{T,R_T})\right\},
      \qquad 1\leq r\leq k_T-1,\\
    \mathcal S_{T,k_T}
    &=\left\{t\in\bbN:k_T(L_T+m_{T,R_T})-m_{T,R_T}+1\leq t\leq T\right\}.
\end{align}
Thus, $\mathcal L_{T,r}$ is a large block of length $L_T$, $\mathcal S_{T,r}$,
$r<k_T$, is a small block of length $m_{T,R_T}$, and $\mathcal S_{T,k_T}$ contains
the last small block and the terminal remainder. Let
\begin{align}
    U_{T,r}=\sum_{t\in\mathcal L_{T,r}}\xi_{T,t}^{*(R_T)}Y_t,
    \qquad
    V_{T,r}=\sum_{t\in\mathcal S_{T,r}}\xi_{T,t}^{*(R_T)}Y_t,
    \qquad r=1,\ldots,k_T.
\end{align}
Conditional on the data, the large-block sums
$U_{T,1},\ldots,U_{T,k_T}$ are independent.  Absolute summability of
$\{\E(Y_tY_{t-h})\}$ and $|\E^*(\xi_{T,t}^{*(R_T)}\xi_{T,s}^{*(R_T)})|\leq1$ imply
\begin{align}
    \E\E^*\left|\frac1{\sqrt{T}}\sum_rV_{T,r}\right|^2
    \leq C\frac{k_Tm_{T,R_T}+L_T}{T}=o(1),
    \label{eq:small_blocks_negligible}
\end{align}
so the small blocks are negligible in conditional $L^2$ in probability.

For each large block, let
\begin{align}
    \ell_{T,r}&=(r-1)(L_T+m_{T,R_T})+1,\\
    \mathcal L_{T,r,g}
    &=\left\{\ell_{T,r}+g-1+j(m_{T,R_T}+1):
      j=0,1,\ldots,
      \left\lfloor\frac{L_T-g}{m_{T,R_T}+1}\right\rfloor\right\},
      \qquad g=1,\ldots,m_{T,R_T}+1.
\end{align}
Then $\{\mathcal L_{T,r,g}\}_g$ form a partition of $\mathcal L_{T,r}$. Any two distinct indices in the same $\mathcal L_{T,r,g}$ are separated by at least $m_{T,R_T}+1$, so the variables $\{\xi_{T,t}^{*(R_T)}Y_t:t\in\mathcal L_{T,r,g}\}$ are independent conditional on the data. Thus each large block is partitioned into at most $m_{T,R_T}+1$ independent residue classes. Applying Rosenthal's inequality conditionally on the data gives, as in equation (A.3) of \citet{Shao2010},
\begin{align}
    \|U_{T,r}\|_{2+\delta,*}
    \leq C m_{T,R_T}^{1/2}
       \left(\sum_{t\in\mathcal L_{T,r}}Y_t^2\right)^{1/2},
    \label{eq:block_rosenthal}
\end{align}
where $C$ is a finite constant independent of $T$. Consequently,
\begin{align}
 &T^{-1-\delta/2}\sum_{r=1}^{k_T}\E^*|U_{T,r}|^{2+\delta}\\
 &\quad\leq
 C T^{-1-\delta/2}m_{T,R_T}^{1+\delta/2}L_T^{\delta/2}
   \sum_{t=1}^T|Y_t|^{2+\delta}\\
 &\quad=O_p\left(
   \frac{m_{T,R_T}^{(1+\delta)/2}}{T^{\delta/4}}
   \right)=o_p(1),
    \label{eq:conditional_lyapunov}
\end{align}
where $T^{-1}\sum_{t=1}^T|Y_t|^{2+\delta}=O_p(1)$ by Assumption \ref{ass:moment_stronger}, and the last equality is due to
$m_{T,R_T}/T^{\delta/(2+2\delta)}\to0$. Let $B_T^*=T^{-1/2}\sum_{r=1}^{k_T}U_{T,r}$ and let $S_T^*$ collect the small blocks and the terminal remainder. Equation \eqref{eq:small_blocks_negligible} gives $\E^*|S_T^*|^2=o_p(1)$. The triangle inequality and \eqref{eq:compact_variance_consistency} therefore imply
$\Var^*(B_T^*)\stackrel{p}{\to}\ba'\bOmega_0 \ba$. Using the Lindeberg--Feller theorem together with \eqref{eq:conditional_lyapunov}, we deduce
\begin{align}
    m_{T,R_T}^*(\ba)\CDs N(0,\ba'\bOmega_0 \ba)
    \quad\text{in probability}.
    \label{eq:compact_conditional_clt}
\end{align}
Together with \eqref{eqn:l2_approximation}, this yields
\begin{align}
    m_T^*(\ba)\CDs N(0,\ba'\bOmega_0 \ba)
    \quad\text{in probability}.
\end{align}
The Cram\'er--Wold device finally gives
\begin{align}
    \bm_T^*\CDs N(\bzero,\bOmega_0)
    \quad\text{in probability}.
    \label{eq:main_conditional_clt}
\end{align}

\noindent\textbf{Step 3: Estimated score and recentering remainders.}
Recall the matrix $\bD_t(\overline{\btheta}_t)$ defined in the proof of Theorem \ref{thm:matched_hac_consistency}. Then the mean value expansion of $\what{\bg}_t = \bg_t(\what{\btheta})$ gives
\begin{align}
    \what{\bg}_t
    =\bg_t(\btheta_0)+\bD_t(\overline{\btheta}_t)
    (\what{\btheta}-\btheta_0).
    \label{eq:componentwise_mve_score}
\end{align}
It follows that
\begin{align}
    \bs_T^*=\bm_T^*+\br_{1T}^*-\br_{2T}^*,
    \label{eq:sstar_decomposition}
\end{align}
where
\begin{align}
    \br_{1T}^*
    &=\left(\frac1T\sum_{t=1}^T\xi_t^*\bD_t(\overline{\btheta}_t)\right)
    \sqrt{T}(\what{\btheta}-\btheta_0),\\
    \br_{2T}^*
    &=\overline{\what{\bg}}\frac1{\sqrt{T}}\sum_{t=1}^T\xi_t^*.
\end{align}
By \eqref{eq:kernel_rowsum_scaled}, the inequality $|\tr(AB')|\leq(\|A\|^2+\|B\|^2)/2$, and \eqref{eq:intermediate_jacobian_bound}, we obtain
\begin{align}
    \E^*\|T^{-1}\sum_t\xi_t^*\bD_t(\overline{\btheta}_t)\|^2
    &=\frac1{T^2}\sum_{t,s=1}^T K_p(\rho_{T,t-s})
    \tr(\bD_t(\overline{\btheta}_t)\bD_t(\overline{\btheta}_s)')\\
    &\leq \frac{C b_T}{T}
    \left\{\frac1T\sum_{t=1}^T
    \sup_{\btheta\in\Theta}\|\bD_t(\btheta)\|^2\right\}
    =O_p\left(\frac{b_T}{T}\right)=o_p(1).
    \label{eq:derivative_double_sum}
\end{align}
Since $\sqrt{T}(\what{\btheta}-\btheta_0)=O_p(1)$,
\begin{align}
    \E^*\|\br_{1T}^*\|^2=o_p(1).
    \label{eq:A_negligible}
\end{align}

For the recentering term, \eqref{eq:gbar_rate} and
\eqref{eq:kernel_rowsum_scaled} give
\begin{align}
    \E^*\|\br_{2T}^*\|^2
    &=\|\overline{\what{\bg}}\|^2\frac1T
    \sum_{t,s=1}^TK_p(\rho_{T,t-s})
    \leq Cb_T\|\overline{\what{\bg}}\|^2
    =O_p\left(\frac{b_T}{T}\right)=o_p(1).
    \label{eq:center_remainder_negligible}
\end{align}
An application of Jensen's inequality implies that both $\E^*\|\br_{1T}^*\|$ and $\E^*\|\br_{2T}^*\|$ are $o_p(1)$.

Slutsky's theorem, \eqref{eq:main_conditional_clt},
\eqref{eq:A_negligible}, and \eqref{eq:center_remainder_negligible} now imply
$\bs_T^*\CDs N(\bzero,\bOmega_0)$ in probability.  Since $\bOmega_0$ is positive
definite, the P\'olya theorem applies conditionally, so
\begin{align}
    \sup_{\bx\in\bbR^d}\left|\Pro^*(\bs_T^*\leq \bx)-\Pro(\by\leq \bx)\right|
    \stackrel{p}{\longrightarrow}0.
\end{align}
Combining this display with \eqref{eq:polya_original} proves the theorem.
\end{proof}

\noindent\begin{proof}[Proof of Corollary \ref{cor:theta_hat_boot_consistent}]
By Assumption~\ref{ass:moment_weakdepend}(i) and the definition of $\bV_{\theta}$,
\begin{align}
    \sqrt{T}(\what{\btheta}-\btheta_0)
    =
    \bB_0\frac{1}{\sqrt{T}}\sum_{t=1}^T\bg_t(\btheta_0)+o_p(1)
    \CD
    N(\bzero,\bV_{\theta}).
    \label{eq:theta_original_limit_corproof}
\end{align}
By Theorem~\ref{thm:score_boot_consistent},
\begin{align}
    \frac{1}{\sqrt{T}}\sum_{t=1}^T\what{\bg}_t^*
    \CDs
    N(\bzero,\bOmega_0)
\end{align}
in probability.
Since $\what{\bB}\CP\bB_0$, Slutsky's theorem applied to \eqref{eq:bootstrap_theta_update} gives
\begin{align}
    \sqrt{T}(\what{\btheta}^*-\what{\btheta})
    =
    \what{\bB}
    \frac{1}{\sqrt{T}}\sum_{t=1}^T\what{\bg}_t^*
    \CDs
    N(\bzero,\bV_{\theta})
    \label{eq:theta_bootstrap_limit_corproof}
\end{align}
in probability.
Because $\operatorname{rank}(\bB_0)=k$ and $\bOmega_0$ is positive definite, $\bV_{\theta}$ is positive definite.
The P\'olya theorem applied to \eqref{eq:theta_original_limit_corproof} and conditionally to \eqref{eq:theta_bootstrap_limit_corproof}, followed by the triangle inequality, yields the result.
\end{proof}

\noindent\begin{proof}[Proof of Corollary \ref{cor:wald}]
Under $H_0$, let $\bSigma_A=\bA\bV_{\theta}\bA'$ and define the infeasible population-studentized vector
\begin{align}
    \bz_{W,T}^{0}
    &=
    \bSigma_A^{-1/2}
    \sqrt{T}\bA(\what{\btheta}-\btheta_0).
    \label{eq:wald_infeasible_original_proof}
\end{align}
Corollary~\ref{cor:theta_hat_boot_consistent} and the positive definiteness of $\bSigma_A$ give $\bz_{W,T}^{0}\CD N(\bzero,\bI_\ell)$.
Since $\bA\what{\bV}_{\theta}\bA'\CP\bSigma_A$ and $\sqrt{T}\bA(\what{\btheta}-\btheta_0)=O_p(1)$, the continuous mapping theorem gives
\begin{align}
    \what{\bz}_{W}-\bz_{W,T}^{0}
    &=
    \left\{
    (\bA\what{\bV}_{\theta}\bA')^{-1/2}
    -
    \bSigma_A^{-1/2}
    \right\}
    \sqrt{T}\bA(\what{\btheta}-\btheta_0)
    =
    o_p(1).
    \label{eq:wald_feasible_infeasible_original_proof}
\end{align}
Hence the feasible statistic satisfies
\begin{align}
    \what{\bz}_{W}
    \CD
    N(\bzero,\bI_\ell).
    \label{eq:wald_standardized_original_proof}
\end{align}

For the bootstrap, define the infeasible counterpart
\begin{align}
    \bz_{W,T}^{*,0}
    &=
    \bSigma_A^{-1/2}
    \sqrt{T}\bA(\what{\btheta}^*-\what{\btheta}).
    \label{eq:wald_infeasible_bootstrap_proof}
\end{align}
Corollary~\ref{cor:theta_hat_boot_consistent} gives $\bz_{W,T}^{*,0}\CDs N(\bzero,\bI_\ell)$ in probability.
Under fixed studentization, $\bA\what{\bV}_{\theta}\bA'\CP\bSigma_A$, while under recomputed studentization $\bA\what{\bV}_{\theta}^*\bA'\CPs\bSigma_A$ in probability.
Since $\sqrt{T}\bA(\what{\btheta}^*-\what{\btheta})=O_{p^*}(1)$ in probability, Slutsky's theorem gives $\what{\bz}_{W}^*-\bz_{W,T}^{*,0}=o_{p^*}(1)$ in probability for either studentization scheme.
Therefore the feasible bootstrap statistic satisfies
\begin{align}
    \what{\bz}_{W}^*
    \CDs
    N(\bzero,\bI_\ell)
    \quad\text{in probability}.
    \label{eq:wald_standardized_bootstrap_proof}
\end{align}
The P\'olya theorem applied to \eqref{eq:wald_standardized_original_proof} and conditionally to \eqref{eq:wald_standardized_bootstrap_proof}, followed by the triangle inequality, proves \eqref{eq:wald_bootstrap_gaussian_validity}.
Finally, the continuous mapping theorem gives $W_T\CD\chi_\ell^2$ and $W_T^*\CDs\chi_\ell^2$ in probability.
Because the $\chi_\ell^2$ distribution is continuous, the P\'olya theorem and the triangle inequality then give \eqref{eq:wald_bootstrap_validity}.
\end{proof}

\noindent\begin{proof}[Proof of Corollary \ref{cor:lm}]
Set
\begin{align}
    \bM_0
    &=
    \bG_{2\cdot1}'\bOmega_0^{-1},
    \qquad
    \wtilde{\bM}_p
    =
    \wtilde{\bG}_{2\cdot1}'\wtilde{\bOmega}_p^{-1},
\end{align}
and define
\begin{align}
    \bJ_0
    =
    \bM_0\bOmega_0\bM_0'
    =
    \bG_{2\cdot1}'\bOmega_0^{-1}\bG_{2\cdot1}.
    \label{eq:lm_projection_matrices_proof}
\end{align}
By construction,
\begin{align}
    \bM_0\bG_1
    =
    \bzero.
    \label{eq:lm_population_orthogonality_proof}
\end{align}
Premultiplying \eqref{eq:lm_restricted_moment_linearization} by $\wtilde{\bM}_p$ gives
\begin{align}
    \sqrt{T}\,\wtilde{\bM}_p\overline{\wtilde{\bg}}
    =
    \wtilde{\bM}_p\frac{1}{\sqrt{T}}\sum_{t=1}^T\bg_t(\btheta_0)
    +
    \wtilde{\bM}_p\bG_1\sqrt{T}(\wtilde{\boldsymbol{\eta}}_1-\boldsymbol{\eta}_{1,0})
    +
    o_p(1).
    \label{eq:lm_observed_projected_moment_expansion_proof}
\end{align}
The assumed consistency of $\wtilde{\bG}_1$, $\wtilde{\bG}_2$, and $\wtilde{\bOmega}_p$ implies that the first term equals $\bM_0T^{-1/2}\sum_t\bg_t(\btheta_0)+o_p(1)$, and the second is $o_p(1)$ by \eqref{eq:lm_population_orthogonality_proof} and \eqref{eq:lm_restricted_moment_linearization}.
Moreover, $\wtilde{\bG}_{2\cdot1}'\wtilde{\bOmega}_p^{-1}\wtilde{\bG}_{2\cdot1}\CP\bJ_0$.
Therefore the score CLT and the continuous mapping theorem yield
\begin{align}
    \wtilde{\bz}_{LM}
    \CD
    N(\bzero,\bI_\ell).
    \label{eq:lm_standardized_original_proof}
\end{align}

For the bootstrap, define the infeasible counterpart
\begin{align}
    \bz_{LM,T}^{*,0}
    &=
    -\bJ_0^{-1/2}
    \bM_0
    \frac{1}{\sqrt{T}}\sum_{t=1}^T\xi_t^*\bg_t(\btheta_0).
    \label{eq:lm_infeasible_bootstrap_proof}
\end{align}
The conditional CLT in \eqref{eq:main_conditional_clt} gives $\bz_{LM,T}^{*,0}\CDs N(\bzero,\bI_\ell)$ in probability.
For the numerator of \eqref{eq:lm_bootstrap_generic}, \eqref{eq:restricted_moment_bootstrap} gives
\begin{align}
    \sqrt{T}\,\wtilde{\bM}_p\overline{\wtilde{\bg}^*}
    &=
    \bM_0\frac{1}{\sqrt{T}}\sum_{t=1}^T\xi_t^*\bg_t(\btheta_0)
    +
    \boldsymbol{\rho}_{1T}^*
    +
    \boldsymbol{\rho}_{2T}^*
    -
    \boldsymbol{\rho}_{3T}^*,
    \label{eq:lm_boot_projected_moment_decomposition_proof}
\end{align}
where
\begin{align}
    \boldsymbol{\rho}_{1T}^*
    &=
    (\wtilde{\bM}_p-\bM_0)\frac{1}{\sqrt{T}}\sum_{t=1}^T\xi_t^*\bg_t(\btheta_0),
    \\
    \boldsymbol{\rho}_{2T}^*
    &=
    \wtilde{\bM}_p\frac{1}{\sqrt{T}}\sum_{t=1}^T\xi_t^*\{\bg_t(\wtilde{\btheta})-\bg_t(\btheta_0)\},
    \\
    \boldsymbol{\rho}_{3T}^*
    &=
    \wtilde{\bM}_p\overline{\wtilde{\bg}}\frac{1}{\sqrt{T}}\sum_{t=1}^T\xi_t^*.
    \label{eq:lm_boot_remainders_proof}
\end{align}
The conditional CLT in \eqref{eq:main_conditional_clt} and $\wtilde{\bM}_p\CP\bM_0$ give $\boldsymbol{\rho}_{1T}^*=o_{p^*}(1)$ in probability.
Root-$T$ consistency of $\wtilde{\btheta}$ and the same derivative argument as in \eqref{eq:derivative_double_sum} give $\boldsymbol{\rho}_{2T}^*=o_{p^*}(1)$ in probability.
For $\boldsymbol{\rho}_{3T}^*$, the fact that $\sqrt{T}\,\overline{\wtilde{\bg}}=O_p(1)$ by  \eqref{eq:lm_restricted_moment_linearization}, while $\Var^*(T^{-1/2}\sum_t\xi_t^*)=O(b_T)$ by \eqref{eq:kernel_rowsum_scaled} implies $\boldsymbol{\rho}_{3T}^*=o_{p^*}(1)$ because $b_T/T\to0$.
For fixed studentization, the bootstrap covariance in \eqref{eq:lm_bootstrap_generic} converges in probability to $\bJ_0$.
For recomputed studentization, $\wtilde{\bOmega}_p^*\CPs\bOmega_0$ and $\wtilde{\bM}_p\CP\bM_0$ imply $\wtilde{\bM}_p\wtilde{\bOmega}_p^*\wtilde{\bM}_p'\CPs\bJ_0$ in probability.
Slutsky's theorem therefore gives $\wtilde{\bz}_{LM}^*-\bz_{LM,T}^{*,0}=o_{p^*}(1)$ in probability for both studentization schemes.
Hence the feasible bootstrap statistic satisfies
\begin{align}
    \wtilde{\bz}_{LM}^*
    \CDs
    N(\bzero,\bI_\ell)
    \quad\text{in probability}.
    \label{eq:lm_standardized_bootstrap_proof}
\end{align}
The P\'olya theorem applied to \eqref{eq:lm_standardized_original_proof} and conditionally to \eqref{eq:lm_standardized_bootstrap_proof}, followed by the triangle inequality, proves \eqref{eq:lm_bootstrap_gaussian_validity}.
Finally, the results for $LM_T$ and $LM_T^*$ follow from the continuous mapping argument and the continuity of the $\chi_\ell^2$ distribution.
\end{proof}

\subsection{Proof of the regression validity result}
\label{app:reg_primitive}

We first record two technical lemmas used in the proof of Proposition~\ref{prop:reg_validity}.

\begin{lem}[HAC matrix replacement]
\label{lem:reg_hac_replacement}
For arbitrary arrays $\{\ba_t\}_{t=1}^T$ and $\{\bd_t\}_{t=1}^T$ in $\bbR^m$, where $m$ is fixed,
\begin{align}
    &\left\|
      {1\over T}\sum_{t=1}^T\sum_{s=1}^T
      K_p\!\left[\exp\left\{-\left({t-s\over \lambda_T}\right)^2\right\}\right]
      (\ba_t\ba_s'-\bd_t\bd_s')
    \right\|
    \label{eqn:diff_quadratic} \\
    &\qquad\leq
    C\lambda_T
      \left\{T^{-1}\sum_{t=1}^T\|\bd_t\|^2\right\}^{1/2}
      \left\{T^{-1}\sum_{t=1}^T\|\ba_t-\bd_t\|^2\right\}^{1/2}
    +C\lambda_T T^{-1}\sum_{t=1}^T\|\ba_t-\bd_t\|^2,
\end{align}
where $C<\infty$ does not depend on the arrays or $T$.
\end{lem}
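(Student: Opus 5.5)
We will use the algebraic identity
\begin{align}
    \ba_t\ba_s'-\bd_t\bd_s'
    =
    \bd_t(\ba_s-\bd_s)'+(\ba_t-\bd_t)\bd_s'+(\ba_t-\bd_t)(\ba_s-\bd_s)',
\end{align}
writing $\be_t=\ba_t-\bd_t$ and $k_{ts}=K_p[\exp\{-((t-s)/\lambda_T)^2\}]$. The difference in \eqref{eqn:diff_quadratic} then splits into three double sums. Each will be bounded in norm by the triangle inequality together with $\|\bu\bv'\|=\|\bu\|\|\bv\|$. This reduces the problem to scalar bilinear forms
\begin{align}
    T^{-1}\sum_{t,s}|k_{ts}|\,x_ty_s,
\end{align}
with nonnegative sequences $x_t,y_s\in\{\|\bd_t\|,\|\be_t\|\}$.

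The key step is a Schur-test bound for the matrix $(|k_{ts}|)_{t,s}$. By \eqref{bound:kp_exp_uni} from the proof of Lemma~\ref{lem:kernel_radem}, $|k_{ts}|\le C\exp\{-(t-s)^2/\lambda_T^2\}$. Hence
\begin{align}
    \max_t\sum_s|k_{ts}|\le C\sum_{h\in\bbZ}\exp(-h^2/\lambda_T^2)\le C(1+\sqrt{\pi}\lambda_T).
\end{align}
This is the analogue of \eqref{eq:kernel_rowsum}, with a constant independent of $T$ and of the arrays. By symmetry the column sums satisfy the same bound.

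Using $x_ty_s\le\frac12(x_t^2+y_s^2)$ applied to weighted terms, or equivalently Cauchy--Schwarz with weights $|k_{ts}|$, we get
\begin{align}
    \sum_{t,s}|k_{ts}|x_ty_s\le\Big(\sum_{t,s}|k_{ts}|x_t^2\Big)^{1/2}\Big(\sum_{t,s}|k_{ts}|y_s^2\Big)^{1/2}\le C\lambda_T\Big(\sum_tx_t^2\Big)^{1/2}\Big(\sum_sy_s^2\Big)^{1/2}.
\end{align}
Dividing by $T$ gives the two cross terms, each bounded by $C\lambda_T\{T^{-1}\sum\|\bd_t\|^2\}^{1/2}\{T^{-1}\sum\|\be_t\|^2\}^{1/2}$. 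The quadratic term in $\be$ is bounded by $C\lambda_T T^{-1}\sum\|\be_t\|^2$. Adding the three pieces yields the stated inequality, with the constant absorbing the factor 2 from the cross terms.

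The only point needing care is the uniform row-sum bound $O(\lambda_T)$. The constant $1+\sqrt{\pi}\lambda_T\le C\lambda_T$ holds only once $\lambda_T$ is bounded away from zero, say $\lambda_T\ge1$, which is the relevant regime since $\lambda_T$ is a bandwidth tending to infinity. Otherwise one replaces $\lambda_T$ by $\max(1,\lambda_T)$. Everything else is routine.
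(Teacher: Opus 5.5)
Your proof is correct and follows the paper's argument essentially step for step: the same decomposition $\ba_t\ba_s'-\bd_t\bd_s'=\bd_t\be_s'+\be_t\bd_s'+\be_t\be_s'$, the row- and column-sum bound of order $\lambda_T$ obtained from the Gaussian tail bound \eqref{bound:kp_exp_uni}, and a Cauchy--Schwarz inequality weighted by $|k_{ts}|$. Your caveat that $1+\sqrt{\pi}\lambda_T\le C\lambda_T$ needs $\lambda_T$ bounded away from zero is a genuine precision that the paper leaves implicit; it is harmless here, since the lemma is only applied with $\lambda_T=b_T$ or $b_T^*\to\infty$.
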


\begin{proof}
Let $\be_t=\ba_t-\bd_t$ and write $w_{ts}=K_p[\exp\{-((t-s)/\lambda_T)^2\}]$.
Since $\ba_t\ba_s'-\bd_t\bd_s'=\bd_t\be_s'+\be_t\bd_s'+\be_t\be_s'$, the triangle inequality separates the difference in \eqref{eqn:diff_quadratic} into three terms. We bound the three terms in turn.
The Gaussian tail bound \eqref{bound:kp_exp_uni} gives
\[
    \max_t\sum_{s=1}^T|w_{ts}|
    =
    \max_s\sum_{t=1}^T|w_{ts}|
    \leq C\lambda_T.
\]
Since $\|\bu\bv'\|\leq\|\bu\|\,\|\bv\|$, the Cauchy--Schwarz inequality bounds the first term as
\begin{align*}
    \sum_{t,s=1}^T|w_{ts}|\,\|\bd_t\|\,\|\be_s\|
    &\leq
    \left\{\sum_{t,s=1}^T|w_{ts}|\|\bd_t\|^2\right\}^{1/2}
    \left\{\sum_{t,s=1}^T|w_{ts}|\|\be_s\|^2\right\}^{1/2} \\
    &\leq
    C\lambda_T
    \left(\sum_{t=1}^T\|\bd_t\|^2\right)^{1/2}
    \left(\sum_{t=1}^T\|\be_t\|^2\right)^{1/2}.
\end{align*}
The same bound applies to the second cross term, while
\[
    \sum_{t,s=1}^T|w_{ts}|\,\|\be_t\|\,\|\be_s\|
    \leq
    C\lambda_T\sum_{t=1}^T\|\be_t\|^2.
\]
Dividing by $T$ proves the result.
\end{proof}

\begin{lem}[Multiplier-product covariance bound]
\label{lem:multiplier_product_covariance}
For the dependent two-point multiplier process with latent correlations $\rho_{T,h}$ given by \eqref{eq:latent_gaussian_corr} and $t,s,u,v\in\{1,\ldots,T\}$, there exist constants $C,c_0>0$, independent of $T$ and the indices, such that
\begin{align}
    \left|\operatorname{Cov}^*(\xi_t^*\xi_s^*,\xi_u^*\xi_v^*)\right|
    \leq
    C\exp\left[-{c_0\over b_T^2}\min\{|t-u|,|t-v|,|s-u|,|s-v|\}^2\right].
    \label{eq:multiplier_product_covariance_decay}
\end{align}
Consequently, for all sufficiently large $T$, uniformly over $h,k\in\bbZ$ and $t$ satisfying $1\leq t,t-h\leq T$,
\begin{align}
    \sum_{\substack{1\leq u\leq T\\1\leq u-k\leq T}}
    \left|\operatorname{Cov}^*(\xi_t^*\xi_{t-h}^*,\xi_u^*\xi_{u-k}^*)\right|
    \leq Cb_T.
    \label{eq:multiplier_product_covariance_rowsum}
\end{align}
\end{lem}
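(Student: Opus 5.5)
The plan is to write each multiplier as a function of a Gaussian coordinate, $\xi_t^*=\psi(Z_t)$ with $\psi(z)=a_p\bone\{z\leq q_p\}+b_p\bone\{z>q_p\}$. We then expand $\psi$ in normalized Hermite polynomials, $\psi=\sum_{n\geq1}c_nH_n/\sqrt{n!}$. Here $c_0=0$ because $\E\xi_t^*=0$, and $\sum_n c_n^2=1$ because $\E\xi_t^{*2}=1$.

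\textbf{Reduction to cross correlations.} The product $\xi_t^*\xi_s^*$ is a square-integrable functional of the Gaussian pair $(Z_t,Z_s)$. We can therefore use the covariance (diagram) formula for products of Hermite polynomials of jointly Gaussian variables. It is simpler to bypass Hermite expansions and argue by Gaussian conditioning as follows. Let $r=\max\{|\rho_{T,t-u}|,|\rho_{T,t-v}|,|\rho_{T,s-u}|,|\rho_{T,s-v}|\}$. We claim that
\begin{align}
|\operatorname{Cov}^*(\xi_t^*\xi_s^*,\xi_u^*\xi_v^*)|\leq C r,
\end{align}
with $C$ depending only on $p$. Since $\rho_{T,h}=\exp\{-(h/b_T)^2\}$, the largest cross correlation is attained at the smallest cross lag. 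Hence $r=\exp[-\min\{|t-u|,|t-v|,|s-u|,|s-v|\}^2/b_T^2]$, which gives \eqref{eq:multiplier_product_covariance_decay} with $c_0=1$.

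\textbf{Proof of the claim.} The covariance is a bounded quantity, at most $\max(|a_p|,|b_p|)^4$ in absolute value. It is therefore enough to prove the claim when $r\leq r_0$ for a small fixed $r_0$. Write the Gaussian block $(Z_u,Z_v)=\bC_{uv,ts}\bC_{ts,ts}^{-1}(Z_t,Z_s)'+(\tilde Z_u,\tilde Z_v)$, with the residual independent of $(Z_t,Z_s)$.

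A direct perturbation argument fails when $|\rho_{T,t-s}|$ is close to one, because then $\bC_{ts,ts}$ is nearly singular. To avoid this, compare the joint law of the four Gaussians with the block-diagonal law $N(\bzero,\mathrm{diag}(\bR_{ts},\bR_{uv}))$, under which the covariance is exactly zero. The function $f=\psi\otimes\psi\otimes\psi\otimes\psi$ is bounded, so the covariance equals $\E f$ under the true law minus $\E f$ under the block-diagonal law. Interpolate between the two laws by $\Sigma(\lambda)=\Sigma_{\mathrm{bd}}+\lambda(\Sigma-\Sigma_{\mathrm{bd}})$. Gaussian interpolation (the heat-equation identity $\partial_\lambda\E f=\tfrac12\sum_{ij}\dot\Sigma_{ij}\E\partial_{ij}f$) then gives
\begin{align}
|\Delta|\leq \tfrac12\sum_{\text{cross }(i,j)}|\Sigma_{ij}|\sup_\lambda|\E_{\Sigma(\lambda)}\partial_i\partial_j f|.
\end{align}
For the indicator $\psi$, the quantity $\E\partial_i\partial_jf$ is interpreted distributionally. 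It reduces to a product of $(b_p-a_p)^2$ with the density of $(Z_i,Z_j)$ at $(q_p,q_p)$, times a conditional expectation of the remaining bounded factors. That density is bounded by $(2\pi)^{-1}(1-\Sigma_{ij}(\lambda)^2)^{-1/2}$. For a cross pair $(i,j)$ and $r\leq r_0$, we have $|\Sigma_{ij}(\lambda)|\leq r_0$, so the density is uniformly bounded, and the claim follows. Avoiding any reliance on the within-pair correlations is the main obstacle; the interpolation only differentiates in the cross directions, which resolves it. We still need positive definiteness along the path, or at least that $\Sigma(\lambda)$ is a valid covariance. This holds by convexity, since both endpoints are valid covariances. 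Where $\Sigma(\lambda)$ is singular, we smooth $\psi$ slightly and pass to the limit.

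\textbf{Row sums \eqref{eq:multiplier_product_covariance_rowsum}.} Fix $t$, $h$ and $k$. For each $u$, the minimal cross distance is at least $\min\{|t-u|,|t-h-u|\}$, because the other two distances involve $u-k$. This is bad when $k$ is large: $u-k$ may sit near $t$ even when $u$ is far away. So we bound instead $\exp(-c_0m^2/b_T^2)\leq\sum_{x\in\{t,t-h\}}\sum_{y\in\{u,u-k\}}\exp(-c_0|x-y|^2/b_T^2)$, where $m$ is the minimal cross distance. Summing over $u$ gives four terms, each of which is at most $\sum_{j\in\bbZ}\exp(-c_0j^2/b_T^2)\leq 1+b_T\sqrt{\pi/c_0}$. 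The total is therefore at most $Cb_T$ for large $T$, uniformly in $t$, $h$ and $k$.
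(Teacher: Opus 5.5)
Your argument is correct. For \eqref{eq:multiplier_product_covariance_decay} it takes a genuinely different route from the paper, while your row-sum argument is the same as the paper's: bound the exponential of the minimal cross distance by the sum of four Gaussian terms and sum each over $u$.

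\emph{The paper's route.} The paper proves the decay bound by coupling. It reuses the moving-average representation $Z_{T,t}=\int h_T(t-u)\,dW(u)$ from the proof of Theorem~\ref{thm:score_boot_consistent} and truncates the kernel at $M=d/(3b_T)$. The truncated products $\xi_{T,t}^{*(M)}\xi_{T,s}^{*(M)}$ and $\xi_{T,u}^{*(M)}\xi_{T,v}^{*(M)}$ then depend on Brownian increments over disjoint intervals, so they are exactly independent. The covariance is bounded by the $L^2$ truncation error \eqref{eq:multiplier_trunc_error}. This recycles machinery the paper already needs, but the constant deteriorates: anti-concentration gives a cube root of $1-\nu_M$, Cauchy--Schwarz a square root, and the choice $M=d/(3b_T)$ a factor $9$. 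So only some small $c_0>0$ is obtained, and the argument relies on a truncatable moving-average kernel.

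\emph{Your route.} You interpolate to the block-diagonal covariance and differentiate only in the cross entries. This gives $|\operatorname{Cov}^*|\le C\max_{\mathrm{cross}}|\rho|$, which is linear in the largest latent cross-correlation and yields $c_0=1$. It uses nothing about $\rho_{T,h}$ beyond monotonicity in $|h|$, so it transfers to any latent correlation profile. The price is the smoothing step.

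\emph{Making the smoothing airtight.} Mollify always, not only where $\Sigma(\lambda)$ is singular, since the interpolation identity requires $C^2$ functions. Then use that the mollified step $\psi_\varepsilon$ is nondecreasing, which gives $|\E\,\partial_i\partial_j f_\varepsilon|\le\max(a_p^2,b_p^2)\,\E\{\psi_\varepsilon'(X_i)\psi_\varepsilon'(X_j)\}\le\max(a_p^2,b_p^2)(b_p-a_p)^2\sup\phi_{ij}$. This bound involves only the nondegenerate bivariate law of the cross pair. It is cleaner than your description via a conditional expectation at $(q_p,q_p)$, which is ill-defined when, for example, $t=s$: one of the remaining factors is then $\psi$ evaluated exactly at its jump.

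\emph{A minor slip.} In your row-sum paragraph, the discarded remark should say the minimal cross distance is \emph{at most} $\min\{|t-u|,|t-h-u|\}$, not at least. It plays no role in the argument.
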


\begin{proof}
Let $d=\min\{|t-u|,|t-v|,|s-u|,|s-v|\}$.
Boundedness of the two-point multipliers gives \eqref{eq:multiplier_product_covariance_decay} when $d=0$.
For $d>0$, set $M=d/(3b_T)$ and recall the truncated multiplier process $\xi_{T,j}^{*(M)}$ in \eqref{eq:truncated_multiplier}.
Its value at index $j$ depends only on the Brownian increments over $[j-Mb_T,j+Mb_T]$.
Since every index in $\{t,s\}$ is at least distance $d$ from every index in $\{u,v\}$, the corresponding unions of intervals are disjoint.
Hence $\xi_{T,t}^{*(M)}\xi_{T,s}^{*(M)}$ and $\xi_{T,u}^{*(M)}\xi_{T,v}^{*(M)}$ are independent under the bootstrap law.
Adding and subtracting the truncated products, and using boundedness and the Cauchy--Schwarz inequality, we get
\begin{align*}
    \left|\operatorname{Cov}^*(\xi_t^*\xi_s^*,\xi_u^*\xi_v^*)\right|
    &\leq
    C\sum_{j\in\{t,s,u,v\}}\left\{\E^*|\xi_j^*-\xi_{T,j}^{*(M)}|^2\right\}^{1/2} \\
    &\leq
    C\exp(-c_0M^2),
\end{align*}
where the last inequality follows from \eqref{eq:multiplier_trunc_error}, after adjusting $c_0$ if necessary.
Substituting $M=d/(3b_T)$ proves \eqref{eq:multiplier_product_covariance_decay}.
For the row-sum bound, the minimum distance between the pairs $\{t,t-h\}$ and $\{u,u-k\}$ is
\[
    \min\{|u-t|,|u-(t+k)|,|u-(t-h)|,|u-(t-h+k)|\}.
\]
Therefore, after another adjustment of constants,
\begin{align*}
    &\exp\left[-{c_0\over b_T^2}\min\{|u-t|,|u-(t+k)|,|u-(t-h)|,|u-(t-h+k)|\}^2\right] \\
    &\qquad\leq
    \exp\left(-c_0{(u-t)^2\over b_T^2}\right)
    +\exp\left(-c_0{(u-t-k)^2\over b_T^2}\right)\\  &\qquad\quad
    +\exp\left(-c_0{(u-t+h)^2\over b_T^2}\right)
    +\exp\left(-c_0{(u-t+h-k)^2\over b_T^2}\right).
\end{align*}
Each term has discrete row sum of order $b_T$, since
\[
    \sum_{j\in\bbZ}\exp(-c_0j^2/b_T^2)
    \leq
    1+\int_{-\infty}^{\infty}\exp(-c_0x^2/b_T^2)\,dx
    \leq Cb_T.
\]
This proves \eqref{eq:multiplier_product_covariance_rowsum}.
\end{proof}

\noindent\begin{proof}[Proof of Proposition~\ref{prop:reg_validity}]

\noindent\textbf{Part (i): Original-sample limits and matched-HAC consistency.}

Write $\what{\bQ}_x=T^{-1}\sum_{t=1}^T\bx_t\bx_t'$.
The mixing law of large numbers (LLN) and Assumption~\ref{ass:reg_additional} give $\what{\bQ}_x\CP\bQ_x$.
The strongly mixing CLT gives
\begin{align}
    {1\over\sqrt T}\sum_{t=1}^T\bg_t
    \CD
    N(\bzero,\bOmega_0).
    \label{eq:reg_g_clt}
\end{align}
The unrestricted OLS estimator satisfies
\begin{align}
    \what{\btheta}-\btheta_0
    =
    \what{\bQ}_x^{-1}T^{-1}\sum_{t=1}^T\bg_t
    =
    O_p(T^{-1/2}),
    \label{eq:reg_theta_rate}
\end{align}
which proves \eqref{eq:reg_theta_primitive_expansion}.
Let $\boldsymbol{\eta}_{1,0}$ denote the nuisance component of $\btheta_0=\bC_1\boldsymbol{\eta}_{1,0}+\bC_2\ba_0$.
The first-order condition of the restricted OLS estimator gives
\begin{align}
    \wtilde{\boldsymbol{\eta}}_1-\boldsymbol{\eta}_{1,0}
    =
    (\bC_1'\what{\bQ}_x\bC_1)^{-1}
    \bC_1'T^{-1}\sum_{t=1}^T\bg_t
    =
    O_p(T^{-1/2}).
    \label{eq:reg_eta1_rate}
\end{align}

We next establish that replacing the unobserved moment with its estimator has asymptotically no effect.
Using $\what{\bg}_t-\bg_t = -\bx_t\bx_t'(\what{\btheta}-\btheta_0)$, $\wtilde{\bg}_t-\bg_t = -\bx_t\bx_t'\bC_1(\wtilde{\boldsymbol{\eta}}_1-\boldsymbol{\eta}_{1,0})$, \eqref{eq:reg_theta_rate}, \eqref{eq:reg_eta1_rate}, the mixing LLN, and Assumption~\ref{ass:reg_additional}(iii), we obtain
\begin{align}
    {1\over T}\sum_{t=1}^T\|\what{\bg}_t-\bg_t\|^2
    &=
    O_p(T^{-1}),
    \\
    {1\over T}\sum_{t=1}^T\|\wtilde{\bg}_t-\bg_t\|^2
    &=
    O_p(T^{-1}).
    \label{eq:reg_uncentered_score_l2_replacement}
\end{align}
The restricted sample mean satisfies
\[
    \overline{\wtilde{\bg}}
    =
    T^{-1}\sum_{t=1}^T\bg_t
    -
    \what{\bQ}_x\bC_1(\wtilde{\boldsymbol{\eta}}_1-\boldsymbol{\eta}_{1,0})
    =
    O_p(T^{-1/2}),
\]
and hence
\begin{align}
    {1\over T}\sum_{t=1}^T
    \|\wtilde{\bg}_t-\overline{\wtilde{\bg}}-\bg_t\|^2
    =
    O_p(T^{-1}).
    \label{eq:reg_score_l2_replacement}
\end{align}

Consider the infeasible matched-HAC matrix
\begin{align}
    \bOmega_{p,T}
    =
    {1\over T}\sum_{t=1}^T\sum_{s=1}^T
    K_p(\rho_{T,t-s})
    \bg_t\bg_s'.
\end{align}
Let $\bGamma_0(h)=\E(\bg_t\bg_{t-h}')$ for $h\in\bbZ$.
Davydov's inequality and Assumption~\ref{ass:reg_additional}(ii) imply absolute summability of every coordinate sequence of $\{\bGamma_0(h)\}_{h\in\bbZ}$.
For each fixed $h$, $K_p(\rho_{T,h})\to1$, and boundedness of the kernel and dominated convergence give
\begin{align}
    \E\bOmega_{p,T}
    =
    \sum_{|h|<T}
    \left(1-{|h|\over T}\right)
    K_p(\rho_{T,h})
    \bGamma_0(h)
    \longrightarrow
    \bOmega_0.
    \label{eq:reg_hac_bias}
\end{align}
For each pair of coordinates of $\bOmega_{p,T}$, the covariance of the corresponding score products decomposes into one fourth-order cumulant and two products of second-order cross-covariances.
Assumption~\ref{ass:reg_additional}(iv), absolute summability of the second-order cross-covariances, and \eqref{eq:kernel_rowsum} therefore imply
\begin{align}
    \E\|\bOmega_{p,T}-\E\bOmega_{p,T}\|^2
    \leq
    C{1+b_T\over T}
    =
    o(1).
    \label{eq:reg_hac_variance}
\end{align}
Thus $\bOmega_{p,T}\CP\bOmega_0$.
An application of Lemma~\ref{lem:reg_hac_replacement} with $\lambda_T=b_T$, combined with \eqref{eq:reg_uncentered_score_l2_replacement}--\eqref{eq:reg_score_l2_replacement} and Assumption~\ref{ass:reg_additional}(v), gives
\[
    \what{\bOmega}_p-\bOmega_{p,T}=o_p(1),
    \qquad
    \wtilde{\bOmega}_p-\bOmega_{p,T}=o_p(1).
\]
Therefore $\what{\bOmega}_p\CP\bOmega_0$ and $\wtilde{\bOmega}_p\CP\bOmega_0$.
Since $\what{\bQ}_x\CP\bQ_x$, continuous mapping gives $\what{\bV}_{\theta}\CP\bV_{\theta}$.
Moreover, $\wtilde{\bG}_j\CP\bG_j$ for $j\in\{1,2\}$ and $\wtilde{\bOmega}_p\CP\bOmega_0$ imply $\wtilde{\bG}_{2\cdot1}'\wtilde{\bOmega}_p^{-1}\wtilde{\bG}_{2\cdot1}\CP\bG_{2\cdot1}'\bOmega_0^{-1}\bG_{2\cdot1}$. This proves \eqref{eq:reg_observed_hac_consistency}.

The unrestricted Gaussian limit in \eqref{eq:reg_observed_gaussian_limits} follows directly from \eqref{eq:reg_theta_primitive_expansion}, consistency of $\what{\bV}_{\theta}$, and Slutsky's theorem.
For the restricted studentized vector, we have
\begin{align}
    \sqrt T\,\overline{\wtilde{\bg}}
    =
    {1\over\sqrt T}\sum_{t=1}^T\bg_t
    +
    \wtilde{\bG}_1
    \sqrt T(\wtilde{\boldsymbol{\eta}}_1-\boldsymbol{\eta}_{1,0}).
    \label{eq:reg_restricted_moment_expansion}
\end{align}
Since $\wtilde{\bG}_{2\cdot1}'\wtilde{\bOmega}_p^{-1}\wtilde{\bG}_1=\bzero$ and $\wtilde{\bG}_{2\cdot1}'\wtilde{\bOmega}_p^{-1}\CP\bG_{2\cdot1}'\bOmega_0^{-1}$,
\begin{align}
    \sqrt T\left(\wtilde{\bG}_{2\cdot1}'\wtilde{\bOmega}_p^{-1}\right)\overline{\wtilde{\bg}}
    \CD
    N(\bzero,\bG_{2\cdot1}'\bOmega_0^{-1}\bG_{2\cdot1}).
    \label{eq:reg_lm_observed_projected_moment_clt}
\end{align}
Slutsky's theorem now gives $\wtilde{\bz}_{LM}\CD N(\bzero,\bI_\ell)$, which completes \eqref{eq:reg_observed_gaussian_limits}.
The chi-square limits $W_T\CD\chi_\ell^2$ and $LM_T\CD\chi_\ell^2$ follow by the continuous mapping theorem.

\medskip
\noindent\textbf{Part (ii): Residual bootstrap with fixed studentizers.}

Suppose additionally that Assumption~\ref{ass:reg_bootstrap}(i) holds.
The same truncation and conditional-CLT argument used in the proof of Theorem~\ref{thm:score_boot_consistent} applies to the $k$-dimensional moment process $\{\bg_t\}$, giving
\begin{align}
    {1\over\sqrt T}\sum_{t=1}^T\xi_t^*\bg_t
    \CDs
    N(\bzero,\bOmega_0)
    \label{eq:reg_ideal_boot_score_clt}
\end{align}
in probability.
By \eqref{eq:kernel_rowsum}, \eqref{eq:reg_uncentered_score_l2_replacement}, and the Cauchy--Schwarz inequality, we obtain
\begin{align}
    \E^*\left\|
    {1\over\sqrt T}\sum_{t=1}^T
    \xi_t^*(\what{\bg}_t-\bg_t)
    \right\|^2
    &\leq
    Cb_T
    {1\over T}\sum_{t=1}^T
    \|\what{\bg}_t-\bg_t\|^2
    =
    o_p(1),
\end{align}
and similarly
\[
    \E^*\left\|
    {1\over\sqrt T}\sum_{t=1}^T
    \xi_t^*(\wtilde{\bg}_t-\overline{\wtilde{\bg}}-\bg_t)
    \right\|^2
    =
    o_p(1).
\]
by \eqref{eq:reg_score_l2_replacement}. Consequently,
\begin{align}
    {1\over\sqrt T}\sum_{t=1}^T\xi_t^*\what{\bg}_t
    &\CDs
    N(\bzero,\bOmega_0),
    \\
    {1\over\sqrt T}\sum_{t=1}^T
    \xi_t^*(\wtilde{\bg}_t-\overline{\wtilde{\bg}})
    &\CDs
    N(\bzero,\bOmega_0).
    \label{eq:reg_direct_score_numerator_clt}
\end{align}
in probability.
Equation \eqref{eq:reg_resid_wald_numerator}, together with $\what{\bQ}_x\CP\bQ_x$, therefore gives
\begin{align}
    \sqrt T\bA(\what{\btheta}^*-\what{\btheta})
    \CDs
    N(\bzero,\bA\bV_{\theta}\bA')
    \label{eq:reg_residual_boot_numerator_wald_clt}
\end{align}
in probability.
For the residual LM statistic, we use \eqref{eq:reg_resid_lm_numerator}. Since $\|\wtilde{\bG}_{2\cdot1}'\wtilde{\bOmega}_p^{-1}\overline{\wtilde{\bg}}\|=O_p(T^{-1/2})$ and $\Var^*(T^{-1/2}\sum_{t=1}^T\xi_t^*)=O(b_T)$ by \eqref{eq:kernel_rowsum}, the second term in \eqref{eq:reg_resid_lm_numerator} is $o_{p^*}(1)$ in probability.
It follows that
\begin{align}
    \left(\wtilde{\bG}_{2\cdot1}'\wtilde{\bOmega}_p^{-1}\right)
    {1\over\sqrt T}\sum_{t=1}^T\wtilde{\bg}_{t,\mathrm{res}}^*
    \CDs
    N(\bzero,\bG_{2\cdot1}'\bOmega_0^{-1}\bG_{2\cdot1})
    \label{eq:reg_residual_boot_projected_moment_lm_clt}
\end{align}
in probability.
Part (i), \eqref{eq:reg_residual_boot_numerator_wald_clt}, and \eqref{eq:reg_residual_boot_projected_moment_lm_clt}, together with consistency of the fixed observed studentizers and Slutsky's theorem, prove the standardized Gaussian limits in \eqref{eq:reg_boot_fixed_clt}.

\medskip
\noindent\textbf{Part (iii): Recomputed bootstrap studentizers.}

Suppose additionally that Assumption~\ref{ass:reg_bootstrap}(ii)--(iii) holds.
First consider the ideal bootstrap HAC matrix
\begin{align}
    \bQ_T^*
    =
    {1\over T}\sum_{t=1}^T\sum_{s=1}^T
    K_p\!\left[\exp\left\{-\left({t-s\over b_T^*}\right)^2\right\}\right]
    \xi_t^*\xi_s^*\bg_t\bg_s'.
    \label{eq:reg_boot_hac_uncentered}
\end{align}
Conditional on the data,
\begin{align}
    \E^*\bQ_T^*
    =
    {1\over T}\sum_{t=1}^T\sum_{s=1}^T
    K_p\!\left[\exp\left\{-\left({t-s\over b_T^*}\right)^2\right\}\right]
    K_p(\rho_{T,t-s})
    \bg_t\bg_s'.
    \label{eq:reg_boot_hac_conditional_mean}
\end{align}
For each fixed lag, both kernel factors converge to one, and the Gaussian tail bound gives
\[
    \left|
    K_p\{\exp[-(h/b_T^*)^2]\}
    K_p(\rho_{T,h})
    \right|
    \leq
    C\exp\left[
    -c_1h^2\{(b_T^*)^{-2}+b_T^{-2}\}
    \right].
\]
Thus the product weight has absolute row sum $O(\min\{b_T^*,b_T\})$, and the same coordinatewise bias and cumulant-variance argument used in \eqref{eq:reg_hac_bias}--\eqref{eq:reg_hac_variance} gives
\begin{align}
    \E^*\bQ_T^*
    \CP
    \bOmega_0.
    \label{eq:reg_boot_hac_mean_limit}
\end{align}
Write $w_{T,h}=K_p\{\exp[-(h/b_T^*)^2]\}$.
The Gaussian tail bound implies
\[
    \sum_{h\in\bbZ}w_{T,h}^2
    \leq
    Cb_T^*,
    \qquad
    \sup_{q\in\bbZ}
    \sum_{h\in\bbZ}
    |w_{T,h}w_{T,h+q}|
    \leq
    Cb_T^*.
\]
For each coordinate of $\bQ_T^*$, Assumption~\ref{ass:reg_additional}(iv), absolute summability of the second-order cross-covariances, and Lemma~\ref{lem:multiplier_product_covariance} give
\begin{align}
    \E\left\{
    \E^*
    \|\bQ_T^*-\E^*\bQ_T^*\|^2
    \right\}
    \leq
    C{1+b_T+b_T^*\over T}
    =
    o(1).
    \label{eq:reg_boot_hac_expected_variance_bound}
\end{align}
Chebyshev's inequality therefore yields
\begin{align}
    \bQ_T^*
    \CPs
    \bOmega_0
    \qquad\text{in probability}.
    \label{eq:reg_boot_hac_uncentered_consistency}
\end{align}
The same result holds after centering the ideal bootstrap score array.
Indeed,
\begin{align}
    T\E^*\left\|
    T^{-1}\sum_{t=1}^T\xi_t^*\bg_t
    \right\|^2
    =
    \operatorname{tr}\left[
    {1\over T}\sum_{t=1}^T\sum_{s=1}^T
    K_p(\rho_{T,t-s})
    \bg_t\bg_s'
    \right]
    =
    \operatorname{tr}(\bOmega_{p,T})
    =
    O_p(1).
    \label{eq:reg_ideal_boot_mean_rate}
\end{align}
Hence $T^{-1}\sum_{t=1}^T\xi_t^*\bg_t=O_{p^*}(T^{-1/2})$ in probability.
Lemma~\ref{lem:reg_hac_replacement}, Assumption~\ref{ass:reg_bootstrap}(iii), and Assumption~\ref{ass:reg_additional}(v) then show that centering changes $\bQ_T^*$ by $o_{p^*}(1)$ in probability.

The desired results now follow if we can show that replacing the ideal bootstrap moments by the residual-bootstrap moment arrays does not asymptotically contribute.
The unrestricted bootstrap OLS estimator satisfies
\begin{align}
    \what{\btheta}^*-\what{\btheta}
    =
    \what{\bQ}_x^{-1}
    T^{-1}\sum_{t=1}^T
    \bx_t\xi_t^*\what u_t
    =
    O_{p^*}(T^{-1/2})
    \label{eq:reg_boot_coef_rate}
\end{align}
in probability.
The restricted bootstrap estimator similarly satisfies
\begin{align}
    \wtilde{\boldsymbol{\eta}}_1^*-\wtilde{\boldsymbol{\eta}}_1
    =
    (\bC_1'\what{\bQ}_x\bC_1)^{-1}
    \bC_1'
    T^{-1}\sum_{t=1}^T
    \bx_t\xi_t^*\wtilde u_t
    =
    O_{p^*}(T^{-1/2})
    \label{eq:reg_boot_restricted_coef_rate}
\end{align}
in probability.
For the Wald residual bootstrap, we have
\[
    \what u_t^*
    =
    \xi_t^*\what u_t
    -
    \bx_t'(\what{\btheta}^*-\what{\btheta}),
\]
so
\[
    \what{\bg}_{t,\mathrm{res}}^*
    -
    \xi_t^*\bg_t
    =
    \xi_t^*(\what{\bg}_t-\bg_t)
    -
    \bx_t\bx_t'
    (\what{\btheta}^*-\what{\btheta}).
\]
Boundedness of the two-point multipliers, Assumption~\ref{ass:reg_bootstrap}(ii), \eqref{eq:reg_uncentered_score_l2_replacement}, and \eqref{eq:reg_boot_coef_rate} imply
\begin{align}
    {1\over T}\sum_{t=1}^T
    \|\what{\bg}_{t,\mathrm{res}}^*-\xi_t^*\bg_t\|^2
    =
    O_{p^*}(T^{-1})
    \qquad\text{in probability}.
    \label{eq:reg_wald_boot_array_equivalence}
\end{align}
For the restricted residual bootstrap,
\[
    \wtilde u_t^*
    =
    \xi_t^*\wtilde u_t
    -
    \bx_t'\bC_1
    (\wtilde{\boldsymbol{\eta}}_1^*-\wtilde{\boldsymbol{\eta}}_1),
\]
so
\[
    \wtilde{\bg}_{t,\mathrm{res}}^*
    -
    \xi_t^*\bg_t
    =
    \xi_t^*(\wtilde{\bg}_t-\bg_t)
    -
    \bx_t\bx_t'\bC_1
    (\wtilde{\boldsymbol{\eta}}_1^*-\wtilde{\boldsymbol{\eta}}_1).
\]
Using \eqref{eq:reg_uncentered_score_l2_replacement} and \eqref{eq:reg_boot_restricted_coef_rate},
\begin{align}
    {1\over T}\sum_{t=1}^T
    \|\wtilde{\bg}_{t,\mathrm{res}}^*-\xi_t^*\bg_t\|^2
    =
    O_{p^*}(T^{-1})
    \qquad\text{in probability}.
    \label{eq:reg_lm_boot_array_equivalence}
\end{align}
The same rate holds after centering this array.
Applying Lemma~\ref{lem:reg_hac_replacement} with bandwidth $b_T^*$ to \eqref{eq:reg_wald_boot_array_equivalence} and \eqref{eq:reg_lm_boot_array_equivalence}, using $b_T^*=O(b_T)$ and $b_T/T^{1/2}\to0$, gives
\[
    \what{\bOmega}_p^*
    \CPs
    \bOmega_0,
    \qquad
    \wtilde{\bOmega}_p^*
    \CPs
    \bOmega_0
\]
in probability.
Since $\what{\bQ}_x\CP\bQ_x$, $\wtilde{\bOmega}_p^*\CPs\bOmega_0$, and $\wtilde{\bG}_{2\cdot1}'\wtilde{\bOmega}_p^{-1}\CP\bG_{2\cdot1}'\bOmega_0^{-1}$, we conclude $\what{\bV}_{\theta}^*\CPs\bV_{\theta}$ and $\wtilde{\bG}_{2\cdot1}'\wtilde{\bOmega}_p^{-1}\wtilde{\bOmega}_p^*\wtilde{\bOmega}_p^{-1}\wtilde{\bG}_{2\cdot1}\CPs\bG_{2\cdot1}'\bOmega_0^{-1}\bG_{2\cdot1}$ in probability by the continuous mapping theorem.
The Gaussian limits, \eqref{eq:reg_boot_cdf_validity}, and the chi-square results follow from Slutsky's theorem, the P\'olya theorem, and the continuous mapping theorem.
\end{proof}

\section{Additional Monte Carlo Results}\label{app:additional_mc}

This appendix reports Monte Carlo results complementary to those in Section~\ref{sec:mc}. 
The main text focuses on centered $\chi_1^2$ innovations under the combined  AR(1) and deterministic heteroskedasticity design. 
We first examine robustness to Normal and Student-$t_5$ innovations under the same design, and then report corresponding two-step GMM results. 
We also compare the equal-tail two-sided $z$-tests used in the main analysis with conventional LM and Wald bootstrap tests.

\subsection{Additional nonlinear GMM results}

Tables~\ref{tab:nlgmm-normal} and~\ref{tab:nlgmm-t5} report the one-step
GMM results under Normal and Student-$t_5$ innovations, respectively,
using the same serial dependence and deterministic-heteroskedasticity
design as in Section~\ref{subsec:mc_nonlinear_gmm}.
The corresponding two-step GMM results are reported in
Subsection~\ref{app:twostep_mc}.

\subsubsection{Hall--Horowitz recentering for MBB}\label{app:mbb_recentering}

For the nonlinear GMM implementation, we follow Hall and Horowitz (1996) and recenter the moving-block-bootstrap moment contributions by subtracting their conditional bootstrap mean.
This imposes the bootstrap analogue of the sample moment condition, so that the recentered bootstrap moment sequence has conditional mean zero. 
The recentered moments are used throughout the MBB estimation and studentization.

\begin{table}[!ht]
\centering
\caption{Empirical size and power of equal-tail two-sided $z$-tests based on the GMM estimator under serially dependent heteroskedastic Normal innovations}
\label{tab:nlgmm-normal}
\begin{threeparttable}
\small
\setlength{\tabcolsep}{5.0pt}
\renewcommand{\arraystretch}{1.05}

\begin{tabular}{
l
@{\hspace{1.0em}}
S[table-format=2.2]
S[table-format=2.2]
S[table-format=2.2]
@{\hspace{1.5em}}
S[table-format=2.2]
S[table-format=2.2]
S[table-format=2.2]
}
\toprule
Method
& \multicolumn{3}{c}{Size (\%)}
& \multicolumn{3}{c}{Power (\%)} \\
\cmidrule(lr){2-4}
\cmidrule(lr){5-7}
\multicolumn{1}{r}{$T$}
& {100} & {200} & {400}
& {100} & {200} & {400} \\
\midrule

\multicolumn{7}{l}{\textit{Panel A: Restricted $z$-statistic, $\wtilde{z}_{LM}$}} \\

\multicolumn{7}{l}{\textit{Asymptotic}} \\
\quad ASY Bartlett
& 10.06 & 7.66 & 7.28
& 43.14 & 66.44 & 89.28 \\
\quad ASY Rad.\ kernel
& 10.18 & 7.64 & 7.26
& 43.08 & 65.68 & 88.76 \\
\quad ASY Mam.\ kernel
& 10.20 & 7.74 & 7.26
& 43.20 & 65.96 & 88.90 \\

\multicolumn{7}{l}{\textit{Bootstrap}} \\
\quad Dep.\ Rademacher
& 6.78 & 5.80 & 5.82
& 35.76 & 59.94 & 86.28 \\
\quad Dep.\ Mammen
& 11.76 & 8.66 & 8.16
& 48.54 & 69.88 & 91.06 \\
\quad Gaussian DWB
& 7.36 & 6.28 & 6.50
& 38.56 & 62.44 & 87.76 \\
\quad MBB
& 10.26 & 7.88 & 7.72
& 46.90 & 68.62 & 90.60 \\

\midrule

\multicolumn{7}{l}{\textit{Panel B: Unrestricted $z$-statistic, $\what{z}_{W}$}} \\

\multicolumn{7}{l}{\textit{Asymptotic}} \\
\quad ASY Bartlett
& 12.66 & 8.78 & 7.74
& 61.52 & 80.08 & 95.24 \\
\quad ASY Rad.\ kernel
& 12.98 & 8.82 & 7.62
& 61.78 & 80.04 & 95.22 \\
\quad ASY Mam.\ kernel
& 13.02 & 8.88 & 7.64
& 61.88 & 80.18 & 95.26 \\

\multicolumn{7}{l}{\textit{Bootstrap}} \\
\quad Dep.\ Rademacher
& 10.22 & 7.34 & 6.30
& 57.44 & 77.16 & 93.86 \\
\quad Dep.\ Mammen
& 12.64 & 8.42 & 7.10
& 60.78 & 78.62 & 94.56 \\
\quad Gaussian DWB
& 10.54 & 7.66 & 6.66
& 59.08 & 78.16 & 94.52 \\
\quad MBB
& 9.96 & 7.40 & 6.58
& 56.70 & 76.18 & 93.64 \\

\bottomrule
\end{tabular}

\begin{tablenotes}[flushleft]
\footnotesize
\item \textit{Notes:} All reported tests are equal-tail two-sided $z$-tests for $H_0:\theta_2=0.5$ at the nominal $\alpha=0.05$ level based on the one-step GMM estimator. The size is the rejection frequency when $\theta_2=0.5$, while the power is the rejection frequency when $\theta_2=0.7$. Critical values are given by the $\alpha/2$ and $1-\alpha/2$ quantiles of the respective reference distributions. ASY Bartlett, ASY Rad.\ kernel, and ASY Mam.\ kernel denote asymptotic procedures based on the Bartlett kernel and the kernels induced by the dependent Rademacher and Mammen constructions, respectively. Dep.\ Rademacher, Dep.\ Mammen, and Gaussian DWB denote the corresponding dependent wild bootstrap procedures, and MBB denotes the moving-block bootstrap. 
\end{tablenotes}

\end{threeparttable}
\end{table}

\begin{table}[!ht]
\centering
\caption{Empirical size and power of equal-tail two-sided $z$-tests based on the GMM estimator under serially dependent heteroskedastic Student-$t_5$ innovations}
\label{tab:nlgmm-t5}
\begin{threeparttable}
\small
\setlength{\tabcolsep}{5.0pt}
\renewcommand{\arraystretch}{1.05}

\begin{tabular}{
l
@{\hspace{1.0em}}
S[table-format=2.2]
S[table-format=2.2]
S[table-format=2.2]
@{\hspace{1.5em}}
S[table-format=2.2]
S[table-format=2.2]
S[table-format=2.2]
}
\toprule
Method
& \multicolumn{3}{c}{Size (\%)}
& \multicolumn{3}{c}{Power (\%)} \\
\cmidrule(lr){2-4}
\cmidrule(lr){5-7}
\multicolumn{1}{r}{$T$}
& {100} & {200} & {400}
& {100} & {200} & {400} \\
\midrule

\multicolumn{7}{l}{\textit{Panel A: Restricted $z$-statistic, $\wtilde{z}_{LM}$}} \\

\multicolumn{7}{l}{\textit{Asymptotic}} \\
\quad ASY Bartlett
& 9.62 & 8.54 & 7.04
& 44.42 & 66.68 & 89.38 \\
\quad ASY Rad.\ kernel
& 9.68 & 8.60 & 6.90
& 43.96 & 65.96 & 88.86 \\
\quad ASY Mam.\ kernel
& 9.66 & 8.62 & 7.00
& 44.22 & 66.16 & 88.96 \\

\multicolumn{7}{l}{\textit{Bootstrap}} \\
\quad Dep.\ Rademacher
& 7.02 & 6.40 & 5.52
& 36.58 & 60.58 & 86.84 \\
\quad Dep.\ Mammen
& 11.64 & 10.26 & 7.88
& 49.20 & 70.80 & 91.06 \\
\quad Gaussian DWB
& 7.34 & 7.00 & 5.60
& 39.14 & 63.00 & 88.14 \\
\quad MBB
& 10.46 & 9.56 & 7.52
& 47.76 & 69.08 & 90.84 \\

\midrule

\multicolumn{7}{l}{\textit{Panel B: Unrestricted $z$-statistic, $\what{z}_{W}$}} \\

\multicolumn{7}{l}{\textit{Asymptotic}} \\
\quad ASY Bartlett
& 12.20 & 9.88 & 7.18
& 62.48 & 79.92 & 95.38 \\
\quad ASY Rad.\ kernel
& 12.62 & 9.88 & 7.22
& 62.52 & 79.86 & 95.32 \\
\quad ASY Mam.\ kernel
& 12.54 & 9.90 & 7.20
& 62.58 & 79.96 & 95.38 \\

\multicolumn{7}{l}{\textit{Bootstrap}} \\
\quad Dep.\ Rademacher
& 9.98 & 7.94 & 5.76
& 58.20 & 77.46 & 94.36 \\
\quad Dep.\ Mammen
& 12.88 & 9.52 & 7.16
& 61.50 & 78.82 & 94.54 \\
\quad Gaussian DWB
& 10.62 & 8.56 & 6.02
& 60.08 & 78.36 & 94.86 \\
\quad MBB
& 10.28 & 8.34 & 6.48
& 57.40 & 76.20 & 93.60 \\

\bottomrule
\end{tabular}

\begin{tablenotes}[flushleft]
\footnotesize
\item \textit{Notes:} All reported tests are equal-tail two-sided $z$-tests for $H_0:\theta_2=0.5$ at the nominal $\alpha=0.05$ level based on the one-step GMM estimator. The size is the rejection frequency when $\theta_2=0.5$, while the power is the rejection frequency when $\theta_2=0.7$. Critical values are given by the $\alpha/2$ and $1-\alpha/2$ quantiles of the respective reference distributions. ASY Bartlett, ASY Rad.\ kernel, and ASY Mam.\ kernel denote asymptotic procedures based on the Bartlett kernel and the kernels induced by the dependent Rademacher and Mammen constructions, respectively. Dep.\ Rademacher, Dep.\ Mammen, and Gaussian DWB denote the corresponding dependent wild bootstrap procedures, and MBB denotes the moving-block bootstrap. 
\end{tablenotes}

\end{threeparttable}
\end{table}

Tables~\ref{tab:nlgmm-normal} and \ref{tab:nlgmm-t5} show that the main conclusions are robust to the innovation distribution.
For the restricted $z$-statistic $\wtilde z_{LM}$, Dep.\ Rademacher provides the most accurate size control under both Normal and Student-$t_5$ innovations, with Gaussian DWB generally performing next best, while Dep.\ Mammen and MBB remain more oversized. 
The unrestricted $z$-statistic $\what z_W$ generally exhibits larger size distortions.
As in the centered $\chi_1^2$ design, the three asymptotic procedures behave very similarly, indicating that the differences among the bootstrap procedures are not driven primarily by the choice of HAC kernel.

\subsubsection{Two-step GMM robustness}\label{app:twostep_mc}
Tables~\ref{tab:nlgmm-two-chi2}--\ref{tab:nlgmm-two-t5} report two-step GMM under the same serially dependent heteroskedastic designs. 
Across all three innovation distributions, the qualitative findings are similar to those for one-step GMM: size distortions are generally smaller for the restricted $z$-statistic $\wtilde z_{LM}$ than for the unrestricted $z$-statistic $\what z_W$, and Dep.\ Rademacher provides the most accurate size control for the restricted statistic, with Gaussian DWB generally performing next best.
For centered $\chi_1^2$ innovations, for example, the Dep.\ Rademacher rejection frequencies under the null are 6.94\%, 7.10\%, and 5.88\%, compared with 6.12\%, 6.68\%, and 5.50\% for one-step GMM. 
Thus, changing the GMM weighting scheme does not alter the main finite-sample comparison among the bootstrap procedures.

\begin{table}[!ht]
\centering
\caption{Empirical size and power of equal-tail two-sided $z$-tests based on two-step GMM under serially dependent heteroskedastic centered $\chi_1^2$ innovations}
\label{tab:nlgmm-two-chi2}
\begin{threeparttable}
\small
\setlength{\tabcolsep}{5.0pt}
\renewcommand{\arraystretch}{1.05}

\begin{tabular}{
l
@{\hspace{1.0em}}
S[table-format=2.2]
S[table-format=2.2]
S[table-format=2.2]
@{\hspace{1.5em}}
S[table-format=2.2]
S[table-format=2.2]
S[table-format=2.2]
}
\toprule
Method
& \multicolumn{3}{c}{Size (\%)}
& \multicolumn{3}{c}{Power (\%)} \\
\cmidrule(lr){2-4}
\cmidrule(lr){5-7}
\multicolumn{1}{r}{$T$}
& {100} & {200} & {400}
& {100} & {200} & {400} \\
\midrule

\multicolumn{7}{l}{\textit{Panel A: Restricted $z$-statistic, $\wtilde{z}_{LM}$}} \\

\multicolumn{7}{l}{\textit{Asymptotic}} \\
\quad ASY Bartlett
& 8.60 & 8.14 & 6.54
& 51.64 & 75.20 & 94.58 \\
\quad ASY Rad.\ kernel
& 8.62 & 8.18 & 6.58
& 51.04 & 74.62 & 94.28 \\
\quad ASY Mam.\ kernel
& 8.54 & 8.18 & 6.64
& 51.32 & 74.74 & 94.28 \\

\multicolumn{7}{l}{\textit{Bootstrap}} \\
\quad Dep.\ Rademacher
& 6.94 & 7.10 & 5.88
& 45.62 & 71.18 & 93.12 \\
\quad Dep.\ Mammen
& 11.70 & 10.54 & 7.80
& 57.14 & 78.82 & 95.04 \\
\quad Gaussian DWB
& 7.24 & 7.54 & 6.12
& 48.94 & 73.90 & 93.90 \\
\quad MBB
& 10.88 & 9.74 & 7.86
& 55.68 & 77.08 & 94.52 \\

\midrule

\multicolumn{7}{l}{\textit{Panel B: Unrestricted $z$-statistic, }$\what{z}_{W}$} \\

\multicolumn{7}{l}{\textit{Asymptotic}} \\
\quad ASY Bartlett
& 11.46 & 9.10 & 7.36
& 71.24 & 86.96 & 97.60 \\
\quad ASY Rad.\ kernel
& 11.98 & 9.40 & 7.24
& 71.38 & 87.16 & 97.66 \\
\quad ASY Mam.\ kernel
& 11.84 & 9.40 & 7.28
& 71.42 & 87.12 & 97.66 \\

\multicolumn{7}{l}{\textit{Bootstrap}} \\
\quad Dep.\ Rademacher
& 10.14 & 8.14 & 6.36
& 69.10 & 85.70 & 97.24 \\
\quad Dep.\ Mammen
& 13.28 & 10.26 & 7.92
& 71.36 & 86.72 & 97.30 \\
\quad Gaussian DWB
& 10.56 & 8.54 & 6.68
& 69.84 & 86.42 & 97.48 \\
\quad MBB
& 10.56 & 8.96 & 7.20
& 66.88 & 84.22 & 96.60 \\

\bottomrule
\end{tabular}

\begin{tablenotes}[flushleft]
\footnotesize
\item \textit{Notes:} All reported tests are equal-tail two-sided $z$-tests for $H_0:\theta_2=0.5$ at the nominal $\alpha=0.05$ level based on the two-step GMM estimator. The size is the rejection frequency when $\theta_2=0.5$, while the power is the rejection frequency when $\theta_2=0.7$. Critical values are given by the $\alpha/2$ and $1-\alpha/2$ quantiles of the respective reference distributions. ASY Bartlett, ASY Rad.\ kernel, and ASY Mam.\ kernel denote asymptotic procedures based on the Bartlett kernel and the kernels induced by the dependent Rademacher and Mammen constructions, respectively. Dep.\ Rademacher, Dep.\ Mammen, and Gaussian DWB denote the corresponding dependent wild bootstrap procedures, and MBB denotes the moving-block bootstrap. 
\end{tablenotes}

\end{threeparttable}
\end{table}

\begin{table}[!ht]
\centering
\caption{Empirical size and power of equal-tail two-sided $z$-tests based on two-step GMM under serially dependent heteroskedastic Normal innovations}
\label{tab:nlgmm-two-normal}
\begin{threeparttable}
\small
\setlength{\tabcolsep}{5.0pt}
\renewcommand{\arraystretch}{1.05}

\begin{tabular}{
l
@{\hspace{1.0em}}
S[table-format=2.2]
S[table-format=2.2]
S[table-format=2.2]
@{\hspace{1.5em}}
S[table-format=2.2]
S[table-format=2.2]
S[table-format=2.2]
}
\toprule
Method
& \multicolumn{3}{c}{Size (\%)}
& \multicolumn{3}{c}{Power (\%)} \\
\cmidrule(lr){2-4}
\cmidrule(lr){5-7}
\multicolumn{1}{r}{$T$}
& {100} & {200} & {400}
& {100} & {200} & {400} \\
\midrule

\multicolumn{7}{l}{\textit{Panel A: Restricted $z$-statistic, $\wtilde{z}_{LM}$}} \\

\multicolumn{7}{l}{\textit{Asymptotic}} \\
\quad ASY Bartlett
& 9.46 & 7.12 & 6.92
& 47.30 & 72.36 & 93.78 \\
\quad ASY Rad.\ kernel
& 9.76 & 6.96 & 6.86
& 46.78 & 71.44 & 93.54 \\
\quad ASY Mam.\ kernel
& 9.76 & 7.00 & 6.84
& 46.96 & 71.62 & 93.60 \\

\multicolumn{7}{l}{\textit{Bootstrap}} \\
\quad Dep.\ Rademacher
& 7.44 & 5.82 & 5.96
& 40.02 & 66.90 & 92.02 \\
\quad Dep.\ Mammen
& 11.80 & 8.38 & 7.74
& 52.96 & 76.70 & 94.68 \\
\quad Gaussian DWB
& 8.18 & 6.28 & 6.14
& 43.12 & 70.34 & 93.06 \\
\quad MBB
& 10.40 & 7.66 & 7.24
& 51.72 & 75.82 & 94.34 \\

\midrule

\multicolumn{7}{l}{\textit{Panel B: Unrestricted $z$-statistic, }$\what{z}_{W}$} \\

\multicolumn{7}{l}{\textit{Asymptotic}} \\
\quad ASY Bartlett
& 12.42 & 8.46 & 7.46
& 68.14 & 85.58 & 97.44 \\
\quad ASY Rad.\ kernel
& 12.94 & 8.54 & 7.50
& 68.18 & 85.58 & 97.44 \\
\quad ASY Mam.\ kernel
& 12.94 & 8.54 & 7.48
& 68.16 & 85.56 & 97.42 \\

\multicolumn{7}{l}{\textit{Bootstrap}} \\
\quad Dep.\ Rademacher
& 11.18 & 7.20 & 6.36
& 64.80 & 83.68 & 97.00 \\
\quad Dep.\ Mammen
& 13.10 & 8.58 & 7.44
& 67.80 & 84.72 & 97.10 \\
\quad Gaussian DWB
& 11.40 & 7.86 & 6.46
& 65.64 & 84.50 & 97.18 \\
\quad MBB
& 10.58 & 7.36 & 6.30
& 63.68 & 83.24 & 96.68 \\

\bottomrule
\end{tabular}

\begin{tablenotes}[flushleft]
\footnotesize
\item \textit{Notes:} All reported tests are equal-tail two-sided $z$-tests for $H_0:\theta_2=0.5$ at the nominal $\alpha=0.05$ level based on the two-step GMM estimator. The size is the rejection frequency when $\theta_2=0.5$, while the power is the rejection frequency when $\theta_2=0.7$. Critical values are given by the $\alpha/2$ and $1-\alpha/2$ quantiles of the respective reference distributions. ASY Bartlett, ASY Rad.\ kernel, and ASY Mam.\ kernel denote asymptotic procedures based on the Bartlett kernel and the kernels induced by the dependent Rademacher and Mammen constructions, respectively. Dep.\ Rademacher, Dep.\ Mammen, and Gaussian DWB denote the corresponding dependent wild bootstrap procedures, and MBB denotes the moving-block bootstrap. 
\end{tablenotes}

\end{threeparttable}
\end{table}

\begin{table}[!ht]
\centering
\caption{Empirical size and power of equal-tail two-sided $z$-tests based on two-step GMM under serially dependent heteroskedastic Student-$t_5$ innovations}
\label{tab:nlgmm-two-t5}
\begin{threeparttable}
\small
\setlength{\tabcolsep}{5.0pt}
\renewcommand{\arraystretch}{1.05}

\begin{tabular}{
l
@{\hspace{1.0em}}
S[table-format=2.2]
S[table-format=2.2]
S[table-format=2.2]
@{\hspace{1.5em}}
S[table-format=2.2]
S[table-format=2.2]
S[table-format=2.2]
}
\toprule
Method
& \multicolumn{3}{c}{Size (\%)}
& \multicolumn{3}{c}{Power (\%)} \\
\cmidrule(lr){2-4}
\cmidrule(lr){5-7}
\multicolumn{1}{r}{$T$}
& {100} & {200} & {400}
& {100} & {200} & {400} \\
\midrule

\multicolumn{7}{l}{\textit{Panel A: Restricted $z$-statistic, $\wtilde{z}_{LM}$}} \\

\multicolumn{7}{l}{\textit{Asymptotic}} \\
\quad ASY Bartlett
& 9.60 & 8.54 & 6.60
& 48.64 & 73.52 & 94.18 \\
\quad ASY Rad.\ kernel
& 9.66 & 8.58 & 6.58
& 47.86 & 72.64 & 94.02 \\
\quad ASY Mam.\ kernel
& 9.80 & 8.62 & 6.58
& 48.08 & 72.90 & 94.12 \\

\multicolumn{7}{l}{\textit{Bootstrap}} \\
\quad Dep.\ Rademacher
& 8.02 & 6.84 & 5.76
& 40.62 & 68.28 & 92.80 \\
\quad Dep.\ Mammen
& 11.98 & 9.98 & 8.02
& 53.52 & 76.96 & 95.18 \\
\quad Gaussian DWB
& 8.20 & 7.52 & 5.76
& 44.96 & 71.18 & 93.66 \\
\quad MBB
& 11.00 & 9.92 & 7.32
& 52.60 & 75.92 & 95.10 \\

\midrule

\multicolumn{7}{l}{\textit{Panel B: Unrestricted $z$-statistic, }$\what{z}_{W}$} \\

\multicolumn{7}{l}{\textit{Asymptotic}} \\
\quad ASY Bartlett
& 12.72 & 10.14 & 6.94
& 68.10 & 86.28 & 97.72 \\
\quad ASY Rad.\ kernel
& 13.36 & 10.28 & 7.08
& 68.68 & 86.48 & 97.78 \\
\quad ASY Mam.\ kernel
& 13.18 & 10.26 & 7.10
& 68.62 & 86.56 & 97.80 \\

\multicolumn{7}{l}{\textit{Bootstrap}} \\
\quad Dep.\ Rademacher
& 11.30 & 8.74 & 6.22
& 65.84 & 84.56 & 97.16 \\
\quad Dep.\ Mammen
& 13.96 & 10.28 & 7.28
& 68.00 & 85.44 & 97.34 \\
\quad Gaussian DWB
& 11.58 & 9.24 & 6.22
& 66.58 & 85.34 & 97.44 \\
\quad MBB
& 11.00 & 9.28 & 6.18
& 64.58 & 83.52 & 96.94 \\

\bottomrule
\end{tabular}

\begin{tablenotes}[flushleft]
\footnotesize
\item \textit{Notes:} All reported tests are equal-tail two-sided $z$-tests for $H_0:\theta_2=0.5$ at the nominal $\alpha=0.05$ level based on the two-step GMM estimator. The size is the rejection frequency when $\theta_2=0.5$, while the power is the rejection frequency when $\theta_2=0.7$. Critical values are given by the $\alpha/2$ and $1-\alpha/2$ quantiles of the respective reference distributions. ASY Bartlett, ASY Rad.\ kernel, and ASY Mam.\ kernel denote asymptotic procedures based on the Bartlett kernel and the kernels induced by the dependent Rademacher and Mammen constructions, respectively. Dep.\ Rademacher, Dep.\ Mammen, and Gaussian DWB denote the corresponding dependent wild bootstrap procedures, and MBB denotes the moving-block bootstrap. 
\end{tablenotes}

\end{threeparttable}
\end{table}

\subsection{Additional regression results}

Tables~\ref{tab:reg-normal} and~\ref{tab:reg-t5} report the regression results under Normal and Student-$t_5$ innovations, respectively, using the same serial dependence and deterministic-heteroskedasticity design and residual-bootstrap implementation as in
Section~\ref{subsec:mc_regression_lm}. 

For MBB, residuals are resampled with the intercept re-estimated in each bootstrap draw. 
Subtracting the common Hall--Horowitz centering constant from the resampled residuals is absorbed by the intercept and therefore leaves the slope statistics and their HAC studentizers unchanged. Accordingly, these results are reported simply as residual MBB. 

\begin{table}[!ht]
\centering
\caption{Empirical size and power of equal-tail two-sided $z$-tests based on the OLS estimator under serially dependent heteroskedastic Normal innovations}
\label{tab:reg-normal}
\begin{threeparttable}
\small
\setlength{\tabcolsep}{5.0pt}
\renewcommand{\arraystretch}{1.05}

\begin{tabular}{
l
@{\hspace{1.0em}}
S[table-format=2.2]
S[table-format=2.2]
S[table-format=2.2]
@{\hspace{1.5em}}
S[table-format=2.2]
S[table-format=2.2]
S[table-format=2.2]
}
\toprule
Method
& \multicolumn{3}{c}{Size (\%)}
& \multicolumn{3}{c}{Power (\%)} \\
\cmidrule(lr){2-4}
\cmidrule(lr){5-7}
\multicolumn{1}{r}{$T$}
& {100} & {200} & {400}
& {100} & {200} & {400} \\
\midrule

\multicolumn{7}{l}{\textit{Panel A: Restricted $z$-statistic, $\wtilde{z}_{LM}$}} \\

\multicolumn{7}{l}{\textit{Asymptotic}} \\
\quad ASY Bartlett
& 9.82 & 8.16 & 7.34
& 34.18 & 51.44 & 76.52 \\
\quad ASY Rad.\ kernel
& 9.92 & 8.14 & 7.40
& 34.42 & 51.32 & 76.26 \\
\quad ASY Mam.\ kernel
& 9.90 & 8.28 & 7.40
& 34.54 & 51.40 & 76.48 \\

\multicolumn{7}{l}{\textit{Bootstrap}} \\
\quad Dep.\ Rademacher
& 6.06 & 5.48 & 5.74
& 24.62 & 43.04 & 70.68 \\
\quad Dep.\ Mammen
& 12.40 & 10.06 & 9.26
& 37.86 & 54.34 & 77.72 \\
\quad Gaussian DWB
& 7.34 & 6.32 & 6.36
& 28.96 & 46.76 & 72.44 \\
\quad MBB
& 6.26 & 5.74 & 6.08
& 25.94 & 44.92 & 72.18 \\

\midrule

\multicolumn{7}{l}{\textit{Panel B: Unrestricted $z$-statistic, $\what{z}_{W}$}} \\

\multicolumn{7}{l}{\textit{Asymptotic}} \\
\quad ASY Bartlett
& 12.42 & 10.08 & 8.50
& 39.72 & 54.88 & 77.78 \\
\quad ASY Rad.\ kernel
& 12.80 & 10.12 & 8.64
& 40.34 & 55.00 & 77.90 \\
\quad ASY Mam.\ kernel
& 12.74 & 10.22 & 8.64
& 40.40 & 55.12 & 78.04 \\

\multicolumn{7}{l}{\textit{Bootstrap}} \\
\quad Dep.\ Rademacher
& 8.50 & 6.98 & 6.50
& 29.66 & 46.08 & 72.08 \\
\quad Dep.\ Mammen
& 10.42 & 8.86 & 7.82
& 34.32 & 49.50 & 74.32 \\
\quad Gaussian DWB
& 9.14 & 7.66 & 6.78
& 32.22 & 48.80 & 73.64 \\
\quad MBB
& 6.76 & 6.12 & 6.26
& 27.40 & 45.52 & 71.96 \\

\bottomrule
\end{tabular}

\begin{tablenotes}[flushleft]
\footnotesize
\item \textit{Notes:} All reported tests are equal-tail two-sided $z$-tests for $H_0:\theta=0$ at the nominal $\alpha=0.05$ level. The size is the rejection frequency when $\theta=0$, while the power is the rejection frequency when $\theta=0.20$. Critical values are given by the $\alpha/2$ and $1-\alpha/2$ quantiles of the respective reference distributions. ASY Bartlett, ASY Rad.\ kernel, and ASY Mam.\ kernel denote asymptotic procedures based on the Bartlett kernel and the kernels induced by the dependent Rademacher and Mammen constructions, respectively. Dep.\ Rademacher, Dep.\ Mammen, and Gaussian DWB denote the corresponding dependent wild bootstrap procedures, and MBB denotes the moving-block bootstrap.
\end{tablenotes}

\end{threeparttable}
\end{table}

\begin{table}[!ht]
\centering
\caption{Empirical size and power of equal-tail two-sided $z$-tests based on the OLS estimator under serially dependent heteroskedastic Student-$t_5$ innovations}
\label{tab:reg-t5}
\begin{threeparttable}
\small
\setlength{\tabcolsep}{5.0pt}
\renewcommand{\arraystretch}{1.05}

\begin{tabular}{
l
@{\hspace{1.0em}}
S[table-format=2.2]
S[table-format=2.2]
S[table-format=2.2]
@{\hspace{1.5em}}
S[table-format=2.2]
S[table-format=2.2]
S[table-format=2.2]
}
\toprule
Method
& \multicolumn{3}{c}{Size (\%)}
& \multicolumn{3}{c}{Power (\%)} \\
\cmidrule(lr){2-4}
\cmidrule(lr){5-7}
\multicolumn{1}{r}{$T$}
& {100} & {200} & {400}
& {100} & {200} & {400} \\
\midrule

\multicolumn{7}{l}{\textit{Panel A: Restricted $z$-statistic, $\wtilde{z}_{LM}$}} \\

\multicolumn{7}{l}{\textit{Asymptotic}} \\
\quad ASY Bartlett
& 9.06 & 8.24 & 7.76
& 35.56 & 53.76 & 77.70 \\
\quad ASY Rad.\ kernel
& 8.96 & 8.12 & 7.68
& 35.76 & 53.54 & 77.52 \\
\quad ASY Mam.\ kernel
& 9.06 & 8.16 & 7.74
& 35.88 & 53.76 & 77.62 \\

\multicolumn{7}{l}{\textit{Bootstrap}} \\
\quad Dep.\ Rademacher
& 5.20 & 5.74 & 6.10
& 26.42 & 45.00 & 72.48 \\
\quad Dep.\ Mammen
& 11.54 & 10.34 & 9.52
& 39.54 & 56.22 & 78.52 \\
\quad Gaussian DWB
& 6.88 & 6.66 & 6.78
& 30.42 & 48.60 & 74.84 \\
\quad MBB
& 5.66 & 5.78 & 6.44
& 27.92 & 47.02 & 73.84 \\

\midrule

\multicolumn{7}{l}{\textit{Panel B: Unrestricted $z$-statistic, $\what{z}_{W}$}} \\

\multicolumn{7}{l}{\textit{Asymptotic}} \\
\quad ASY Bartlett
& 11.66 & 10.12 & 8.48
& 40.56 & 57.14 & 79.08 \\
\quad ASY Rad.\ kernel
& 11.96 & 10.12 & 8.38
& 41.02 & 57.18 & 79.30 \\
\quad ASY Mam.\ kernel
& 11.92 & 10.26 & 8.40
& 41.06 & 57.32 & 79.32 \\

\multicolumn{7}{l}{\textit{Bootstrap}} \\
\quad Dep.\ Rademacher
& 7.82 & 7.12 & 6.46
& 30.78 & 48.28 & 73.18 \\
\quad Dep.\ Mammen
& 10.58 & 9.10 & 8.12
& 35.94 & 52.48 & 75.52 \\
\quad Gaussian DWB
& 8.66 & 7.98 & 7.12
& 33.86 & 51.42 & 75.76 \\
\quad MBB
& 6.14 & 6.06 & 6.30
& 29.30 & 48.06 & 73.76 \\

\bottomrule
\end{tabular}

\begin{tablenotes}[flushleft]
\footnotesize
\item \textit{Notes:} All reported tests are equal-tail two-sided $z$-tests for $H_0:\theta=0$ at the nominal $\alpha=0.05$ level. The size is the rejection frequency when $\theta=0$, while the power is the rejection frequency when $\theta=0.20$. Critical values are given by the $\alpha/2$ and $1-\alpha/2$ quantiles of the respective reference distributions. ASY Bartlett, ASY Rad.\ kernel, and ASY Mam.\ kernel denote asymptotic procedures based on the Bartlett kernel and the kernels induced by the dependent Rademacher and Mammen constructions, respectively. Dep.\ Rademacher, Dep.\ Mammen, and Gaussian DWB denote the corresponding dependent wild bootstrap procedures, and MBB denotes the moving-block bootstrap.
\end{tablenotes}

\end{threeparttable}
\end{table}

Tables~\ref{tab:reg-normal} and \ref{tab:reg-t5} confirm that the regression findings are also robust to the innovation distribution.
For the restricted $z$-statistic $\wtilde z_{LM}$, Dep.\ Rademacher gives the most accurate size control across the reported sample sizes, with MBB also performing well, whereas Dep.\ Mammen remains appreciably oversized. 
Size distortions are again generally larger for the unrestricted $z$-statistic $\what z_W$, for which MBB provides particularly accurate
size control. 
These results reinforce the main regression findings obtained under the centered $\chi_1^2$ design.

\subsection{Comparison with conventional LM and Wald bootstrap tests}
\label{app:symmetric_tests}

The main Monte Carlo results use equal-tail two-sided $z$-tests, so that the lower and upper tails of the bootstrap distribution are assessed separately.
For comparison, we also consider the conventional LM and Wald bootstrap tests.
For a scalar restriction, the LM test based on $\wtilde z_{LM}$ and the Wald test based on $\what z_W$ are equivalent to rejecting when
\[
    |Z| > q_{1-\alpha,T}^{*}(|Z^*|),
\]
where $Z$ denotes the corresponding restricted or unrestricted $z$-statistic and
$q_{1-\alpha,T}^{*}(|Z^*|)$ is the conditional $1-\alpha$ quantile of $|Z^*|$.
To keep the comparison focused, Table~\ref{tab:symmetric-comparison}
reports only empirical size for the main centered $\chi_1^2$ design under
serial dependence and deterministic heteroskedasticity.

\begin{table}[!ht]
\centering
\caption{Empirical size of equal-tail two-sided $z$-tests and conventional LM/Wald bootstrap tests}
\label{tab:symmetric-comparison}
\small
\begin{threeparttable}
\setlength{\tabcolsep}{4.0pt}
\renewcommand{\arraystretch}{1.05}
\begin{tabular}{llrrrrrr}
\toprule
&& \multicolumn{3}{c}{$\wtilde z_{LM}$}
& \multicolumn{3}{c}{$\what z_W$} \\
\cmidrule(lr){3-5}
\cmidrule(lr){6-8}
Method & Test
& {100} & {200} & {400}
& {100} & {200} & {400} \\
\midrule

\multicolumn{8}{l}{\textit{Panel A: Nonlinear GMM}} \\

\quad Dep.\ Rademacher
& Equal-tail $z$
& 6.12 & 6.68 & 5.50
& 9.28 & 8.08 & 6.18 \\
& LM/Wald
& 6.00 & 6.66 & 5.46
& 9.12 & 8.12 & 6.04 \\

\quad Dep.\ Mammen
& Equal-tail $z$
& 11.16 & 11.28 & 8.08
& 12.44 & 10.48 & 7.56 \\
& LM/Wald
& 7.34 & 7.30 & 5.64
& 11.12 & 9.38 & 6.74 \\

\quad Gaussian DWB
& Equal-tail $z$
& 6.68 & 7.26 & 5.66
& 10.06 & 8.76 & 6.54 \\
& LM/Wald
& 6.90 & 6.98 & 5.82
& 9.90 & 8.66 & 6.40 \\

\quad MBB
& Equal-tail $z$
& 10.18 & 10.04 & 7.20
& 9.76 & 8.90 & 6.60 \\
& LM/Wald
& 3.88 & 4.72 & 4.38
& 7.56 & 6.76 & 5.32 \\

\midrule

\multicolumn{8}{l}{\textit{Panel B: Linear regression}} \\

\quad Dep.\ Rademacher
& Equal-tail $z$
& 5.72 & 5.34 & 5.60
& 7.88 & 6.16 & 6.42 \\
& LM/Wald
& 5.48 & 4.78 & 5.32
& 7.70 & 5.98 & 6.22 \\

\quad Dep.\ Mammen
& Equal-tail $z$
& 11.76 & 10.50 & 10.32
& 11.12 & 9.00 & 8.70 \\
& LM/Wald
& 3.32 & 3.88 & 4.78
& 9.44 & 7.30 & 7.34 \\

\quad Gaussian DWB
& Equal-tail $z$
& 7.26 & 6.14 & 6.34
& 8.58 & 7.18 & 6.92 \\
& LM/Wald
& 6.92 & 5.86 & 6.12
& 8.50 & 6.80 & 6.74 \\

\quad MBB
& Equal-tail $z$
& 6.32 & 5.64 & 6.14
& 6.44 & 5.70 & 6.18 \\
& LM/Wald
& 5.88 & 5.50 & 5.78
& 6.18 & 5.50 & 5.86 \\

\bottomrule
\end{tabular}
\begin{tablenotes}[flushleft]
\footnotesize
\item \textit{Notes:} Entries are rejection frequencies in percent at the
nominal 5\% level. Rows labelled ``Equal-tail $z$'' use the 0.025 and 0.975
quantiles of the bootstrap distribution of the corresponding restricted or
unrestricted $z$-statistic. Rows labelled ``LM/Wald'' report the conventional
LM test in the $\wtilde z_{LM}$ columns and the conventional Wald test in the
$\what z_W$ columns. For a scalar restriction, these are equivalent to comparing
$|z|$ with the 0.95 quantile of $|z^*|$. The design and bootstrap implementation
are otherwise identical to those in
Tables~\ref{tab:nlgmm-main-chi2} and~\ref{tab:reg-main-chi2}.
\end{tablenotes}
\end{threeparttable}
\end{table}

The comparison illustrates why a rejection frequency close to the nominal level for the conventional LM or Wald test need not imply an accurate approximation to the signed distribution of the corresponding $z$-statistic.
For a scalar restriction, the conventional test folds the bootstrap distribution around zero, since
\[
    \Pro^{*}(|Z^{*}|>q)
    =
    \Pro^{*}(Z^{*}<-q)
    +
    \Pro^{*}(Z^{*}>q).
\]
It therefore combines discrepancies in the two tails rather than assessing them separately.
A sample-specific displacement or asymmetry of the bootstrap distribution can make one tail too easy to reject and the other too difficult to reject, while the distribution of $|Z^{*}|$ can nevertheless produce an overall rejection frequency near the nominal level.

This effect is particularly apparent for the restricted $z$-statistic $\wtilde z_{LM}$.
For nonlinear GMM at $T=100$, for example, the MBB rejection frequency changes from 10.18\% under the equal-tail $z$-test to 3.88\% under the LM test, while the corresponding Dep.\ Rademacher frequencies are 6.12\% and 6.00\%.
The regression experiment gives an even sharper contrast for Dep.\ Mammen, whose $T=100$ rejection frequency changes from 11.76\% to 3.32\%.
Thus, the apparently improved size control of some conventional LM and Wald tests reflects in part the loss of information about directional errors in the bootstrap approximation.
By contrast, the relatively small differences between the two testing rules for Dep.\ Rademacher indicate that its finite-sample performance is much less affected by this folding.

\section{Data and Construction for the Short-Rate Application}
\label{app:cir_data}

The short-rate series is the monthly 3-Month Treasury Bill Secondary Market Rate, Discount Basis (TB3MS), obtained from Federal Reserve Economic Data (FRED),
maintained by the Federal Reserve Bank of St.\ Louis. 
The underlying source is the Board of Governors of the Federal Reserve System, H.15 Selected Interest Rates. 
The series is measured in percent per annum, is not seasonally adjusted, and is reported as a monthly average of business-day observations. 
Our sample covers January 1985 through August 2026 and contains 500 monthly observations with no missing values. 

For each horizon $h\in\{6,9,12\}$ months, observations are formed as $(r_t,r_{t+h})$, giving usable sample sizes $T=494$, $491$, and $488$, respectively. 
The empirical moment contribution is
\[
    \bg_t(\mu,\kappa)
    =
    \bz_t
    \left[
        r_{t+h}
        -
        \mu
        -
        (r_t-\mu)\exp(-\kappa h/12)
    \right],
    \qquad
    \bz_t=(1,r_t/10,(r_t/10)^2)'.
\]
The short rate remains in percentage points in the conditional-mean equation; division by ten is used in the instruments and defines their relative weighting in the one-step criterion. Thus the one-step estimator is conditional on this instrument normalization. 

For every horizon, the common bandwidth is
\[
    b_T=\lceil T^{1/3}\rceil=8.
\]
The dependent Rademacher and dependent Mammen procedures use their corresponding matched-HAC kernels. 
Gaussian DWB uses Bartlett studentization. 
The moving-block bootstrap uses block length $L=b_T$ and its native Bartlett studentization. 
All bootstrap results use $B=9{,}999$ draws and recompute the HAC studentizer in each draw. 


\section{Additional Empirical Results}
\label{app:cir_additional}

The main text reports the one-step GMM results for the nine-month horizon.
To assess sensitivity to the choice of horizon without reporting the full set of specifications, we report the corresponding one-step results for $h=6$ and $h=12$ months.
These horizons bracket the main $h=9$ specification and illustrate how the evidence changes with the forecast horizon.
We then report a two-step GMM robustness check for the main $h=9$ specification. 

\subsection{Horizon sensitivity}

\begin{table}[!htb]
\centering
\caption{Equal-tail two-sided $z$-tests of a two-year mean-reversion half-life,
$h=6$ months}
\label{tab:cir-h6}
\small
\begin{threeparttable}
\begin{tabular}{lrrrr}
\toprule
Method & $\wtilde z_{LM}$ & $p$-value & $q_{0.025}$ & $q_{0.975}$ \\
\midrule
\multicolumn{5}{l}{\textit{Asymptotic}} \\
\quad ASY Bartlett
    & -2.68 & \multicolumn{1}{l}{$0.007^{**}$} & -1.96 & 1.96 \\
\quad ASY Rad.\ kernel
    & -2.53 & \multicolumn{1}{l}{$0.011^{**}$} & -1.96 & 1.96 \\
\quad ASY Mam.\ kernel
    & -2.56 & \multicolumn{1}{l}{$0.011^{**}$} & -1.96 & 1.96 \\
\multicolumn{5}{l}{\textit{Bootstrap}} \\
\quad Dep.\ Rademacher
    & -2.53 & \multicolumn{1}{l}{$0.031^{**}$} & -2.32 & 2.30 \\
\quad Dep.\ Mammen
    & -2.56 & \multicolumn{1}{l}{$0.020^{**}$} & -2.22 & 2.23 \\
\quad Gaussian DWB
    & -2.68 & \multicolumn{1}{l}{$0.019^{**}$} & -2.28 & 2.22 \\
\quad MBB
    & -2.68 & \multicolumn{1}{l}{$0.016^{**}$} & -2.21 & 2.46 \\
\bottomrule
\end{tabular}
\begin{tablenotes}[flushleft]
\footnotesize
\item \textit{Notes:}
The sample is January 1985--August 2026 ($T=494$).
The one-step GMM estimates are $\what\mu=2.54$ and $\what\kappa=0.16$, implying a mean-reversion half-life of 4.47 years.
All reported tests are equal-tail two-sided $z$-tests based on the restricted $z$-statistic $\wtilde z_{LM}$.
Bootstrap $p$-values and quantiles are based on 9,999 draws with recomputed HAC studentization.
Dependent two-point procedures use their matched kernels, while Gaussian DWB and MBB use Bartlett studentization.
$^{**}$ denotes rejection at the 5\% level. 
\end{tablenotes}
\end{threeparttable}
\end{table}

\begin{table}[!htb]
\centering
\caption{Equal-tail two-sided $z$-tests of a two-year mean-reversion half-life,
$h=12$ months}
\label{tab:cir-h12}
\small
\begin{threeparttable}
\begin{tabular}{lrrrr}
\toprule
Method & $\wtilde z_{LM}$ & $p$-value & $q_{0.025}$ & $q_{0.975}$ \\
\midrule
\multicolumn{5}{l}{\textit{Asymptotic}} \\
\quad ASY Bartlett
    & -1.80 & 0.072 & -1.96 & 1.96 \\
\quad ASY Rad.\ kernel
    & -1.65 & 0.100 & -1.96 & 1.96 \\
\quad ASY Mam.\ kernel
    & -1.67 & 0.096 & -1.96 & 1.96 \\
\multicolumn{5}{l}{\textit{Bootstrap}} \\
\quad Dep.\ Rademacher
    & -1.65 & 0.192 & -2.40 & 2.44 \\
\quad Dep.\ Mammen
    & -1.67 & 0.193 & -2.43 & 2.29 \\
\quad Gaussian DWB
    & -1.80 & 0.142 & -2.33 & 2.33 \\
\quad MBB
    & -1.80 & 0.157 & -2.46 & 2.35 \\
\bottomrule
\end{tabular}
\begin{tablenotes}[flushleft]
\footnotesize
\item \textit{Notes:}
The sample is January 1985--August 2026 ($T=488$).
The one-step GMM estimates are
$\what\mu=2.72$ and $\what\kappa=0.22$, implying a mean-reversion
half-life of 3.23 years.
All reported tests are equal-tail two-sided $z$-tests based on the
restricted $z$-statistic $\wtilde z_{LM}$.
Bootstrap $p$-values and quantiles are based on 9,999 draws with
recomputed HAC studentization.
Dependent two-point procedures use their matched kernels, while
Gaussian DWB and MBB use Bartlett studentization.
\end{tablenotes}
\end{threeparttable}
\end{table}

Tables~\ref{tab:cir-h6} and~\ref{tab:cir-h12} report the corresponding one-step GMM results for $h=6$ and $h=12$ months, respectively.
The horizon comparison shows a clear weakening of the evidence against the two-year half-life null as the horizon increases.
At $h=6$, all asymptotic and bootstrap procedures reject at the 5\% level, whereas at $h=12$ none rejects.
The main $h=9$ specification lies between these cases and is therefore informative about differences among the distributional approximations.

\subsection{Two-step GMM robustness}
\label{app:cir_twostep}

As a robustness check on the estimation weighting matrix, we re-estimate the main $h=9$ specification by two-step GMM.
Table~\ref{tab:cir-two-h9} reports the results, which preserve the main qualitative comparison between asymptotic and bootstrap inference.

\begin{table}[!htb]
\centering
\caption{CIR conditional-mean two-step GMM results, $h=9$ months}
\label{tab:cir-two-h9}
\small
\begin{threeparttable}
\begin{tabular}{lrrrrrrr}
\toprule
\multicolumn{8}{l}{\textit{Panel A: Two-step GMM estimates}} \\
Kernel
& $\widehat\mu$
& $\widehat\kappa$
& $SE(\widehat\kappa)$
& Half-life
& $J$
& $p_J$
& $T$ \\
\midrule
Bartlett
& 2.310 & 0.179 & 0.064 & 3.87 & 0.616 & 0.432 & 491 \\
Rad.\ kernel
& 2.264 & 0.178 & 0.068 & 3.89 & 0.561 & 0.454 & 491 \\
Mam.\ kernel
& 2.268 & 0.178 & 0.067 & 3.89 & 0.571 & 0.450 & 491 \\
\midrule
\multicolumn{8}{l}{
\textit{Panel B: Equal-tail two-sided $z$-test of
$H_0:\kappa=\log(2)/2$ (two-year half-life)}} \\
Method
& $\wtilde z_{LM}$
& $p$-value
& $c_{0.05}$
& $c_{0.95}$
& $c_{0.025}$
& $c_{0.975}$
& Reject 5\% \\
\midrule
\multicolumn{8}{l}{\textit{Asymptotic}} \\
\quad ASY Bartlett
& -2.407 & 0.0161 & -1.645 & 1.645 & -1.960 & 1.960 & Yes \\
\quad ASY Rad.\ kernel
& -2.267 & 0.0234 & -1.645 & 1.645 & -1.960 & 1.960 & Yes \\
\quad ASY Mam.\ kernel
& -2.290 & 0.0220 & -1.645 & 1.645 & -1.960 & 1.960 & Yes \\
\multicolumn{8}{l}{\textit{Bootstrap}} \\
\quad Dep.\ Rademacher
& -2.267 & 0.0608 & -1.999 & 2.016 & -2.393 & 2.374 & No \\
\quad Dep.\ Mammen
& -2.290 & 0.0568 & -2.027 & 1.911 & -2.355 & 2.228 & No \\
\quad Gaussian DWB
& -2.407 & 0.0344 & -1.962 & 1.941 & -2.241 & 2.271 & Yes \\
\quad MBB
& -2.407 & 0.0534 & -2.055 & 1.951 & -2.444 & 2.306 & No \\
\bottomrule
\end{tabular}
\begin{tablenotes}[flushleft]
\footnotesize
\item \textit{Notes:}
The sample is January 1985--August 2026.
Observations overlap.
Instruments are $(1,r_t/10,(r_t/10)^2)\prime$.
The common HAC bandwidth and MBB block length are 8.
Bootstrap inference uses equal-tail two-sided $z$-tests with 9,999 draws
and recomputed HAC studentization.
Dependent two-point procedures use matched kernels; Gaussian DWB and MBB
use Bartlett.
Standard errors use the sandwich with the estimation weight.
The $J$ test has one degree of freedom.
\end{tablenotes}
\end{threeparttable}
\end{table}

The two-step estimates imply a mean-reversion half-life of about 3.9 years,
close to the one-step estimate for the same horizon.
As in the one-step results, the asymptotic procedures reject the two-year
half-life null at the 5\% level.
Among the bootstrap procedures, Gaussian DWB rejects, whereas Dep.\
Rademacher, Dep.\ Mammen, and MBB do not.
Thus, the main contrast between asymptotic and bootstrap inference at the
nine-month horizon is robust to the use of two-step GMM.

\end{document}